\newtheorem{lemma}{Lemma}
\newtheorem{proposition}{Proposition}
\newcommand{\highlightcolor}{black}
\begin{document}

\title{
    {Power Measurement Enabled Channel Autocorrelation Matrix Estimation for IRS-Assisted Wireless Communication}
    \vspace{-5pt}
}

\author{Ge Yan, 
        Lipeng Zhu,~\IEEEmembership{Member,~IEEE,}
        Rui Zhang,~\IEEEmembership{Fellow,~IEEE}
        \vspace{-25pt}

\thanks{Part of this work was presented at the IEEE Global Communications
Conference Workshops 2023, Kuala Lumpur, Malaysia~\cite{ref:my-conf-ver}.}
\thanks{G. Yan is with the NUS Graduate School, National University of Singapore, Singapore 119077, and also with the Department of Electrical and Computer Engineering, National University of Singapore, Singapore 117583 (e-mail: geyan@u.nus.edu). }
\thanks{L. Zhu is with the Department of Electrical and Computer Engineering, National University of Singapore, Singapore 117583 (zhulp@nus.edu.sg). }
\thanks{R. Zhang is with The Chinese University of Hong Kong, Shenzhen, and Shenzhen Research Institute of Big Data, Shenzhen, China 518172 (e-mail: rzhang@cuhk.edu.cn). 
He is also with the Department of Electrical and Computer Engineering, National University of Singapore, Singapore 117583 (e-mail: elezhang@nus.edu.sg). }
}


\maketitle

\IEEEpeerreviewmaketitle

\begin{abstract}
    By reconfiguring wireless channels via passive signal reflection, intelligent reflecting surface (IRS) can bring significant performance enhancement for wireless communication systems. 
    However, such performance improvement generally relies on the knowledge of channel state information (CSI) for IRS-involved links. 
    Prior works on IRS CSI acquisition mainly estimate IRS-cascaded channels based on the extra pilot signals received at the users/base station (BS) with time-varying IRS reflections, which, however, needs to modify the existing channel training/estimation protocols of wireless systems. 
    To address this issue, we propose in this paper a new channel estimation scheme for IRS-assisted communication systems based on the received signal power measured at the user terminal, which is practically attainable without the need of changing the current protocol. 
    Due to the lack of signal phase information in measured power, the autocorrelation matrix of the BS-IRS-user cascaded channel is estimated by solving an equivalent rank-minimization problem. 
    To this end, a low-rank-approaching (LRA) algorithm is proposed by employing the fractional programming and alternating optimization techniques. 
    To reduce computational complexity, an approximate LRA (ALRA) algorithm is also developed. 
    Furthermore, these two algorithms are extended to be robust against the receiver noise and quantization error in power measurement. 
    Simulation results are provided to verify the effectiveness of the proposed channel estimation algorithms as well as the IRS passive reflection design based on the estimated channel autocorrelation matrix. 
\end{abstract}
\vspace{-4pt}
\begin{IEEEkeywords}
    Intelligent reflecting surface (IRS), channel estimation, channel autocorrelation matrix, passive reflection design. 
\end{IEEEkeywords}

\vspace{-16pt}
\section{INTRODUCTION}\label{sec:introduction}
    \IEEEPARstart{I}{n} recent years, intelligent reflecting surface (IRS) has received great attention due to its appealing capability of reconfiguring wireless channels. 
    By applying tunable phase shifts to incident wireless signals, IRS can effectively control their propagation channels and thereby significantly enhance the wireless communication performance, such as spectral/energy efficiency and transmission reliability~\cite{ref:PIEEE-IRS6G, ref:IRSTutorial, ref:my-discrete-bf, ref:ma-bf-3d-cover}. 
    Given such benefits as well as its high deployment flexibility, low hardware cost, and low power consumption, IRS has been identified as a key enabling technology for future wireless networks such as 6G~\cite{ref:PIEEE-IRS6G, ref:IRSTutorial}. 
    However, to reap the high performance gain by IRS, it is essential to acquire the channel state information (CSI) for the IRS channels with its assisting base station (BS) and users, which is practically difficult due to the following reasons. 
    On one hand, the passive IRS is not equipped with wireless transceivers at its reflecting elements, making it impossible to estimate the BS-IRS and IRS-user channels separately. 
    Instead, only the cascaded BS-IRS-user/user-IRS-BS channel can be estimated at the user/BS~\cite{ref:IRS-Survey}. 
    On the other hand, to compensate for the significant product-distance path loss of the IRS-cascaded link, the number of IRS reflecting elements needs to be sufficiently large in practice, e.g., several tens or even hundreds~\cite{ref:Physics-based-modeling-IRS,ref:power-scaling-law-IRS}. 
    This results in high-dimensional IRS channel vectors/matrices that incur prohibitive overhead and computational complexity to estimate. 

    To tackle the above challenges, extensive studies have been devoted to the cascaded channel estimation for IRS-assisted wireless communication systems, while these works mainly adopted conventional pilot-based channel estimation approaches by sending additional pilot signals with concurrent time-varying IRS reflections~\cite{ref:low-comp-ON-OFF, ref:cascaded-ON-OFF, ref:optimal-CE-MinVar, ref:IRS-OFDM-CE-DFT, ref:ce-bf-discrete, ref:IRS-assisted-OFDMA, ref:IRS-meet-OFDM, ref:CS-IRS-mmW, ref:low-rank-CE-IRS-mmW-OFDM, ref:CS-IRS-AtomicNormMin, ref:CE-IRS-double-sparsity, ref:deep-residual-ce}. 
    For example, by switching on only one reflecting element at one time, the IRS-cascaded channel for each element was estimated based on the received pilot signals at the user in~\cite{ref:low-comp-ON-OFF, ref:cascaded-ON-OFF}. 
    To exploit the array gain of IRS for channel estimation, reflection codebooks for channel estimation were designed in~\cite{ref:IRS-OFDM-CE-DFT, ref:ce-bf-discrete, ref:IRS-assisted-OFDMA} with all reflecting elements switched on. 
    In particular, the discrete Fourier transform (DFT)-based IRS reflection codebook was shown to yield the minimum-mean-square-error (MMSE) estimation of the cascaded channel in IRS-assisted multiple-input single-output (MISO) systems~\cite{ref:optimal-CE-MinVar} and IRS-enhanced orthogonal frequency division multiplexing (OFDM) systems~\cite{ref:IRS-OFDM-CE-DFT}. 
    In~\cite{ref:IRS-assisted-OFDMA}, the DFT-based codebook was employed for reflection training in an IRS-assisted multi-user orthogonal frequency division multiple access (OFDMA) system. 
    Furthermore, under the practical discrete-phase-shift constraint on IRS reflection coefficients, a Hadamard matrix-based IRS reflection pattern was proposed in~\cite{ref:ce-bf-discrete}, while more sophisticated codebooks were designed in~\cite{ref:Physics-based-modeling-IRS, ref:quadratic-phase-codebook, ref:optimized-codebook-IRS} to achieve higher training efficiency. 
    Besides, the authors in~\cite{ref:IRS-OFDM-CE-DFT, ref:ce-bf-discrete, ref:IRS-assisted-OFDMA, ref:IRS-meet-OFDM} proposed to group adjacent IRS reflecting elements into sub-surfaces so that only the effective cascaded channel for each sub-surface needs to be estimated, thus reducing the number of training pilots. 
    In addition, various compressed sensing algorithms were developed for IRS channel estimation by utilizing the sparsity of the channel paths in the angular domain~\cite{ref:CS-IRS-mmW, ref:low-rank-CE-IRS-mmW-OFDM, ref:CS-IRS-AtomicNormMin, ref:CE-IRS-double-sparsity}. 
    Moreover, the deep residual network was employed in~\cite{ref:deep-residual-ce} to refine the least-sqaure (LS) estimation of IRS channels. 
    
    In the aforementioned works, IRS-involved CSI is estimated based on the received complex-valued pilot signals at the users/BS. 
    However, in the protocol of existing wireless communication systems, such as 4G/5G~\cite{ref:3gpp:38.211}, the pilot signals are dedicated to estimating the BS-user direct channels only. 
    As such, substantially additional pilots are required to estimate the new IRS-cascaded CSI, which thus requires significant modifications of the existing channel estimation/training protocols. 
    To address this issue, IRS reflection designs based on the received signal power measurement at the users have been proposed~\cite{ref:CSM, ref:RFocus}, which do not require additional pilot signals for explicit IRS CSI estimation. 
    As user power measurement is commonly adopted and easy to obtain in existing wireless systems, such as reference signal received power (RSRP)~\cite{ref:3gpp:36.214}, this approach can be practically implemented without any change of the current protocols. 
    For example, for the conditional sample mean (CSM) method proposed in~\cite{ref:CSM}, the received signal power at the user was modelled as a random variable and a large number of IRS reflection patterns are randomly generated for user power measurement. 
    After the user measures the received signal power values for all IRS reflection patterns, each IRS reflecting element sets the reflection coefficient as the one that achieves the maximum received power expectation conditioned on its value. 
    The majority voting algorithm proposed in~\cite{ref:RFocus} employed a similar idea but the received signal strength indicator (RSSI) was used instead of RSRP. 
    Both methods in~\cite{ref:CSM, ref:RFocus} directly design the IRS reflection coefficients based on the power measurement without estimating the channel. 
    However, to obtain the optimal IRS reflection performance, an excessively large number of IRS training reflections/power measurement are generally required (in the quadratic order of the number of IRS reflecting elements). 
    Therefore, these methods may still be time-consuming for practical implementation. 

    It is worth noting that the methods in~\cite{ref:RFocus, ref:CSM} did not fully exploit the power measurement to obtain explicit CSI of IRS-cascaded channels, thus resulting in their high overhead and low efficiency. 
    To improve the existing IRS channel estimation/reflection designs based on user power measurement, this paper proposes a new channel autocorrelation matrix estimation scheme. 
    Specifically, since there is no signal phase information in received power measurement, the IRS-cascaded channel vector cannot be completely recovered, while its autocorrelation matrix can be (uniquely) estimated. 
    By equivalently transforming the channel autocorrelation estimation problem into a rank-minimization problem, a low-rank-approaching (LRA) algorithm is proposed to recover the channel autocorrelation matrix given the received signal power measurement with time-varying random IRS reflections. 
    In particular, the LRA algorithm converts the rank-minimization problem into a fractional programming problem and alternating optimization is applied to obtain its solution. 
    To reduce the computational complexity of the LRA algorithm, an approximate LRA (ALRA) algorithm is also proposed, where the fractional programming is approximated by low-complexity quadratic programming. 
    Furthermore, the robust extensions of both LRA and ALRA algorithms are designed by considering the effects of receiver noise and quantization error in power measurement. 
    The convergence and estimation accuracy of the proposed algorithms are verified via simulations. 
    Besides, it is validated that with the IRS reflection designs based on the estimated channel autocorrelation matrix, significantly higher passive reflection gains can be achieved compared to other benchmark schemes, which demonstrates the effectiveness of the proposed channel estimation methods for IRS-assisted wireless communication systems based on user power measurement. 

    \textit{Notations:} 
    Boldface letters refer to vectors (lower case) or matrices (upper case). For square matrix $\boldsymbol{A}$, $\text{tr}(\boldsymbol{A})$ denotes its trace and $\boldsymbol{A}^{-1}$ denotes its inverse matrix. 
    For matrix $\boldsymbol{B}$, let $\boldsymbol{B}^{T}$, $\boldsymbol{B}^{H}$, $\text{rank}(\boldsymbol{B})$, $\boldsymbol{B}^{\dagger}$, $\|\boldsymbol{B}\|_F$, and $\text{vec}(\boldsymbol{B})$ denote the transpose, conjugate transpose, rank, pseudo inverse, Frobenius norm, and vectorization of $\boldsymbol{B}$, respectively. 
    $\boldsymbol{I}_N$ denotes an $N\times N$ identity matrix. 
    $\boldsymbol{0}_{N\times M}$ denotes an $N\times M$ all-zero matrix. 
    $\boldsymbol{x}^T$, $\boldsymbol{x}^H$, and $\|\boldsymbol{x}\|_2$ denote the transpose, conjugate transpose, and Euclidean norm of vector $\boldsymbol{x}$, respectively. 
    Vector $\boldsymbol{1}_{K}$ denotes an all-one vector of size $K\times 1$. 
    For vector $\boldsymbol{x}$, $\text{diag}(\boldsymbol{x})$ denotes the diagonal matrix whose main diagonal elements are extracted from $\boldsymbol{x}$. 
    For matrix $\boldsymbol{A}$, $\text{diag}(\boldsymbol{A})$ denotes the vector whose elements are extracted from the main diagonal elements of $\boldsymbol{A}$. 
    For complex number $c$, let $\text{Re}(c)$, $\text{Im}(c)$, and $|c|$ denote its real part, imaginary part, and magnitude, respectively. 
    $\mathbb{C}^{a\times b}$ and $\mathbb{R}^{a\times b}$ denote the space of $a\times b$-dimensional complex and real matrices, respectively. 
    $\mathbb{E}[\cdot]$ denotes the statistical expectation. 
    Symbol $j$ denotes the imaginary unit $\sqrt{-1}$. 

\vspace{-6pt}
\section{System Model}\label{sec:system-model-and-protocol}
        \begin{figure}[t]
            \begin{center}
                \includegraphics[scale = 0.18]{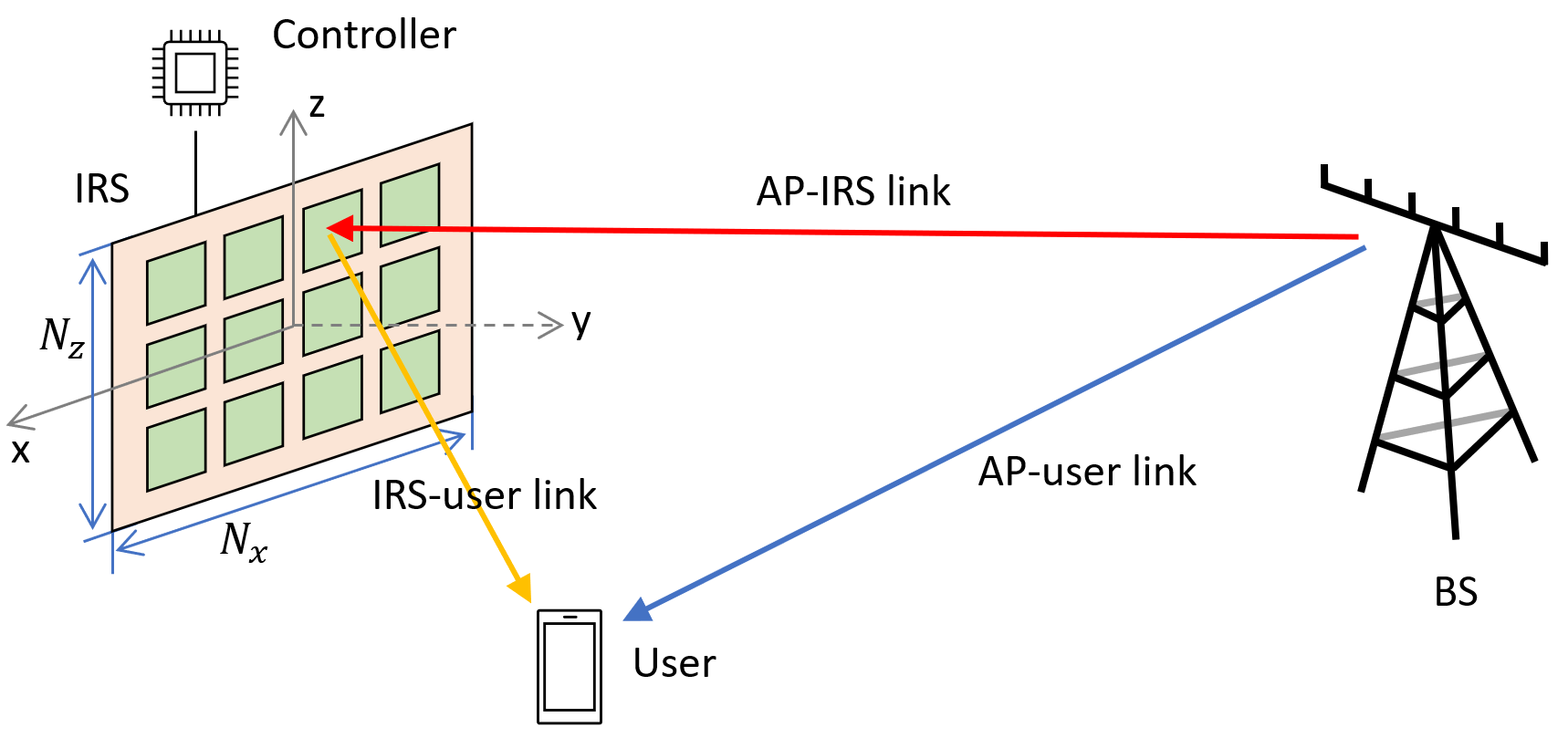}
                \caption{An IRS-aided wireless communication system. }
                \label{Fig:system}
            \end{center}
            \vspace{-6pt}
        \end{figure}
        
        As shown in Fig.~\ref{Fig:system}, we consider an IRS-aided downlink communication system with a multi-antenna BS serving a single-antenna user\footnote{The solutions proposed in this paper are also applicable to multiuser systems, by performing received signal power measurement at different users simultaneously and employing the proposed algorithms for each user's channel estimation. }, where an IRS is deployed near the user to establish a reflected link to assist in their communication. 
        To focus on the IRS passive reflection design, we assume that the BS fixes its precoding vector to guarantee the coverage of the user's residing area, and thus it can be considered having a single directional antenna equivalently. 
        The IRS is composed of $N_{irs} = N_{x}N_{z}$ reflecting elements, where $N_{x}$ and $N_{z}$ are the number of reflecting elements in the horizontal and vertical dimensions, respectively. 
        Each element of the IRS introduces a phase shift to the reflected signal. 
        Let $u_n$ denote the reflection coefficient of the $n$-th element, $n = 1, 2, \ldots, N_{irs}$, while $\boldsymbol{u} = [u_1, \ldots, u_{N_{irs}}]^T\in\mathbb{C}^{N_{irs}\times 1}$ and $\boldsymbol{\Theta} = \text{diag}(\boldsymbol{u}) \in\mathbb{C}^{N_{irs}\times N_{irs}}$ denote the IRS reflection coefficient vector and matrix, respectively. 
        Due to the unit amplitude constraint on the reflecting coefficients, we set $|u_n| = 1$, $\forall n$. 
        Furthermore, denote the number of bits for controlling the discrete phase shift of each element as $b$. 
        Then, the reflection coefficient $u_n$ should be selected from a discrete set $\Phi_b = \{e^{j\Delta\theta}, \ldots, e^{j2^b\Delta\theta}\}$, with $\Delta\theta = 2\pi/2^b$. 
        Denoting $\Phi_b^M = \{\boldsymbol{x}\in\mathbb{C}^{M\times 1} | x_n\in\Phi_b, n = 1, \ldots, M\}$ as the set of $M$-dimensional vectors whose elements are selected from $\Phi_b$, we thus have $\boldsymbol{u}\in\Phi_b^{N_{irs}}$. 
        
        The baseband equivalent channels of the BS-IRS link, BS-user link, and IRS-user link are denoted as $\boldsymbol{g}\in\mathbb{C}^{N_{irs}\times 1}$, $h_d\in\mathbb{C}$, and $\boldsymbol{h}_r\in\mathbb{C}^{N_{irs}\times 1}$, respectively. 
        The received signal corrupted by noise at the user is given by
        \begin{equation}\label{def:signal-model}
            y = \left(\boldsymbol{h}_{r}^{H}\boldsymbol{\Theta}\boldsymbol{g} + h_{d}^*\right)x + z, 
        \end{equation}
        where $x$ is the transmitted signal with zero mean and power $p_0$ and $z\sim\mathcal{CN}(0, \sigma^2)$ is the independent and identically distributed (i.i.d.) additive complex Guassian noise at the user receiver with mean zero and power $\sigma^2$. 
        Due to $\boldsymbol{h}_{r}^{H}\boldsymbol{\Theta} = \boldsymbol{h}_{r}^{H}\text{diag}(\boldsymbol{u}) = \boldsymbol{u}^{H}\text{diag}(\boldsymbol{h}_{r}^{H})$, we can rewrite~\eqref{def:signal-model} as 
        \begin{equation}
            y = \left(\boldsymbol{u}^{H}\text{diag}(\boldsymbol{h}_{r}^{H})\boldsymbol{g} + h_{d}^*\right)x + z. 
        \end{equation}
        Let $N = N_{irs} + 1$ and define $\bar{\boldsymbol{h}} = [\boldsymbol{g}^H\text{diag}(\boldsymbol{h}_r), h_d]^H\in\mathbb{C}^{N\times 1}$ as the equivalent channel and $\boldsymbol{v} = [\boldsymbol{u}^T, 1]^T\in\Phi_b^{N}$ as the equivalent IRS reflection vector. 
        Then, the received noisy signal can be simplified as $y = (\boldsymbol{v}^H\bar{\boldsymbol{h}})x + z$ and the power of the signal (ignoring the noise for the time being) is given by
        \begin{equation}\label{def:received-power}
            p = \mathbb{E}\left[\left|\left(\boldsymbol{h}_{r}^{H}\boldsymbol{\Theta}\boldsymbol{g} + h_{d}^*\right)x\right|^2\right] = p_0\left|
                    \boldsymbol{v}^H\bar{\boldsymbol{h}}
                \right|^2
            = p_0\text{tr}(\bar{\boldsymbol{H}}\boldsymbol{V}), 
        \end{equation}
        where $\bar{\boldsymbol{H}} = \bar{\boldsymbol{h}}\bar{\boldsymbol{h}}^H$ and $\boldsymbol{V} = \boldsymbol{v}\boldsymbol{v}^H$ are the autocorrelation matrices of the equivalent channel $\bar{\boldsymbol{h}}$ and the equivalent IRS reflection vector $\boldsymbol{v}$, respectively. 

\vspace{-6pt}
\section{Channel Autocorrelation Matrix Estimation}\label{subsec:estimation-framework}
    We assume that the user is quasi-static and its channels (or predominant deterministic channel components) with the BS and IRS do not change within a sufficiently long channel coherence time, during which the IRS first changes its reflection coefficients $T_{p}$ times in total. 
    In the meanwhile, the user measures the corresponding received signal power values and feed them back to a central processing unit (e.g., BS or IRS controller) that can design accordingly and then set the IRS reflection coefficients for data transmission. 
    Specifically, for the $t$-th power measurement, $t = 1, \ldots, T_{p}$, a random IRS reflection vector $\boldsymbol{u}_t$ is applied and $p_t$ denotes the received signal power at the user. 
    Define $\boldsymbol{v}_t = [\boldsymbol{u}_t, 1]^T$ and $\boldsymbol{V}_t = \boldsymbol{v}_t\boldsymbol{v}_t^H$, and then we have $p_t = p_0\text{tr}(\bar{\boldsymbol{H}}\boldsymbol{V}_t)$. 
    With all the received signals' power obtained, the channel autocorrelation matrix is estimated based on $p_t$ and $\boldsymbol{V}_t$, $t = 1, \ldots, T_{p}$. 

    Given power measurement $\boldsymbol{p} = [p_1, \ldots, p_{T_{p}}]^T\in\mathbb{R}^{T_{p}\times 1}$, the estimation problem can be formulated as finding a semidefinite rank-one matrix $\boldsymbol{H}$ that satisfies $p_0\text{tr}(\boldsymbol{H}\boldsymbol{V}_t) = p_t$, $t = 1, \ldots, T_{p}$, i.e., 
    \begin{subequations}\label{prob:cov-est-find-origin}
        \begin{align}
            & \text{find} \ {\boldsymbol{H}}\in\mathbb{S}_{+}^{N} \tag{\ref{prob:cov-est-find-origin}} \\
            & ~ \mathrm{s.t.} \ p_0\text{tr}(\boldsymbol{H}\boldsymbol{V}_t) = p_t, t = 1, \ldots, T_{p}, \label{prob:cov-est-find-power} \\
            & ~~~~~~ \text{rank}(\boldsymbol{H}) = 1, \label{prob:cov-est-find-rank-one}
        \end{align}
    \end{subequations}
    where $\mathbb{S}_{+}^{N}$ denotes the set of all positive semidefinite hermitian matrices of dimension $N\times N$. 
    The form of problem~\eqref{prob:cov-est-find-origin} is the same as the PhaseLift problem studied in~\cite{ref:PhaseLift, ref:stable-phaseless-recovery}, where the trace-minimization relaxation was applied to find an approximate solution. 
    However, the performance of the approximate solution relies on the assumption that vectors $\{\boldsymbol{v}_t, t = 1, \ldots, T_{p}\}$ are i.i.d. Guassian random vectors, which is not applicable to the considered IRS reflection vector due to its unit-amplitude elements with discrete phase shifts. 
    {\color{\highlightcolor}Thus, we propose to design more efficient methods customized to the practical constraint on $\{\boldsymbol{v}_t, t = 1, \ldots, T_{p}\}$ to achieve better performance but with lower computational complexity. }

    Due to the discrete phase shift constraint on IRS reflecting elements, the solution for problem~\eqref{prob:cov-est-find-origin} may not be unique. 
    Specifically, constraint~\eqref{prob:cov-est-find-power} forms a system of linear equations {\color{\highlightcolor}and the $N\times N$ hermitian matrices form a linear space of dimension $N^2$~\cite{ref:OLS}. 
    Thus, the hermitian autocorrelation matrix $\boldsymbol{H}$ can be uniquely determined if the maximum number of linearly independent matrices in $\{\boldsymbol{V}_t, t = 1, \ldots, T_{p}\}$, denoted as $D_V$ ($D_V\le T_{p}$), is no less than $N^2$. }
    However, the unit-amplitude and discrete phase of the entries in $\boldsymbol{V}_t$ result in $\{\boldsymbol{V}_t, t = 1, \ldots, T_{p}\}$ always being confined in a subspace of $\mathbb{C}^{N\times N}$ with its dimension smaller than $N^2$, as shown in the following lemma. 
    \begin{lemma}\label{lemma:dim-defficiency}
        {\color{\highlightcolor}For any $T_{p}$, $D_V\le \mathcal{D}_{N}^{(b)} < N^2$ always holds, where $\mathcal{D}_{N}^{(b)} = (N^2 - N)/2 + 1$ for $b = 1$ and $\mathcal{D}_{N}^{(b)} = N^2 - N + 1$ for $b \ge 2$. }
    \end{lemma}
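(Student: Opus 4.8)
The plan is to exhibit a single low-dimensional real subspace of the Hermitian matrices that contains \emph{every} $\boldsymbol{V}_t$, irrespective of $t$ and of $T_p$. Since $D_V$ is by definition the dimension of $\text{span}_{\mathbb{R}}\{\boldsymbol{V}_t, t = 1, \ldots, T_p\}$ inside the $N^2$-dimensional real space of Hermitian matrices, it cannot exceed the dimension of any fixed subspace enclosing all the $\boldsymbol{V}_t$. The key structural fact driving the whole argument is that, because each entry of $\boldsymbol{v}_t$ has unit modulus, the $i$-th diagonal entry of $\boldsymbol{V}_t = \boldsymbol{v}_t\boldsymbol{v}_t^H$ equals $|v_{t,i}|^2 = 1$. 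Hence every $\boldsymbol{V}_t$ has the identical diagonal $\boldsymbol{I}_N$, and writing $\boldsymbol{V}_t = \boldsymbol{I}_N + \boldsymbol{W}_t$, the residual $\boldsymbol{W}_t$ is a zero-diagonal Hermitian matrix.

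Next I would invoke the decomposition of the real $N^2$-dimensional space of $N\times N$ Hermitian matrices into its diagonal part (dimension $N$) and its zero-diagonal part $\mathcal{O}$ (dimension $N(N-1)$). The observation above shows $\boldsymbol{V}_t \in \mathbb{R}\boldsymbol{I}_N \oplus \mathcal{O}$ for all $t$, and this enclosing subspace has real dimension $1 + N(N-1) = N^2 - N + 1$ and does not depend on $T_p$. Therefore $D_V \le N^2 - N + 1$, which is exactly $\mathcal{D}_N^{(b)}$ for $b \ge 2$. Intuitively, fixing the whole diagonal at $\boldsymbol{I}_N$ discards $N-1$ of the $N$ diagonal degrees of freedom while retaining the single direction $\boldsymbol{I}_N$, which is the source of the ``$+1$''.

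For the sharper bound at $b = 1$, I would use that $\Phi_1 = \{-1, +1\}$, so every $\boldsymbol{v}_t$ is a real $\pm 1$ vector and $\boldsymbol{V}_t = \boldsymbol{v}_t\boldsymbol{v}_t^T$ is real symmetric. The residual $\boldsymbol{W}_t$ then lies in the smaller real-symmetric zero-diagonal subspace $\mathcal{O}_{\mathbb{R}}\subset\mathcal{O}$, of dimension $\binom{N}{2} = (N^2-N)/2$. The enclosing subspace becomes $\mathbb{R}\boldsymbol{I}_N\oplus\mathcal{O}_{\mathbb{R}}$, of dimension $(N^2-N)/2 + 1 = \mathcal{D}_N^{(1)}$, giving $D_V \le (N^2-N)/2 + 1$. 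Finally, the strict inequality $\mathcal{D}_N^{(b)} < N^2$ reduces to elementary bounds in $N$: $N^2 - N + 1 < N^2$ holds iff $N > 1$, and $(N^2-N)/2 + 1 < N^2$ holds iff $(N-1)(N+2) > 0$, both valid since $N = N_{irs} + 1 \ge 2$.

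I expect the main obstacle to be careful bookkeeping rather than any deep difficulty. One must be explicit that independence here is over $\mathbb{R}$ (the Hermitian matrices are not closed under multiplication by $j$, so $\mathbb{C}$-linear span is not the right notion for determining the Hermitian $\boldsymbol{H}$ from the real equations $p_0\,\text{tr}(\boldsymbol{H}\boldsymbol{V}_t) = p_t$), must correctly attribute the ``$+1$'' to the surviving direction $\boldsymbol{I}_N$, and must justify that the $b = 1$ case genuinely halves the off-diagonal dimension because all phases, and hence all entries of $\boldsymbol{V}_t$, are real. Everything else is a direct dimension count of a fixed enclosing subspace that is manifestly independent of $T_p$.
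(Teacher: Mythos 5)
Your proof is correct and follows essentially the same route as the paper: both arguments rest on the observation that unit-modulus reflection coefficients force $\text{diag}(\boldsymbol{V}_t) = \boldsymbol{1}_N$, so every $\boldsymbol{V}_t$ lies in the fixed real subspace $\mathbb{R}\boldsymbol{I}_N \oplus \{\text{zero-diagonal Hermitian matrices}\}$ (restricted to real symmetric matrices when $b=1$), whose dimension is $\mathcal{D}_N^{(b)}$ — the paper expresses this same fact as orthogonality of each $\boldsymbol{V}_t$ to the traceless diagonal basis matrices $\boldsymbol{B}_2,\ldots,\boldsymbol{B}_N$ (and to the imaginary off-diagonal basis matrices $\boldsymbol{E}_{nl}^{(I)}$ when $b=1$). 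The only difference is that the paper's appendix additionally proves tightness, $\dim(\mathcal{S}_X^{b}(N)) \ge \mathcal{D}_N^{(b)}$, by an induction construction; that reverse inequality is what Propositions~1 and~2 later rely on, but it is not needed for the lemma's claim $D_V \le \mathcal{D}_N^{(b)} < N^2$, so your omission of it is harmless here.
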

    \begin{proof}[Proof\textup{:}\nopunct]\label{proof:lemma-dim-deff}
        {\color{\highlightcolor}See Appendix~\ref{appendix:lemma-proof-dim-deff}. }
    \end{proof}
    \vspace{-3pt}
    {\color{\highlightcolor}According to Lemma~\ref{lemma:dim-defficiency}, constraint~\eqref{prob:cov-est-find-power} forms an underdetermined system of equations due to the discrete phase shifts of IRS reflection vectors. 
    As a result, the uniqueness of the solution for problem~\eqref{prob:cov-est-find-origin} depends on the value of $b$, as illustrated in the following proposition. }
    \begin{proposition}\label{prop:cov-est-existence-uniqueness}
        {\color{\highlightcolor}For $b = 1$, problem~\eqref{prob:cov-est-find-origin} has only two solutions, i.e., $\bar{\boldsymbol{H}}$ and its conjugate matrix $\bar{\boldsymbol{H}}^*$, if $N$ and $D_V$ are sufficiently large (i.e., $N\ge 6$, $D_V = \mathcal{D}_{N}^{(1)}$). 
        For $b\ge 2$, problem~\eqref{prob:cov-est-find-origin} has one unique solution $\bar{\boldsymbol{H}}$ if $N$ and $D_V$ are sufficiently large (i.e., $N\ge 3$, $D_V = \mathcal{D}_{N}^{(b)}$). }
    \end{proposition}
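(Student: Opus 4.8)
The plan is to exploit Lemma~\ref{lemma:dim-defficiency} to reduce the feasibility problem to analyzing which Hermitian perturbations of $\bar{\boldsymbol{H}}$ remain positive semidefinite and rank-one. Since each $\boldsymbol{V}_t = \boldsymbol{v}_t\boldsymbol{v}_t^H$ has unit-modulus entries, $\mathrm{tr}(\boldsymbol{H}\boldsymbol{V}_t) = \boldsymbol{v}_t^H\boldsymbol{H}\boldsymbol{v}_t$ couples the diagonal of $\boldsymbol{H}$ only through $\mathrm{tr}(\boldsymbol{H})$; moreover, for $b=1$ the vectors $\boldsymbol{v}_t$ are real, so the quadratic form sees only $\mathrm{Re}(\boldsymbol{H})$. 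When $D_V = \mathcal{D}_N^{(b)}$, the linear constraints~\eqref{prob:cov-est-find-power} therefore pin down exactly: (i) for $b\ge 2$, all off-diagonal entries of $\boldsymbol{H}$ together with $\mathrm{tr}(\boldsymbol{H})$; (ii) for $b=1$, the real parts $\mathrm{Re}(H_{mn})$ of all off-diagonal entries together with $\mathrm{tr}(\boldsymbol{H})$. Equivalently, any feasible $\boldsymbol{H}$ equals $\bar{\boldsymbol{H}}$ plus an element of the null space $\mathcal{N}$ underlying Lemma~\ref{lemma:dim-defficiency} (traceless real-diagonal matrices for $b\ge 2$, and additionally purely imaginary Hermitian matrices for $b=1$). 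It then remains to intersect this affine set with $\mathbb{S}_+^N$ and the rank-one constraint~\eqref{prob:cov-est-find-rank-one}.

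For $b\ge 2$, write a feasible rank-one solution as $\boldsymbol{H} = \boldsymbol{w}\boldsymbol{w}^H$. Matching the fixed off-diagonals gives $w_m w_n^* = \bar{h}_m\bar{h}_n^*$ for all $m\ne n$. Assuming the generic case of nonvanishing channel entries and using a third index $k$ (available once $N\ge 3$), the ratios yield $w_n/w_k = \bar{h}_n/\bar{h}_k$ for every pair, hence $\boldsymbol{w} = \lambda\bar{\boldsymbol{h}}$ for a common $\lambda\in\mathbb{C}$; matching any single off-diagonal magnitude forces $|\lambda| = 1$, so $\boldsymbol{H} = \bar{\boldsymbol{H}}$ and the solution is unique. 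The threshold $N\ge 3$ is exactly what makes this chaining argument available: for $N = 2$ only one off-diagonal entry and the trace are fixed, and swapping the two diagonal entries produces a second rank-one positive semidefinite solution.

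For $b=1$ the situation is genuinely different because only $\mathrm{Re}(\boldsymbol{H})$ is observed. Writing $\boldsymbol{w} = \boldsymbol{a} + j\boldsymbol{c}$ with real $\boldsymbol{a},\boldsymbol{c}$, one has $\mathrm{Re}(\boldsymbol{w}\boldsymbol{w}^H) = \boldsymbol{a}\boldsymbol{a}^T + \boldsymbol{c}\boldsymbol{c}^T$, i.e., the observed real part is the Gram matrix of the planar points $\boldsymbol{p}_n = (a_n, c_n)^T\in\mathbb{R}^2$. The data therefore fix all pairwise inner products $\langle\boldsymbol{p}_m,\boldsymbol{p}_n\rangle$, $m\ne n$, and $\sum_n\|\boldsymbol{p}_n\|_2^2$; combined with the rank-at-most-two structure (vanishing of every $3\times3$ minor of $\mathrm{Re}(\boldsymbol{H})$), these determine the full Gram matrix and hence the configuration $\{\boldsymbol{p}_n\}$ up to a global orthogonal transformation in $O(2)$. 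A rotation of the plane corresponds to a global phase on $\boldsymbol{w}$ and leaves $\boldsymbol{H}$ unchanged, whereas a reflection corresponds to $\boldsymbol{c}\mapsto -\boldsymbol{c}$, i.e., $\boldsymbol{w}\mapsto\boldsymbol{w}^*$ and $\boldsymbol{H}\mapsto\boldsymbol{H}^*$. This yields precisely the two solutions $\bar{\boldsymbol{H}}$ and $\bar{\boldsymbol{H}}^*$, both of which are readily checked to be feasible since conjugation preserves $\mathrm{Re}(\boldsymbol{H})$ and the trace.

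I expect the main obstacle to be establishing this rigidity statement in the $b=1$ case and the precise threshold $N\ge 6$. Unlike the $b\ge2$ case, the off-diagonals are only known through their real parts, so the simple ratio argument fails; one must instead show that recovering the unknown diagonal of the real rank-two Gram matrix from its off-diagonals via the $3\times3$-minor equations has a unique solution, and that the residual ambiguity is exactly $O(2)$ rather than a larger discrete set or a continuum. Controlling these minor equations for small configurations is delicate, and ruling out the low-dimensional exceptional cases is what forces $N\ge 6$ instead of the weaker bound suggested by a naive parameter count. I would therefore organize the proof so that the off-diagonal/real-part reduction and the reflection-equals-conjugation correspondence carry the conceptual weight, leaving the threshold to a focused genericity-and-counting argument on the rank-two completion.
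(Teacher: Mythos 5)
Your treatment of the $b\ge 2$ case is correct and follows essentially the same route as the paper: the structure behind Lemma~\ref{lemma:dim-defficiency} reduces the constraints~\eqref{prob:cov-est-find-power} (with $D_V=\mathcal{D}_N^{(b)}$) to fixing all off-diagonal entries of $\boldsymbol{H}$ together with $\mathrm{tr}(\boldsymbol{H})$, after which uniqueness of the rank-one completion follows generically. Your finishing step --- factoring $\boldsymbol{H}=\boldsymbol{w}\boldsymbol{w}^H$ and chaining ratios of off-diagonal entries, which is what requires $N\ge 3$ --- is a minor variant of the paper's argument in Appendix~\ref{complm-sec:prop1-proof}, which instead shows that the difference $\boldsymbol{E}=\bar{\boldsymbol{H}}-\hat{\boldsymbol{H}}$ is diagonal with $\mathrm{rank}(\boldsymbol{E})\le 2$ and then forces every diagonal entry to vanish; the two arguments are interchangeable, and your $N=2$ counterexample (swapping the two diagonal entries) is a nice sanity check the paper does not give.

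For $b=1$, however, there is a genuine gap. You set up exactly the paper's framework --- $\mathrm{Re}(\boldsymbol{H})=\boldsymbol{a}\boldsymbol{a}^T+\boldsymbol{c}\boldsymbol{c}^T$ is the Gram matrix of $N$ planar points with known off-diagonal inner products and known trace, and the $O(2)$ ambiguity splits into rotations (a global phase on $\boldsymbol{w}$, leaving $\boldsymbol{H}$ unchanged) and reflections (conjugation, giving $\boldsymbol{H}^*$) --- but you explicitly defer the one step that carries all of the difficulty: proving that the off-diagonals, the trace, and the rank-$\le 2$ condition determine the diagonal uniquely, i.e., that the residual ambiguity is exactly $O(2)$. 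Your proposed route via vanishing $3\times 3$ minors is never carried out, and it is precisely the part you concede is ``delicate''; without it neither the uniqueness claim nor the threshold $N\ge 6$ is established. The paper closes this step without any minor equations: for any other solution with real part $\hat{\boldsymbol{H}}_r$, the difference $\boldsymbol{E}_r=\bar{\boldsymbol{H}}_r-\hat{\boldsymbol{H}}_r$ is diagonal (the off-diagonals agree) and has rank at most $4$ (a difference of two rank-$\le 2$ matrices), hence at most four nonzero diagonal entries. Assuming without loss of generality that the diagonals with indices $5,\dots,N$ agree, the lower-right $(N-4)\times(N-4)$ blocks of the two Gram matrices coincide; for $N\ge 6$ the $(N-4)\times 2$ coordinate matrix $\bar{\boldsymbol{B}}$ generically has full column rank $2$, which forces $\hat{\boldsymbol{B}}=\bar{\boldsymbol{B}}\boldsymbol{Y}$ with $\boldsymbol{Y}\boldsymbol{Y}^T=\boldsymbol{I}_2$, the cross blocks then force $\hat{\boldsymbol{A}}=\bar{\boldsymbol{A}}\boldsymbol{Y}$, so $\boldsymbol{E}_r=\boldsymbol{0}$, and finally $\det(\boldsymbol{Y})=\pm 1$ yields $\hat{\boldsymbol{H}}=\bar{\boldsymbol{H}}$ or $\bar{\boldsymbol{H}}^*$. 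This rank-$\le 4$ localization is also exactly where the threshold comes from ($N\ge 4+2=6$), which your counting never produces; to complete your proposal you would need to supply an argument of this kind in place of the deferred ``genericity-and-counting'' step.
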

    \begin{proof}[Proof\textup{:}\nopunct]\label{proof:solution-set}
        {\color{\highlightcolor}See Appendix~\ref{appendix:prop-proof-uniqueness}. }
    \end{proof}

    \vspace{-3pt}
    As the solutions for $b = 1$ and $b\ge 2$ are different, we solve problem~\eqref{prob:cov-est-find-origin} for these two cases separately. 
    For the case of $b\ge 2$, we derive the autocorrelation matrix $\bar{\boldsymbol{H}}$ by directly solving problem~\eqref{prob:cov-est-find-origin}.
    For the case of $b = 1$, the autocorrelation matrix cannot be uniquely determined, while the real part of the matrix, denoted as $\bar{\boldsymbol{H}}_r = \text{Re}(\bar{\boldsymbol{H}}) = \text{Re}(\bar{\boldsymbol{H}}^*)$, is unique according to Proposition~\ref{prop:cov-est-existence-uniqueness}. 
    Since the equivalent IRS reflection vector $\boldsymbol{v}$ is always a real vector for $b = 1$, the received signal power at the user always satisfies $p = p_0\text{tr}(\bar{\boldsymbol{H}}\boldsymbol{V}) = p_0\text{tr}(\bar{\boldsymbol{H}}_r\boldsymbol{V})$. 
    Therefore, we only need to estimate $\bar{\boldsymbol{H}}_r$ for optimizing the IRS reflection vector for data transmission. 
    Note that $\bar{\boldsymbol{H}}_r = (\bar{\boldsymbol{H}} + \bar{\boldsymbol{H}}^*) / 2$ is semidefinite with its rank no more than two because $\bar{\boldsymbol{H}}$ and $\bar{\boldsymbol{H}}^*$ are semidefinite rank-one matrices. 
    As such, for $b = 1$, we consider to estimate $\bar{\boldsymbol{H}}_r$ instead of $\bar{\boldsymbol{H}}$ by solving the following problem: 
    \begin{subequations}\label{prob:cov-est-find-rank-two}
        \allowdisplaybreaks
        \begin{align}
            & \text{find} \ {\boldsymbol{H}_r}\in\mathbb{M}_{+}^{N} \tag{\ref{prob:cov-est-find-rank-two}} \\
            & ~ \mathrm{s.t.} \ p_0\text{tr}(\boldsymbol{H}_r\boldsymbol{V}_t) = p_t, \ t = 1, \ldots, T_{p}, \label{prob:cov-est-find-rank-two-power} \\
            & ~~~~~~ \text{rank}(\boldsymbol{H}_r) \le 2, \label{prob:cov-est-find-rank-two-rank}
        \end{align}
    \end{subequations}
    where $\mathbb{M}_{+}^{N}$ denotes the set of all positive semidefinite real symmetric matrices of dimension $N\times N$. 
    The following proposition ensures the existence and uniqueness of the solution for problem~\eqref{prob:cov-est-find-rank-two}\footnote{{\color{\highlightcolor}Note that Propositions~\ref{prop:cov-est-existence-uniqueness} and~\ref{prop:solution-set-cov-real} only provide sufficient conditions on the uniqueness of the solution for problem~\eqref{prob:cov-est-find-origin} for the two cases of $b = 1$ and $b\ge 2$, respectively. 
    However, as revealed by simulations in Section~\ref{sec:performance-evaluation}, the unique solution $\bar{\boldsymbol{H}}$ for $b\ge 2$ and $\bar{\boldsymbol{H}}_r$ for $b = 1$ may also be found for smaller values of $D_V$ (or $T_{p}$) than that given in the two propositions. }}. 
    \begin{proposition}\label{prop:solution-set-cov-real}
        {\color{\highlightcolor}For $b = 1$, problem~\eqref{prob:cov-est-find-rank-two} has one unique solution $\bar{\boldsymbol{H}}_r$ if $N$ and $D_V$ are sufficiently large (i.e., $N \ge 6$, $D_V = \mathcal{D}_{N}^{(1)}$). }
    \end{proposition}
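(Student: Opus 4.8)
The plan is to reduce the claim to Proposition~\ref{prop:cov-est-existence-uniqueness}, which already pins down the solution set of the rank-one Hermitian problem~\eqref{prob:cov-est-find-origin} under exactly the same hypotheses ($N\ge 6$, $D_V = \mathcal{D}_{N}^{(1)}$). Existence is immediate from the discussion preceding the statement: $\bar{\boldsymbol{H}}_r = (\bar{\boldsymbol{H}} + \bar{\boldsymbol{H}}^*)/2$ is real symmetric, positive semidefinite with rank at most two, and since $\boldsymbol{v}_t$ is real for $b=1$ it satisfies $p_0\text{tr}(\bar{\boldsymbol{H}}_r\boldsymbol{V}_t) = p_0\text{tr}(\bar{\boldsymbol{H}}\boldsymbol{V}_t) = p_t$, so it is feasible for~\eqref{prob:cov-est-find-rank-two}. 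The entire content of the proposition is therefore the uniqueness of this solution.

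The key fact I would exploit is that, for $b=1$, every $\boldsymbol{v}_t\in\{\pm 1\}^{N}$ is real, so each $\boldsymbol{V}_t = \boldsymbol{v}_t\boldsymbol{v}_t^H$ is real symmetric. Consequently, for any Hermitian matrix $\boldsymbol{H}$ one has $\text{tr}(\boldsymbol{H}\boldsymbol{V}_t) = \text{tr}(\text{Re}(\boldsymbol{H})\boldsymbol{V}_t)$, because $\text{Im}(\boldsymbol{H})$ is skew-symmetric and the trace of a skew-symmetric matrix times a symmetric matrix vanishes. In words, the power measurements probe only the real (symmetric) part of any candidate autocorrelation matrix, which invites lifting a real feasible solution to a complex rank-one one. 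Concretely, let $\boldsymbol{H}_r$ be an arbitrary solution of~\eqref{prob:cov-est-find-rank-two}; being positive semidefinite of rank at most two, the spectral theorem yields $\boldsymbol{H}_r = \boldsymbol{w}_1\boldsymbol{w}_1^T + \boldsymbol{w}_2\boldsymbol{w}_2^T$ with real vectors $\boldsymbol{w}_1,\boldsymbol{w}_2$. I would then set $\boldsymbol{w} = \boldsymbol{w}_1 + j\boldsymbol{w}_2$ and form $\boldsymbol{H} = \boldsymbol{w}\boldsymbol{w}^H$, a rank-one Hermitian positive semidefinite matrix whose real part is exactly $\text{Re}(\boldsymbol{H}) = \boldsymbol{w}_1\boldsymbol{w}_1^T + \boldsymbol{w}_2\boldsymbol{w}_2^T = \boldsymbol{H}_r$.

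By the real-part observation, $p_0\text{tr}(\boldsymbol{H}\boldsymbol{V}_t) = p_0\text{tr}(\boldsymbol{H}_r\boldsymbol{V}_t) = p_t$ for all $t$, so $\boldsymbol{H}$ is a feasible rank-one solution of problem~\eqref{prob:cov-est-find-origin}. Proposition~\ref{prop:cov-est-existence-uniqueness} then forces $\boldsymbol{H}\in\{\bar{\boldsymbol{H}}, \bar{\boldsymbol{H}}^*\}$, and since $\text{Re}(\bar{\boldsymbol{H}}) = \text{Re}(\bar{\boldsymbol{H}}^*) = \bar{\boldsymbol{H}}_r$, I conclude $\boldsymbol{H}_r = \text{Re}(\boldsymbol{H}) = \bar{\boldsymbol{H}}_r$, which is the desired uniqueness. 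The only edge case to dispatch is $\boldsymbol{H}_r = \boldsymbol{0}_{N\times N}$, where $\boldsymbol{w} = \boldsymbol{0}$ and $\boldsymbol{H}$ fails to be rank one; this forces all $p_t = 0$ and hence cannot occur in the non-degenerate setting of a nonzero true channel.

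I expect the main obstacle to be conceptual rather than computational: recognizing that the real rank-at-most-two problem and the complex rank-one problem are two faces of the same object, tied together by the correspondence $\boldsymbol{w}_1\boldsymbol{w}_1^T + \boldsymbol{w}_2\boldsymbol{w}_2^T \leftrightarrow (\boldsymbol{w}_1 + j\boldsymbol{w}_2)(\boldsymbol{w}_1 + j\boldsymbol{w}_2)^H$, and that for real reflection vectors the skew-symmetric part is invisible to the measurements. Once this lifting is in place, no separate dimension counting or minor analysis is required, and the hypotheses $N\ge 6$ and $D_V = \mathcal{D}_{N}^{(1)}$ enter solely through the already-established Proposition~\ref{prop:cov-est-existence-uniqueness}.
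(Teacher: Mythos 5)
Your proposal is correct and follows essentially the same route as the paper's own proof in Appendix~\ref{appendix:prop-proof-bit1-uniqueness}: decompose any feasible $\boldsymbol{H}_r = \boldsymbol{w}_1\boldsymbol{w}_1^T + \boldsymbol{w}_2\boldsymbol{w}_2^T$, lift it to the rank-one Hermitian matrix $(\boldsymbol{w}_1 + j\boldsymbol{w}_2)(\boldsymbol{w}_1 + j\boldsymbol{w}_2)^H$, observe that real reflection vectors cannot distinguish it from $\boldsymbol{H}_r$, and invoke Proposition~\ref{prop:cov-est-existence-uniqueness} to force the real part to equal $\bar{\boldsymbol{H}}_r$. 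Your explicit justification of the trace identity via skew-symmetry and your handling of the degenerate case $\boldsymbol{H}_r = \boldsymbol{0}$ are minor refinements the paper leaves implicit.
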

    \begin{proof}[Proof\textup{:}\nopunct]
        {\color{\highlightcolor}See Appendix~\ref{appendix:prop-proof-bit1-uniqueness}. }
    \end{proof}

    \vspace{-3pt}
    {\color{\highlightcolor}In the following sections, the LRA and ALRA algorithms are proposed for $b = 1$ and $b \ge 2$ by solving problems~\eqref{prob:cov-est-find-origin} and~\eqref{prob:cov-est-find-rank-two}, respectively, based on received signal power. }

\vspace{-6pt}
\section{LRA Algorithm}\label{sec:proposed-ratio-max}
    \vspace{-2pt}
    In this section, the channel autocorrelation matrix estimation problems~\eqref{prob:cov-est-find-origin} and~\eqref{prob:cov-est-find-rank-two} are solved by transforming them into equivalent rank-minimization problems. 
    Instead of directly finding a rank-one/two matrix solution, the proposed LRA algorithm iteratively approaches a low-rank matrix solution via alternating optimization. 

    \subsection{Solution for Problem~\eqref{prob:cov-est-find-origin}}\label{subsec:ratio-max-est-multiary}
        For $b\ge 2$, problem~\eqref{prob:cov-est-find-origin} can be equivalently transformed into the following rank-minimization problem: 
        \begin{subequations}\label{prob:cov-est-rank-min}
            \allowdisplaybreaks
            \begin{align}
                & \mathop{\min_{\boldsymbol{H}}} \ \text{rank}(\boldsymbol{H}) \tag{\ref{prob:cov-est-rank-min}} \\
                & ~ \mathrm{s.t.} \ p_0\text{tr}(\boldsymbol{H}\boldsymbol{V}_t) = p_t, \ t = 1, \ldots, T_{p}, \label{prob:cov-est-rank-min-power} \\
                & ~~~~~~ {\boldsymbol{H}}\in\mathbb{S}_{+}^{N}. \label{prob:cov-est-rank-min-semidefinite}
            \end{align}
        \end{subequations}
        The equivalence between problems~\eqref{prob:cov-est-find-origin} and~\eqref{prob:cov-est-rank-min} is analyzed as follows. 
        As the channel autocorrelation matrix $\bar{\boldsymbol{H}}$ is feasible to problem~\eqref{prob:cov-est-rank-min}, any optimal solution for this problem, denoted by $\hat{\boldsymbol{H}}$, should satisfy $\text{rank}(\hat{\boldsymbol{H}}) \le \text{rank}(\bar{\boldsymbol{H}}) = 1$, leading to $\text{rank}(\hat{\boldsymbol{H}}) = 1$ and thus $\hat{\boldsymbol{H}}$ is also a solution for problem~\eqref{prob:cov-est-find-origin}. 
        Reversely, any solution $\hat{\boldsymbol{H}}'$ for problem~\eqref{prob:cov-est-find-origin} is feasible to the rank-minimization problem~\eqref{prob:cov-est-rank-min} and satisfies $\text{rank}(\hat{\boldsymbol{H}}') = 1$, which indicates that $\hat{\boldsymbol{H}}'$ is an optimal solution for problem~\eqref{prob:cov-est-rank-min}. 

        Since $\boldsymbol{H}\in\mathbb{S}_{+}^{N}$ is nonzero, all the eigenvalues of $\boldsymbol{H}$ are real and non-negative and $\text{tr}(\boldsymbol{H}) > 0$. 
        Define the eigenvalue-ratio function for matrix $\boldsymbol{H}$ as
        \begin{equation}\label{def:lambda-ratio-func}
            g(\boldsymbol{H}) = \frac{\lambda_{1}(\boldsymbol{H})}{\text{tr}(\boldsymbol{H})}, \ \boldsymbol{H}\in\mathbb{S}_{+}^{N}, 
        \end{equation}
        where $\lambda_{1}(\boldsymbol{H})$ is the largest eigenvalue of $\boldsymbol{H}$. 
        Obviously, $0 < g(\boldsymbol{H})\le 1$ holds for any nonzero $\boldsymbol{H}\in\mathbb{S}_{+}^{N}$, and it is easy to verify that $\text{rank}(\boldsymbol{H}) = 1$ if and only if $g(\boldsymbol{H}) = 1$. 
        As we have mentioned above, any solution $\hat{\boldsymbol{H}}$ for problem~\eqref{prob:cov-est-rank-min} satisfies $\text{rank}(\hat{\boldsymbol{H}}) = 1$. 
        Thus, we have $g(\hat{\boldsymbol{H}}) = 1$, which means that $\hat{\boldsymbol{H}}$ maximizes the eigenvalue-ratio function $g(\boldsymbol{H})$. 
        On the other hand, any matrix $\hat{\boldsymbol{H}}'$ that maximizes $g(\boldsymbol{H})$ subject to constraints~\eqref{prob:cov-est-rank-min-power} and~\eqref{prob:cov-est-rank-min-semidefinite} also minimizes $\text{rank}(\boldsymbol{H})$. 
        Therefore, the solutions for problem~\eqref{prob:cov-est-rank-min} are the same as the solutions that maximize the eigenvalue-ratio function $g(\boldsymbol{H})$ subject to constraints~\eqref{prob:cov-est-rank-min-power} and~\eqref{prob:cov-est-rank-min-semidefinite}. 
        Note that $\lambda_{1}(\boldsymbol{H}) = \mathop{\max_{\|\boldsymbol{x}\|_2\le 1}}{\boldsymbol{x}^H\boldsymbol{H}\boldsymbol{x}}$. 
        Thus, problem~\eqref{prob:cov-est-rank-min} can be written as 
        \begin{equation}\label{prob:cov-est-maxmax}
            \mathop{\max_{\boldsymbol{H}}\max_{\|\boldsymbol{x}\|_2 \le 1}} \ f(\boldsymbol{H}, \boldsymbol{x}) = \frac{\boldsymbol{x}^H\boldsymbol{H}\boldsymbol{x}}{\text{tr}(\boldsymbol{H})}, 
            ~~ \mathrm{s.t.} ~\eqref{prob:cov-est-rank-min-power},\eqref{prob:cov-est-rank-min-semidefinite}. 
        \end{equation}
        This optimization problem is non-convex, while alternating optimization can be employed to obtain a suboptimal solution for it. 
        Given $\boldsymbol{H}$, an optimal $\boldsymbol{x}$ can be obtained as the normalized eigenvector of $\boldsymbol{H}$ corresponding to the largest eigenvalue. 
        Given $\boldsymbol{x}$, the optimization of $\boldsymbol{H}$ is simplified as 
        \begin{equation}\label{prob:cov-est-maxmax-cov}
            \mathop{\max_{\boldsymbol{H}}} \ \frac{\text{tr}(\boldsymbol{H}\boldsymbol{X})}{\text{tr}(\boldsymbol{H})}, 
            ~~ \mathrm{s.t.} ~\eqref{prob:cov-est-rank-min-power},\eqref{prob:cov-est-rank-min-semidefinite}, 
        \end{equation}
        with $\boldsymbol{X} = \boldsymbol{x}\boldsymbol{x}^H$. 
        This is a fractional programming problem and can be transformed into a convex optimization problem. 
        Specifically, define $\boldsymbol{G} = \boldsymbol{H}/\text{tr}(\boldsymbol{H})$ and $\gamma = 1/\text{tr}(\boldsymbol{H})$. 
        Then, problem~\eqref{prob:cov-est-maxmax-cov} can be transformed into
        \begin{subequations}\label{prob:cov-est-maxmax-frac-progm}
            \allowdisplaybreaks
            \begin{align}
                & \mathop{\max_{\boldsymbol{G}, \gamma > 0}} \ \text{tr}(\boldsymbol{G}\boldsymbol{X}) \tag{\ref{prob:cov-est-maxmax-frac-progm}} \\
                & ~ \mathrm{s.t.} \ p_0\text{tr}(\boldsymbol{G}\boldsymbol{V}_t) = p_t\gamma, \ t = 1, \ldots, T_{p}, \label{prob:cov-est-maxmax-frac-progm-power} \\
                & ~~~~~~ {\boldsymbol{G}}\in\mathbb{S}_{+}^{N}, \ \text{tr}(\boldsymbol{G}) = 1, \label{prob:cov-est-maxmax-frac-progm-semidefinite} 
            \end{align}
        \end{subequations}
        which is a convex semidefinite programming (SDP) problem and thus can be solved by CVX~\cite{ref:Boyd-cvx}. 
        Denoting the optimal solution for problem~\eqref{prob:cov-est-maxmax-frac-progm} as $\hat{\boldsymbol{G}}$ and $\hat{\gamma}$, then the optimal solution for problem~\eqref{prob:cov-est-maxmax-cov} is obtained as $\hat{\boldsymbol{H}} = \hat{\boldsymbol{G}}/\hat{\gamma}$. 

        The proposed LRA algorithm to solve problem~\eqref{prob:cov-est-find-origin} for $b\ge 2$ is summarized in Algorithm~\ref{alg:RX-est-multiary}. 
        The computational complexity of the algorithm is dominated by the SDP problem~\eqref{prob:cov-est-maxmax-frac-progm}, which is given by $\mathcal{O}(N^{4.5})$~\cite{ref:SDR-in-Quadratic}. 
        Note that in line $3$, only the eigenvector corresponding to the largest eigenvalue is needed. 
        The well-known power method~\cite{ref:power-method} with the complexity of $\mathcal{O}(N^2)$ can be applied. 
        Thus, the computational complexity of Algorithm~\ref{alg:RX-est-multiary} is given by $\mathcal{O}(N^{4.5}I_1)$, {\color{\highlightcolor}which is in the same order as the trace-minimization method in~\cite{ref:PhaseLift}}, where $I_1$ is the total number of iterations. 

        The convergence of the LRA algorithm is analyzed as follows. 
        In Algorithm~\ref{alg:RX-est-multiary}, variables $\boldsymbol{H}$ and $\boldsymbol{x}$ are updated as $\boldsymbol{H}^{(i)}$ and $\boldsymbol{x}^{(i)}$ in the $i$-th iteration. 
        It is worth noting that $\boldsymbol{x}^{(i)}$ maximizes $f(\boldsymbol{H}, \boldsymbol{x})$ given $\boldsymbol{H} = \boldsymbol{H}^{(i - 1)}$ and thus
        \begin{equation}
            g(\boldsymbol{H}^{(i - 1)}) = \max_{\|\boldsymbol{x}\|_2 \le 1} f(\boldsymbol{H}^{(i - 1)}, \boldsymbol{x}) = f(\boldsymbol{H}^{(i - 1)}, \boldsymbol{x}^{(i)})
        \end{equation}
        holds. 
        Moreover, $\boldsymbol{H}^{(i)}$ is the optimal solution for problem~\eqref{prob:cov-est-maxmax-cov} with $\boldsymbol{X} = \boldsymbol{X}^{(i)}$, which means 
        \begin{equation}
            \begin{aligned}
                f(\boldsymbol{H}^{(i)}, \boldsymbol{x}^{(i)}) & = \frac{\text{tr}(\boldsymbol{H}^{(i)}\boldsymbol{X}^{(i)})}{\text{tr}(\boldsymbol{H}^{(i)})} \ge \frac{\text{tr}(\boldsymbol{H}^{(i - 1)}\boldsymbol{X}^{(i)})}{\text{tr}(\boldsymbol{H}^{(i - 1)})} \\
                & = f(\boldsymbol{H}^{(i - 1)}, \boldsymbol{x}^{(i)}), \ i = 1, \ldots, I_1. 
            \end{aligned}
        \end{equation}
        Thus, we have 
        \begin{subequations}\label{eq:ratio-func-monotonic}
            \begin{align}
                g(\boldsymbol{H}^{(i)}) & = \max_{\|\boldsymbol{x}\|_2 \le 1} f(\boldsymbol{H}^{(i)}, \boldsymbol{x}) \ge f(\boldsymbol{H}^{(i)}, \boldsymbol{x}^{(i)}) \\
                & \ge f(\boldsymbol{H}^{(i - 1)}, \boldsymbol{x}^{(i)}) = g(\boldsymbol{H}^{(i - 1)}), \ \forall i. 
            \end{align}
        \end{subequations}
        Therefore, the eigenvalue-ratio function $g(\boldsymbol{H}^{(i)})$ is non-decreasing during the iterations for $i = 0, 1, \ldots, I_1$. 
        Since $g(\boldsymbol{H})$ is upper-bounded by $1$, the convergence of Algorithm~\ref{alg:RX-est-multiary} is guaranteed. 
        Meanwhile, the monotonic increase of the eigenvalue-ratio function indicates that matrix $\boldsymbol{H}^{(i)}$ gradually approaches a rank-one matrix during the iterations. 

        \begin{algorithm}[t]
            \begin{minipage}{0.95\linewidth}
            \centering
            \caption{LRA algorithm for $b\ge 2$. }
            \begin{algorithmic}[1]\label{alg:RX-est-multiary}
                \REQUIRE~$\{\boldsymbol{v}_t$, $t = 1, \ldots, T\}$, $\boldsymbol{p}\in\mathbb{R}^{T\times 1}$, threshold $\epsilon$. 
                \STATE~Initialization: Solve $\boldsymbol{H}^{(0)}$ via trace-minimization relaxation~\cite{ref:PhaseLift}; iteration index $i\gets 1$. 
                \WHILE{$g(\boldsymbol{H}^{(i - 1)}) \le \epsilon$}
                    \STATE~Let $\boldsymbol{x}^{(i)}$ be the normalized eigenvector of $\boldsymbol{H}^{(i -1)}$ corresponding to the largest eigenvalue, and $\boldsymbol{X}^{(i)} \gets \boldsymbol{x}^{(i)}{\boldsymbol{x}^{(i)}}^H$. 
                    \STATE~Solve $\boldsymbol{G}^{(i)}$ and $\gamma^{(i)}$ from problem~\eqref{prob:cov-est-maxmax-frac-progm} with $\boldsymbol{X} = \boldsymbol{X}^{(i)}$, and obtain $\boldsymbol{H}^{(i)} \gets \boldsymbol{G}^{(i)}/\gamma^{(i)}$. 
                    \STATE~$i\gets i + 1$. 
                \ENDWHILE
                \RETURN~The estimated matrix $\hat{\boldsymbol{H}}\gets\boldsymbol{H}^{(i - 1)}$. 
            \end{algorithmic}
            \end{minipage}
        \end{algorithm}


    \vspace{-6pt}
    \subsection{Solution for Problem~\eqref{prob:cov-est-find-rank-two}}\label{subsec:ratio-max-est-binary}
        Next, we consider $b = 1$. 
        To solve matrix $\bar{\boldsymbol{H}}_r$ from problem~\eqref{prob:cov-est-find-rank-two}, a similar method to the case of $b\ge 2$ can be applied. 
        Consider the following rank-minimization problem: 
        \begin{subequations}\label{prob:cov-real-est-rank-min}
            \allowdisplaybreaks
            \begin{align}
                & \mathop{\min_{\boldsymbol{H}_r}} \ \text{rank}(\boldsymbol{H}_r) \tag{\ref{prob:cov-real-est-rank-min}} \\
                & ~ \mathrm{s.t.} \ p_0\text{tr}(\boldsymbol{H}_r\boldsymbol{V}_t) = p_t, \ t = 1, \ldots, T_{p}, \label{prob:cov-real-est-rank-min-power} \\
                & ~~~~~~ {\boldsymbol{H}_r}\in\mathbb{M}_{+}^{N}. \label{prob:cov-real-est-rank-min-semidefinite}
            \end{align}
        \end{subequations}
        Obviously, matrix $\bar{\boldsymbol{H}}_r = \text{Re}(\bar{\boldsymbol{H}})$ is feasible to problem~\eqref{prob:cov-real-est-rank-min}. 
        Any optimal solution for problem~\eqref{prob:cov-real-est-rank-min}, denoted by $\hat{\boldsymbol{H}}_r$, satisfies $\text{rank}(\hat{\boldsymbol{H}}_r)\le\text{rank}(\bar{\boldsymbol{H}}_r)\le 2$, which means that $\hat{\boldsymbol{H}}_r$ is also a solution for problem~\eqref{prob:cov-est-find-rank-two}. 
        Thus, the solution for problem~\eqref{prob:cov-est-find-rank-two} can be obtained by solving problem~\eqref{prob:cov-real-est-rank-min}. 

        Define the generalized eigenvalue-ratio function as
        \begin{equation}\label{def:generalized-lambda-ratio-func}
            g_r(\boldsymbol{H}_r) = \frac{\lambda_1(\boldsymbol{H}_r) + \lambda_2(\boldsymbol{H}_r)}{\text{tr}(\boldsymbol{H}_r)}, \ \boldsymbol{H}_r\in\mathbb{S}_{+}^{N}, 
        \end{equation}
        where $\lambda_1(\boldsymbol{H}_r)$ and $\lambda_2(\boldsymbol{H}_r)$ are the first and second largest eigenvalues of $\boldsymbol{H}_r$, respectively. 
        For nonzero $\boldsymbol{H}_r\in\mathbb{S}_{+}^{N}$, we have $\text{tr}(\boldsymbol{H}_r) > 0$, $0 < g_r(\boldsymbol{H}_r)\le 1$, and $g_r(\boldsymbol{H}_r) = 1$ holds if and only if $\text{rank}(\boldsymbol{H}_r)\le 2$. 
        Therefore, solving problem~\eqref{prob:cov-real-est-rank-min} is equivalent to maximizing $g_r(\boldsymbol{H}_r)$ subject to constraints~\eqref{prob:cov-real-est-rank-min-power} and~\eqref{prob:cov-real-est-rank-min-semidefinite}. 
        Furthermore, since we have
        \begin{subequations}\label{prob:largest-two-eigenvalues}
            \begin{align}
                \lambda_1(\boldsymbol{H}_r) + & \lambda_2(\boldsymbol{H}_r) = \mathop{\max_{\boldsymbol{x}_1, \boldsymbol{x}_2}} \ \boldsymbol{x}_1^T\boldsymbol{H}_r\boldsymbol{x}_1 + \boldsymbol{x}_2^T\boldsymbol{H}_r\boldsymbol{x}_2 \tag{\ref{prob:largest-two-eigenvalues}}, \\
                & ~ \mathrm{s.t.} \ \|\boldsymbol{x}_1\|_2 = 1,~\|\boldsymbol{x}_2\|_2 = 1,~\boldsymbol{x}_1^T\boldsymbol{x}_2 = 0, \label{prob:largest-two-eigenvalues-unitary}
            \end{align}
        \end{subequations}
        problem~\eqref{prob:cov-real-est-rank-min} can be equivalently written as 
        \begin{subequations}\label{prob:cov-est-maxmax-real}
            \begin{align}
                & \mathop{\max_{\boldsymbol{H}_r}\max_{\boldsymbol{x}_1, \boldsymbol{x}_2}} \ f_r(\boldsymbol{H}_r, \boldsymbol{x}_1, \boldsymbol{x}_2) = \frac{\boldsymbol{x}_1^T\boldsymbol{H}_r\boldsymbol{x}_1 + \boldsymbol{x}_2^T\boldsymbol{H}_r\boldsymbol{x}_2}{\text{tr}(\boldsymbol{H}_r)} \tag{\ref{prob:cov-est-maxmax-real}} \\
                & ~ \mathrm{s.t.} \ \eqref{prob:largest-two-eigenvalues-unitary},\eqref{prob:cov-real-est-rank-min-power},\eqref{prob:cov-real-est-rank-min-semidefinite}. \nonumber
            \end{align}
        \end{subequations}
        Alternating optimization can be applied to solve problem~\eqref{prob:cov-est-maxmax-real} sub-optimally. 
        Given $\boldsymbol{H}_r$, the optimal $\boldsymbol{x}_1$ and $\boldsymbol{x}_2$ can be obtained as the normalized eigenvectors corresponding to the first and second largest eigenvalues of $\boldsymbol{H}_r$, respectively. 
        Given $\boldsymbol{x}_1$ and $\boldsymbol{x}_2$, $\boldsymbol{H}_r$ can be optimized via
        \begin{equation}\label{prob:cov-est-maxmax-real-cov}
            \mathop{\max_{\boldsymbol{H}_r}} \ \frac{\text{tr}(\boldsymbol{H}_r\boldsymbol{X})}{\text{tr}(\boldsymbol{H}_r)}, 
            ~~ \mathrm{s.t.} ~\eqref{prob:cov-real-est-rank-min-power},\eqref{prob:cov-real-est-rank-min-semidefinite}, 
        \end{equation}
        with $\boldsymbol{X} = \boldsymbol{x}_1\boldsymbol{x}_1^T + \boldsymbol{x}_2\boldsymbol{x}_2^T$. 
        This problem is also a fractional programming problem and can be solved in the same way as problem~\eqref{prob:cov-est-maxmax-cov}. 
        The proposed LRA algorithm to solve problem~\eqref{prob:cov-est-find-rank-two} for the case of $b = 1$ is summarized in Algorithm~\ref{alg:RX-est-binary}, with a complexity of $\mathcal{O}(N^{4.5}I_2)$ {\color{\highlightcolor}which is also in the same order as the trace-minimization method in~\cite{ref:PhaseLift}}, where $I_2$ is the total number of iterations. 
        The convergence can be guaranteed by the monotonic increase of the generalized eigenvalue-ratio function $g_r(\boldsymbol{H}_r^{(i)})$, similar to Algorithm~\ref{alg:RX-est-multiary}. 

        \begin{algorithm}[t]
            \begin{minipage}{0.95\linewidth}
            \centering
            \caption{LRA algorithm for $b = 1$. }
            \begin{algorithmic}[1]\label{alg:RX-est-binary}
                \REQUIRE~$\{\boldsymbol{v}_t$, $t = 1, \ldots, T\}$, $\boldsymbol{p}\in\mathbb{R}^{T\times 1}$, threshold $\epsilon$.  
                \STATE~Initialization: Solve $\boldsymbol{H}_r^{(0)}$ via trace-minimization relaxation~\cite{ref:PhaseLift}; iteration index $i\gets 1$. 
                \WHILE{$g_r(\boldsymbol{H}_r^{(i - 1)}) \le \epsilon$}
                    \STATE~Let $\boldsymbol{x}_1^{(i)}$ and $\boldsymbol{x}_2^{(i)}$ be the normlized eigenvectors of $\boldsymbol{H}_r^{(i - 1)}$ corresponding to the first and second largest eigenvalues, and $\boldsymbol{X}^{(i)} \gets \boldsymbol{x}_1^{(i)}{\boldsymbol{x}_1^{(i)}}^T + \boldsymbol{x}_2^{(i)}{\boldsymbol{x}_2^{(i)}}^T$. 
                    \STATE~Solve $\boldsymbol{H}_r^{(i)}$ from problem~\eqref{prob:cov-est-maxmax-real-cov} with $\boldsymbol{X} = \boldsymbol{X}^{(i)}$. 
                    \STATE~$i\gets i + 1$. 
                \ENDWHILE
                \RETURN~The estimated matrix $\hat{\boldsymbol{H}_r}\gets\boldsymbol{H}_r^{(i - 1)}$. 
            \end{algorithmic}
            \end{minipage}
        \end{algorithm}

\vspace{-6pt}
\section{Approximate LRA Algorithm}\label{sec:proposed-dist-min}
    
    The proposed LRA algorithms can approach a low-rank matrix solution via alternating maximization of the eigenvalue-ratio functions, but their computational complexity may be high in practice because SDP is involved for each iteration. 
    In this section, proper approximations are implemented in the optimization of $\boldsymbol{H}$ and $\boldsymbol{H}_r$ for $b\ge 2$ and $b = 1$, respectively, and the ALRA algorithm is proposed to reduce the computational complexity. 
    The basic principle of the LRA algorithms is retained, while the eigenvalue-ratio functions are replaced by quadratic functions. 
    By exploiting the vector representations of hermitian matrices, closed-form solutions are derived for the ALRA algorithm during the iterations and thus the computational complexity is significantly reduced. 
    
    \vspace{-6pt}
    \subsection{Distance-Minimization Approximation}\label{subec:dist-min-approx}
        For each iteration of Algorithm~\ref{alg:RX-est-multiary} with $b\ge 2$, a normalized eigenvector $\boldsymbol{x}$ is obtained and then $\boldsymbol{H}$ is optimized via problem~\eqref{prob:cov-est-maxmax-cov}. 
        The objective of problem~\eqref{prob:cov-est-maxmax-cov} can be written as 
        \begin{equation}\label{prob:rx-maxmax-cov}
            \mathop{\max_{\boldsymbol{H}}} \frac{\text{tr}(\boldsymbol{H}\boldsymbol{X})}{\text{tr}(\boldsymbol{H})} = \text{tr}(\boldsymbol{X})\mathop{\max_{\boldsymbol{H}}} \frac{\text{tr}(\boldsymbol{H}\boldsymbol{X})}{\text{tr}(\boldsymbol{H})\text{tr}(\boldsymbol{X})}, 
        \end{equation}
        where $\boldsymbol{X} = \boldsymbol{x}\boldsymbol{x}^H$. 
        Note that $\text{tr}(\boldsymbol{H}\boldsymbol{X}) = \text{vec}(\boldsymbol{H})^H\text{vec}(\boldsymbol{X})$ and for $\boldsymbol{H}\in\mathbb{S}_{+}^{N}$, we have $\text{tr}(\boldsymbol{H})\ge\sqrt{\text{tr}(\boldsymbol{H}^H\boldsymbol{H})} = \|\text{vec}(\boldsymbol{H})\|_2$. 
        Thus, the objective function on the right-hand side of~\eqref{prob:rx-maxmax-cov} is upper-bounded by
        \begin{subequations}\label{eq:rx-maxmax-upper-bound}
            \allowdisplaybreaks
            \begin{align}
                \frac{\text{tr}(\boldsymbol{H}\boldsymbol{X})}{\text{tr}(\boldsymbol{H})\text{tr}(\boldsymbol{X})} & \le \frac{\text{tr}(\boldsymbol{H}\boldsymbol{X})}{\sqrt{\text{tr}(\boldsymbol{H}^H\boldsymbol{H})}\sqrt{\text{tr}(\boldsymbol{X}^H\boldsymbol{X})}} \\
                & = \frac{\text{vec}(\boldsymbol{H})^H\text{vec}(\boldsymbol{X})}{\|\text{vec}(\boldsymbol{H})\|_2\|\text{vec}(\boldsymbol{X})\|_2}, 
            \end{align}
        \end{subequations}
        which is the inner product of two normalized vectors $\text{vec}(\boldsymbol{H})/\|\text{vec}(\boldsymbol{H})\|_2$ and $\text{vec}(\boldsymbol{X})/\|\text{vec}(\boldsymbol{X})\|_2$. 
        As a result, the optimization of $\boldsymbol{H}$ can be approximately seen as aligning the direction of $\text{vec}(\boldsymbol{H})$ to that of $\text{vec}(\boldsymbol{X})$, i.e., minimizing the cosine distance between them, which is defined as 
        \begin{equation}\label{def:cosine-distance}
            d_c(\text{vec}(\boldsymbol{H}), \text{vec}(\boldsymbol{X})) = 1 - \frac{\text{vec}(\boldsymbol{H})^H\text{vec}(\boldsymbol{X})}{\|\text{vec}(\boldsymbol{H})\|_2\|\text{vec}(\boldsymbol{X})\|_2}. 
        \end{equation}
        To reduce the computational complexity, we consider to minimize the Euclidean distance between $\text{vec}(\boldsymbol{H})$ and $\mu\text{vec}(\boldsymbol{X})$, where $\mu$ is a scaling factor to be determined. 
        Specifically, the distance-minimization problem is given by
        \begin{subequations}\label{prob:cov-est-dist-min}
            \allowdisplaybreaks
            \begin{align}
                & \mathop{\min_{\boldsymbol{H}, \mu}}\ d_e(\boldsymbol{H}, \mu\boldsymbol{X}) \tag{\ref{prob:cov-est-dist-min}} \\
                & ~ \mathrm{s.t.} \ p_0\text{tr}(\boldsymbol{H}\boldsymbol{V}_t) = p_t, \ t = 1, \ldots, T_{p}, \label{prob:cov-est-dist-min-power} \\
                & ~~~~~~ \boldsymbol{H}\in\mathbb{S}_{+}^{N}, \label{prob:cov-est-dist-min-semidefinite}
            \end{align}
        \end{subequations}
        where $d_e(\boldsymbol{H}, \mu\boldsymbol{X})$ denotes the Euclidean distance between vectors $\text{vec}(\boldsymbol{H})$ and $\text{vec}(\mu\boldsymbol{X})$ and it is defined as 
        \begin{equation}\label{def:euclidean-distance-func}
            d_e(\boldsymbol{H}, \mu\boldsymbol{X}) = \|\text{vec}(\boldsymbol{H}) - \text{vec}(\mu\boldsymbol{X})\|_2 = \|\boldsymbol{H} - \mu\boldsymbol{X}\|_F. 
        \end{equation}
        To further simplify the optimization of $\boldsymbol{H}$, we relax the semidefinite constraint~\eqref{prob:cov-est-dist-min-semidefinite} to a hermitian constraint and solve the following problem:
        \begin{subequations}\label{prob:cov-est-dist-min-relaxed}
            \allowdisplaybreaks
            \begin{align}
                & \mathop{\min_{\boldsymbol{H}, \mu}}\ \|\boldsymbol{H} - \mu\boldsymbol{X}\|_F^2 \tag{\ref{prob:cov-est-dist-min-relaxed}} \\
                & ~ \mathrm{s.t.} \ p_0\text{tr}(\boldsymbol{H}\boldsymbol{V}_t) = p_t, \ t = 1, \ldots, T_{p}, \label{prob:cov-est-dist-min-relaxed-power} \\
                & ~~~~~~ \boldsymbol{H}^H = \boldsymbol{H}. \label{prob:cov-est-dist-min-relaxed-hermitian}
            \end{align}
        \end{subequations}

        Similarly, for Algorithm~\ref{alg:RX-est-binary} with $b = 1$, the optimization of $\boldsymbol{H}_r$ in problem~\eqref{prob:cov-est-maxmax-real-cov} can be approximated by minimizing the Euclidean distance function. 
        For $b = 1$, $\boldsymbol{X} = \boldsymbol{x}_1\boldsymbol{x}_1^T + \boldsymbol{x}_2\boldsymbol{x}_2^T$ is rank-two. 
        To improve the recovery accuracy, we consider a vector $\boldsymbol{\mu} = [\mu_1, \mu_2]^T\in\mathbb{R}^{2\times 1}$ such that $\boldsymbol{H}_r$ is optimized via
        \begin{subequations}\label{prob:cov-est-dist-min-relaxed-binary}
            \allowdisplaybreaks
            \begin{align}
                & \mathop{\min_{\boldsymbol{H}_r, \boldsymbol{\mu}}}\ \|\boldsymbol{H}_r - \left(\mu_1\boldsymbol{X}_1 + \mu_2\boldsymbol{X}_2\right)\|_F^2 \tag{\ref{prob:cov-est-dist-min-relaxed-binary}} \\
                & ~ \mathrm{s.t.} \ p_0\text{tr}(\boldsymbol{H}_r\boldsymbol{V}_t) = p_t, \ t = 1, \ldots, T_{p}, \label{prob:cov-est-dist-min-relaxed-binary-power} \\
                & ~~~~~~ \boldsymbol{H}_r^T = \boldsymbol{H}_r\in\mathbb{R}^{N\times N}, \label{prob:cov-est-dist-min-relaxed-binary-symmetric}
            \end{align}
        \end{subequations}
        where $\boldsymbol{X}_1 = \boldsymbol{x}_1\boldsymbol{x}_1^T$ and $\boldsymbol{X}_2 = \boldsymbol{x}_2\boldsymbol{x}_2^T$ are obtained in the previous iteration in Algorithm~\ref{alg:RX-est-binary}. 
        Since problems~\eqref{prob:cov-est-dist-min-relaxed} and~\eqref{prob:cov-est-dist-min-relaxed-binary} are quadratic optimization problems with linear constraints, their optimal solutions can be given in closed form, which are derived in the following context. 
    
    \vspace{-8pt}
    \subsection{Solution for Problem~\eqref{prob:cov-est-dist-min-relaxed}}\label{subsec:solution-dist-min-multiary}

        According to~\cite{ref:OLS}, all hermitian matrices of size $N\times N$ form a linear space of dimension $N^2$ with real combination coefficients. 
        In particular, each $N\times N$ hermitian matrix $\boldsymbol{A}$ can be represented by an $N^2$-dimensional real vector $\boldsymbol{w}_a$, which is the coordinate of $\boldsymbol{A}$ in the hermitian matrix space. 
        Following the definition in~\cite{ref:OLS}, denote the bijective mapping from $\boldsymbol{A}$ to $\boldsymbol{w}_a$ as function $\mathcal{M}:\mathbb{C}^{N\times N}\to\mathbb{R}^{N^2\times 1}$, such that $\boldsymbol{w}_a = \mathcal{M}(\boldsymbol{A})$ and $\boldsymbol{A} = \mathcal{M}^{-1}(\boldsymbol{w}_a)$. 
        It has been shown in~\cite{ref:OLS} that for any hermitian matrices $\boldsymbol{A}$ and $\boldsymbol{B}$, we have $\textup{tr}(\boldsymbol{A}\boldsymbol{B}) = \boldsymbol{w}_a^T\boldsymbol{w}_b$, where $\boldsymbol{w}_a = \mathcal{M}(\boldsymbol{A})$ and $\boldsymbol{w}_b = \mathcal{M}(\boldsymbol{B})$. 
        Based on this property, problem~\eqref{prob:cov-est-dist-min-relaxed} can be equivalently written as a quadratic optimization problem with vector variables. 
        Denote $\boldsymbol{w} = \mathcal{M}(\boldsymbol{H})$, $\boldsymbol{w}_{X} = \mathcal{M}(\boldsymbol{X})$, and $\boldsymbol{w}_t = \mathcal{M}(\boldsymbol{V}_t)$, $t = 1, \ldots, T_{p}$. 
        Then, we have $p_t = p_0\text{tr}(\boldsymbol{H}\boldsymbol{V}_t) = p_0\boldsymbol{w}^T\boldsymbol{w}_t$, $\forall t$, and
        \begin{equation}
            \begin{aligned}
                \|\boldsymbol{H} - \mu\boldsymbol{X}\|_F^2 & = \text{tr}\left(
                (\boldsymbol{H} - \mu\boldsymbol{X})^H(\boldsymbol{H} - \mu\boldsymbol{X})
                \right) \\
                & = \|\boldsymbol{w} - \mu\boldsymbol{w}_{X}\|_2^2. 
            \end{aligned}
        \end{equation}
        Define $\boldsymbol{C} = p_0[\boldsymbol{w}_1, \ldots, \boldsymbol{w}_{T_{p}}]\in\mathbb{R}^{N^2\times T_{p}}$, and then the power constraints $p_t = p_0\text{tr}(\boldsymbol{H}\boldsymbol{V}_t)$ can be expressed as $\boldsymbol{C}^T\boldsymbol{w} = \boldsymbol{p}$. 
        Thus, problem~\eqref{prob:cov-est-dist-min-relaxed} can be equivalently written as 
        \begin{equation}\label{prob:cov-est-dist-min-relaxed-decomp}
                \mathop{\min_{\boldsymbol{w}, \mu}}\ \|\boldsymbol{w} - \mu\boldsymbol{w}_{X}\|_2^2, ~~ \mathrm{s.t.} \ \boldsymbol{C}^T\boldsymbol{w} = \boldsymbol{p}, 
        \end{equation}
        where the hermitian constraint~\eqref{prob:cov-est-dist-min-semidefinite} is dropped because it is guaranteed by the transformation from $\boldsymbol{H}$ to $\boldsymbol{w}$. 
        For any given $\mu$, the Lagrange function for variable $\boldsymbol{w}$ is given by
        \begin{equation}\label{def:lagrange-func}
            L(\boldsymbol{w}, \boldsymbol{\lambda}) = \|\boldsymbol{w} - \mu\boldsymbol{w}_{X}\|_2^2 + \boldsymbol{\lambda}^T(\boldsymbol{p} - \boldsymbol{C}^T\boldsymbol{w}), 
        \end{equation}
        where $\boldsymbol{\lambda}\in\mathbb{R}^{T_{p}\times 1}$ is the Lagrange multiplier. 
        Then, $\boldsymbol{w}$ and $\boldsymbol{\lambda}$ can be solved via the Karush-Kuhn-Tucker (KKT) conditions, i.e., 
        \begin{subequations}\label{eq:lagrange-partial}
            \allowdisplaybreaks
            \begin{align}
                \frac{\partial L}{\partial \boldsymbol{\lambda}} & = \boldsymbol{p} - \boldsymbol{C}^T\boldsymbol{w} = \boldsymbol{0}_{T_{p}\times 1}, \\
                \frac{\partial L}{\partial \boldsymbol{w}} & = 2\boldsymbol{w} - 2\mu\boldsymbol{w}_{X} - \boldsymbol{C}\boldsymbol{\lambda} = \boldsymbol{0}_{N^2\times 1}. 
            \end{align}
        \end{subequations}
        By assuming that $T_{p}\ll N^2$ and $\text{rank}(\boldsymbol{C}) = T_{p}$ hold\footnote{In practical systems, the number of power measurement $T_{p}$ is generally much smaller than $N^2$ to avoid excessively large training overhead. {\color{\highlightcolor}Meanwhile, we can properly design the IRS reflection coefficients for training such that $\text{rank}(\boldsymbol{C}) = D_V = T_{p}$ for $T_{p}\ll \mathcal{D}_{N}^{(b)} < N^2$. }. }, these two equations can be solved as
        \begin{subequations}\label{eq:lagrange-solution}
            \allowdisplaybreaks
            \begin{align}
                \boldsymbol{\lambda} & = 2(\boldsymbol{C}^T\boldsymbol{C})^{-1}(\boldsymbol{p} - \mu\boldsymbol{C}^T\boldsymbol{w}_{X}), \label{eq:lagrange-lambda-solution} \\
                \boldsymbol{w} & = \boldsymbol{D}\boldsymbol{p} + \mu(\boldsymbol{I}_{N^2} - \boldsymbol{D}\boldsymbol{C}^T)\boldsymbol{w}_{X}, \label{eq:lagrange-qvec-solution}
            \end{align}
        \end{subequations}
        where $\boldsymbol{D} = \boldsymbol{C}(\boldsymbol{C}^T\boldsymbol{C})^{-1}$. 
        Thus, the optimal $\mu$, denoted as $\mu^{\star}$, can be solved by substituting~\eqref{eq:lagrange-qvec-solution} to the objective function in~\eqref{prob:cov-est-dist-min-relaxed-decomp}, i.e., 
        \begin{equation}\label{eq:optimal-scaler-mu}
                \mu^{\star} = \mathop{\arg\min_{\mu}}{\|\boldsymbol{D}\boldsymbol{p} - \mu\boldsymbol{D}\boldsymbol{C}^T\boldsymbol{w}_{X}\|_2^2}. 
        \end{equation}
        If $\boldsymbol{D}\boldsymbol{C}^T\boldsymbol{w}_{X}\neq\boldsymbol{0}_{N^2\times 1}$, $\mu^{\star}$ is obtained as
        \begin{equation}\label{eq:optimal-mu-quadra}
                \mu^{\star} = \frac{\boldsymbol{w}_{X}^T\boldsymbol{C}\boldsymbol{D}^T\boldsymbol{D}\boldsymbol{p}}{\|\boldsymbol{D}\boldsymbol{C}^T\boldsymbol{w}_{X}\|_2^2} = \frac{\boldsymbol{w}_{X}^T\boldsymbol{D}\boldsymbol{p}}{\boldsymbol{w}_{X}^T\boldsymbol{D}\boldsymbol{C}^T\boldsymbol{w}_{X}}. 
        \end{equation}
        If $\boldsymbol{D}\boldsymbol{C}^T\boldsymbol{w}_{X} = \boldsymbol{0}_{N^2\times 1}$, $\mu^{\star}$ can be any real value. 
        For simplicity, we choose $\mu^{\star} = 0$. 
        Then, the optimal solution $\boldsymbol{w}^{\star}$ can be obtained via~\eqref{eq:lagrange-qvec-solution} with $\mu = \mu^{\star}$, which yields the optimal solution for problem~\eqref{prob:cov-est-dist-min-relaxed} as $\boldsymbol{H}^{\star} = \mathcal{M}^{-1}(\boldsymbol{w}^{\star})$. 

    \vspace{-6pt}
    \subsection{Solution for Problem~\eqref{prob:cov-est-dist-min-relaxed-binary}}\label{subsec:solution-dist-min-binary}
        For $b = 1$, $\boldsymbol{H}_r$ is a real symmetric matrix, which can also be transformed into a real vector, similar to the case of $b\ge 2$. 
        By defining ${\boldsymbol{w}_r} = \mathcal{M}(\boldsymbol{H}_r)$, ${\boldsymbol{w}}_{X1} = \mathcal{M}(\boldsymbol{X_1})$, ${\boldsymbol{w}}_{X2} = \mathcal{M}(\boldsymbol{X}_2)$ and ${\boldsymbol{w}_{rt}} = \mathcal{M}(\boldsymbol{V}_t)$, $t = 1, \ldots, T_{p}$, problem~\eqref{prob:cov-est-dist-min-relaxed-binary} can be equivalently written as 
        \begin{equation}\label{prob:cov-est-dist-min-relaxed-binary-decomp}
                \mathop{\min_{{\boldsymbol{w}_r}, \boldsymbol{\mu}}}\ \|{\boldsymbol{w}_r} - {\boldsymbol{W}}_{X}\boldsymbol{\mu}\|_2^2, ~~ \mathrm{s.t.} \ {\boldsymbol{C}}^T{\boldsymbol{w}_r} = \boldsymbol{p}, 
        \end{equation}
        where ${\boldsymbol{C}} = p_0[{\boldsymbol{w}}_{r1}, \ldots, {\boldsymbol{w}}_{rT_{p}}]\in\mathbb{R}^{N^2\times T_{p}}$ and ${\boldsymbol{W}}_{X} = [{\boldsymbol{w}}_{X1}, {\boldsymbol{w}}_{X2}]\in\mathbb{R}^{N^2\times 2}$. 
        Similar to the case of $b\ge 2$, for any given $\boldsymbol{\mu}$, the optimal vector ${\boldsymbol{w}_r}$ is given by
        \begin{equation}\label{eq:lagrange-rvec-binary}
            {\boldsymbol{w}_r} = {\boldsymbol{D}}\boldsymbol{p} + (\boldsymbol{I}_{N^2} - {\boldsymbol{D}}{\boldsymbol{C}}^T){\boldsymbol{W}}_{X}\boldsymbol{\mu}, 
        \end{equation}
        where ${\boldsymbol{D}} = {\boldsymbol{C}}({\boldsymbol{C}}^T{\boldsymbol{C}})^{-1}$. 
        If ${\boldsymbol{D}}{\boldsymbol{C}}^T{\boldsymbol{W}}_{X}\neq\boldsymbol{0}_{N^2\times 2}$, the optimal $\boldsymbol{\mu}$ for problem~\eqref{prob:cov-est-dist-min-relaxed-binary-decomp} is given by
        \begin{equation}\label{eq:optimal-mu-quadra-bianry}
            \boldsymbol{\mu}^{\star} = \left({\boldsymbol{W}}_{X}^T{\boldsymbol{D}}{\boldsymbol{C}}^T{\boldsymbol{W}}_{X}\right)^{\dagger}{\boldsymbol{W}}_{X}^T{\boldsymbol{D}}\boldsymbol{p}. 
        \end{equation}
        If ${\boldsymbol{D}}{\boldsymbol{C}}^T{\boldsymbol{W}}_{X} = \boldsymbol{0}_{N^2\times 2}$, $\boldsymbol{\mu}^{\star} = \boldsymbol{0}_{2\times 1}$ is chosen. 
        Then, the optimal ${\boldsymbol{w}_r^{\star}}$ for problem~\eqref{prob:cov-est-dist-min-relaxed-binary-decomp} is obtained via~\eqref{eq:lagrange-rvec-binary} with $\boldsymbol{\mu} = \boldsymbol{\mu}^{\star}$, and the optimal solution for problem~\eqref{prob:cov-est-dist-min-relaxed-binary} is given by $\boldsymbol{H}_r^{\star} = \mathcal{M}^{-1}({\boldsymbol{w}_r^{\star}})$. 
    
    \vspace{-6pt}
    \subsection{Complexity and Convergence Analysis}\label{subsec:dist-min-convergence}
        For $b\ge 2$, the ALRA algorithm is implemented by replacing line $4$ in Algorithm~\ref{alg:RX-est-multiary} with $\boldsymbol{H}^{\star}$ given by Section~\ref{subsec:solution-dist-min-multiary}. 
        For $b = 1$, the ALRA algorithm is implemented by replacing line $4$ in Algorithm~\ref{alg:RX-est-binary} with $\boldsymbol{H}_r^{\star}$ given by Section~\ref{subsec:solution-dist-min-binary}. 
        It is worth noting that in the original LRA algorithm, the trace-minimization method used for initialization also requires SDP and may cause high computational complexity. 
        Thus, we also replace the trace-minimization initializations with solutions for problem~\eqref{prob:cov-est-dist-min-relaxed} and~\eqref{prob:cov-est-dist-min-relaxed-binary} as $\boldsymbol{X} = \boldsymbol{0}_{N^2\times N^2}$ and $\boldsymbol{X}_1 = \boldsymbol{X}_2 = \boldsymbol{0}_{N^2\times N^2}$, respectively. 
        The computational complexity for the solutions in~\eqref{eq:lagrange-qvec-solution},\eqref{eq:optimal-mu-quadra} and~\eqref{eq:lagrange-rvec-binary},\eqref{eq:optimal-mu-quadra-bianry} is given by $\mathcal{O}(N^2T_{p})$.
        Meanwhile, the optimization of $\boldsymbol{X}$ can be implemented by the power method for both $b\ge 2$ and $b = 1$ with the complexity of $\mathcal{O}(N^2)$. 
        By denoting $I_1'$ and $I_2'$ as the total number of iterations for $b\ge 2$ and $b = 1$, respectively, the total computational complexity of the ALRA algorithm is given by $\mathcal{O}(N^2T_{p}I_1')$ for $b\ge 2$ and $\mathcal{O}(N^2T_{p}I_2')$ for $b = 1$, which is much lower than those of the corresponding LRA algorithms {\color{\highlightcolor}and the trace-minimization method in~\cite{ref:PhaseLift}}. 

        Moreover, the convergence of the ALRA algorithm is guaranteed, which is analyzed as follows. 
        For $b\ge 2$, $\boldsymbol{x}^{(i)}$, $\mu^{(i)}$, and $\boldsymbol{H}^{(i)}$ are obtained for the $i$-th iteration. 
        Specifically, $\boldsymbol{x}^{(i)}$ is the eigenvector of $\boldsymbol{H}^{(i - 1)}$ corresponding to the largest eigenvalue, denoted as $\lambda_1^{(i - 1)}$. 
        Thus, matrix $\lambda_1^{(i - 1)}\boldsymbol{X}^{(i)} = \lambda_1^{(i - 1)}\boldsymbol{x}^{(i)}{\boldsymbol{x}^{(i)}}^H$ is the rank-one matrix closest to $\boldsymbol{H}^{(i - 1)}$, i.e., 
        \begin{equation}\label{eq:closest-rank-one-matrix}
            \lambda_1^{(i - 1)}\boldsymbol{X}^{(i)} = \mathop{\arg\min_{\text{rank}(\boldsymbol{A}) \le 1}}{\|\boldsymbol{H}^{(i - 1)} - \boldsymbol{A}\|_F^2}. 
        \end{equation}
        On the other hand, $\boldsymbol{H}^{(i)}$ and $\mu^{(i)}$ are jointly optimized to minimize $\|\boldsymbol{H} - \mu\boldsymbol{X}^{(i)}\|_F^2$ under constraints~\eqref{prob:cov-est-dist-min-relaxed-power} and~\eqref{prob:cov-est-dist-min-relaxed-hermitian}. 
        Therefore, for $i\ge 1$, we have
        \begin{subequations}\label{eq:dist-min-conv}
            \allowdisplaybreaks
            \begin{align}
                \left\|\boldsymbol{H}^{(i)} - \mu^{(i)}\boldsymbol{X}^{(i)}\right\|_F^2 & \le \left\|\boldsymbol{H}^{(i - 1)} - \lambda_1^{(i - 1)}\boldsymbol{X}^{(i)}\right\|_F^2 \\
                & \le \left\|\boldsymbol{H}^{(i - 1)} - \mu^{(i - 1)}\boldsymbol{X}^{(i - 1)}\right\|_F^2, 
            \end{align}
        \end{subequations}
        which guarantees the convergence of the ALRA algorithm for $b\ge 2$. 
        Similarly, the ALRA algorithm for $b = 1$ converges with a non-increasing sequence $\|\boldsymbol{H}_r^{(i)} - (\mu_1^{(i)}\boldsymbol{X}_1^{(i)} + \mu_2^{(i)}\boldsymbol{X}_2^{(i)})\|_F$, indicating that the estimated channel autocorrelation matrix approaches a rank-two matrix iteratively.

\section{Robust Estimation with Practical Power measurement}\label{sec:robust-est}
    In the above sections, the channel autocorrelation matrix was estimated based on the received signal power. 
    However, the exact value of the received signal power cannot be obtained in practical communication systems. 
    This is because the received signals are corrupted by noise. 
    Besides, the power measurement at the user can only be fed back to the central processing unit approximately after quantization, which thus incurs quantization error. 
    In this section, the robust extensions of the proposed LRA and ALRA algorithms are designed based on practical power measurement, where the effects of receiver noise and quantization error are considered. 

    \vspace{-6pt}
    \subsection{Power Measurement}\label{subsec:rsrp-protocol}
        According to~\eqref{def:signal-model}, the noisy signal power received at the user is given by 
        \begin{equation}\label{def:received-power-noisy}
            p_r = \left|\left(
                    \boldsymbol{v}^H\bar{\boldsymbol{h}}
                \right)x + z\right|^2
            = p_0\text{tr}(\bar{\boldsymbol{H}}\boldsymbol{V}) + Z, 
        \end{equation}
        where $Z = |z|^2 + 2\text{Re}((\boldsymbol{v}^{H}\bar{\boldsymbol{h}})xz^*)$ is the noise power satisfying $\mathbb{E}[Z] = \sigma^2$. 
        By averaging $N_{0}$ noisy power values, each practical power measurement is given by 
        \begin{equation}\label{def:RSRP-power}
            q = \frac{1}{N_{0}}\sum_{i = 1}^{N_{0}}{\left|\left(\boldsymbol{v}^{H}\bar{\boldsymbol{h}}\right)x + z_i\right|^2} = p_0\text{tr}(\bar{\boldsymbol{H}}\boldsymbol{V}) + \frac{1}{N_{0}}\sum_{i = 1}^{N_{0}}{Z_i}, 
        \end{equation}
        where $z_i\sim\mathcal{CN}(0, \sigma^2)$, $i = 1, \ldots, N_{0}$, are i.i.d. random noise variables and $Z_i = |z_i|^2 + 2\text{Re}((\boldsymbol{v}^{H}\bar{\boldsymbol{h}})xz_i^*)$. 
        Define $\bar{Z} = \frac{1}{N_{0}}\sum_{i = 1}^{N_{0}}{Z_i}$ as the average noise power in each power measurement. 
        If $N_{0}$ is sufficiently large, $\bar{Z}$ converges to a Gaussian random variable with expectation $\mathbb{E}[\bar{Z}] = \sigma^2$, according to the central limit theorem. 

        Despite the above model of practical power measurement, its value cannot be exactly acquired because of the quantization. 
        For example, in the current protocol~\cite{ref:3gpp:36.133}, the RSRP value is mapped into equally quantized levels within the range from $-156$ dBm to $-44$ dBm. 
        Thus, to investigate the effect of power quantization mapping on the performance of channel autocorrelation matrix estimation, we split the range $[-156, -44]$ dBm into $M$ equal levels and define $D = 112/M$ dB as the interval for each level. 
        As such, the set of quantization power levels is given by $\Omega_{D} = \{-44 - (l + 1/2)D~$\text{dBm}$, l = 0, \ldots, M - 1\}$. 
        For the $l$-th level, the index $l$ is reported to the central processing unit if the power value $q$ lies between $-44 - (l + 1)D$ dBm and $-44 - lD$ dBm. 


    \vspace{-6pt}
    \subsection{Reformulation of Channel Estimation Problem}\label{subsec:est-prob-formulation}
        Denote the actual power value for the $t$-th measurement as $q_t = p_0\text{tr}(\bar{\boldsymbol{H}}\boldsymbol{V}_t) + \bar{Z}_t$, where $\bar{Z}_t$ is the noise power. 
        Given the quantized power level index $l_t$, we have $\zeta_t\le q_t\le \xi_t$, where $\zeta_t = 10^{-7.4 - (l_t + 1)D/10}$ and $\xi_t = 10^{-7.4 - l_tD/10}$ are the lower and upper bounds of the $l_t$-th power level. 
        For $b\ge 2$, the channel autocorrelation matrix estimation problem is reformulated as
        \begin{subequations}\label{prob:cov-est-robust-origin}
            \begin{align}
                & \mathop{\min_{\boldsymbol{H}, \boldsymbol{\delta}\ge\boldsymbol{0}_{T_{p}\times 1}}} \ {\|\boldsymbol{\delta}\|_2^2} \tag{\ref{prob:cov-est-robust-origin}} \\
                & ~ \mathrm{s.t.} \ \zeta_t - \delta_t\le p_0\text{tr}(\boldsymbol{H}\boldsymbol{V}_t) + \sigma^2 \le \xi_t + \delta_t, \forall t, \label{prob:cov-est-robust-origin-power} \\
                & ~~~~~~ \text{rank}(\boldsymbol{H}) = 1, ~\boldsymbol{H}\in\mathbb{S}_{+}^{N}, \label{prob:cov-est-robust-origin-rank-semidefinite}
            \end{align}
        \end{subequations}
        where $\boldsymbol{\delta} = [\delta_1, \ldots, \delta_{T_{p}\times 1}]\in\mathbb{R}^{T_{p}\times 1}$ is a slack vector which is introduced to tackle the uncertainty of noise power $\bar{Z}_t$. 
        For $b = 1$, the problem is given by 
        \begin{subequations}\label{prob:cov-est-robust-origin-binary}
            \begin{align}
                & \mathop{\min_{\boldsymbol{H}_r, \boldsymbol{\delta}\ge\boldsymbol{0}_{T_{p}\times 1}}} \ {\|\boldsymbol{\delta}\|_2^2} \tag{\ref{prob:cov-est-robust-origin-binary}} \\
                & ~ \mathrm{s.t.} \ \zeta_t - \delta_t\le p_0\text{tr}(\boldsymbol{H}_r\boldsymbol{V}_t) + \sigma^2 \le \xi_t + \delta_t, \forall t, \label{prob:cov-est-robust-origin-binary-power} \\
                & ~~~~~~ \text{rank}(\boldsymbol{H}_r) \le 2, ~\boldsymbol{H}_r\in\mathbb{M}_{+}^{N}. \label{prob:cov-est-robust-origin-rank-binary-semidefinite}
            \end{align}
        \end{subequations}
        It should be noticed that problems~\eqref{prob:cov-est-robust-origin} and~\eqref{prob:cov-est-robust-origin-binary} are an non-convex optimization problems because of the rank constraints, whose optimal solutions cannot be solved by existing optimization tools in polynomial time. 
        In the following, we generalize the proposed LRA and ALRA algorithms to solve problems~\eqref{prob:cov-est-robust-origin} and~\eqref{prob:cov-est-robust-origin-binary} for achieving robust estimation of the channel autocorrelation matrix.

        \begin{algorithm}[t]
            \begin{minipage}{0.95\linewidth}
            \centering
            \caption{Robust LRA algorithm for $b\ge 2$. }
            \begin{algorithmic}[1]\label{alg:RX-est-multiary-robust}
                \REQUIRE~$\{\boldsymbol{v}_t$, $t = 1, \ldots, T\}$, $\boldsymbol{l}\in\mathbb{R}^{T\times 1}$, noise variance $\sigma^2$, penalty parameter $\rho$, and threshold $\epsilon$. 
                \STATE~Initialization: Compute $\boldsymbol{\zeta}$ and $\boldsymbol{\xi}$; solve $\boldsymbol{H}^{(0)}$ and $\boldsymbol{\delta}^{(0)}$ from problem~\eqref{prob:cov-est-tracemin-robust-multiary}; iteration index $i\gets 1$. 
                \WHILE{$g(\boldsymbol{H}^{(i - 1)}) \le \epsilon$}
                    \STATE~Let $\boldsymbol{x}^{(i)}$ be the normalized eigenvector of $\boldsymbol{H}^{(i -1)}$ corresponding to the largest eigenvalue, and $\boldsymbol{X}^{(i)} \gets \boldsymbol{x}^{(i)}{\boldsymbol{x}^{(i)}}^H$. 
                    \STATE~Given $\boldsymbol{X} = \boldsymbol{X}^{(i)}$, solve $\boldsymbol{G}^{(i)}$ and $\gamma_1$ from problem~\eqref{prob:cov-est-maxmax-cov-robust-multiary-fp} with $\boldsymbol{\delta} = \boldsymbol{\delta}^{(i - 1)}$ fixed, and then solve $\boldsymbol{\delta}^{(i)}$ and $\gamma^{(i)}$ with $\boldsymbol{G} = \boldsymbol{G}^{(i)}$ fixed. Obtain $\boldsymbol{H}^{(i)} \gets \boldsymbol{G}^{(i)}/\gamma^{(i)}$. 
                    \STATE~$i\gets i + 1$. 
                \ENDWHILE
                \RETURN~The estimated matrix $\hat{\boldsymbol{H}}\gets\boldsymbol{H}^{(i - 1)}$. 
            \end{algorithmic}
            \end{minipage}
        \end{algorithm}

    \vspace{-6pt}
    \subsection{Robust LRA Algorithm}\label{subsec:ratio-max-est-robust}
    \vspace{-6pt}
        In this subsection, the robust extension of the LRA algorithm is proposed. 
        The eigenvalue-ratio function defined in~\eqref{def:lambda-ratio-func} and the generalized eigenvalue-ratio function defined in~\eqref{def:generalized-lambda-ratio-func} are added to the objective functions of problems~\eqref{prob:cov-est-robust-origin} and~\eqref{prob:cov-est-robust-origin-binary}, respectively, as relaxations for the rank constraint. 
        Specifically, for $b\ge 2$, problem~\eqref{prob:cov-est-robust-origin} is relaxed to
        \begin{subequations}\label{prob:cov-est-maxmax-robust-multiary}
            \allowdisplaybreaks
            \begin{align}
                & \mathop{\max_{\boldsymbol{H}, \boldsymbol{\delta}\ge\boldsymbol{0}_{T_{p}\times 1}}\max_{\|\boldsymbol{x}\|_2\le 1}} \ f(\boldsymbol{H}, \boldsymbol{x}) - \rho\|\boldsymbol{\delta}\|_2^2 \tag{\ref{prob:cov-est-maxmax-robust-multiary}} \\
                & ~ \mathrm{s.t.} \ \zeta_t - \delta_t\le p_0\text{tr}(\boldsymbol{H}\boldsymbol{V}_t) + \sigma^2\le \xi_t + \delta_t, \ \forall t, \label{prob:cov-est-maxmax-robust-multiary-power} \\
                & ~~~~~~ {\boldsymbol{H}}\in\mathbb{S}_{+}^{N}, \label{prob:cov-est-maxmax-robust-multiary-semidefinite} 
            \end{align}
        \end{subequations}
        where $\rho > 0$ is a predefined parameter. 
        Compared to problem~\eqref{prob:cov-est-maxmax}, problem~\eqref{prob:cov-est-maxmax-robust-multiary} approaches a low-rank solution and minimizes the penalty term $\|\boldsymbol{\delta}\|_2^2$ simultaneously, thus leading to a robust estimation of the channel autocorrelation matrix. 
        The alternating optimization method can be applied similar to that for solving problem~\eqref{prob:cov-est-maxmax}, while the optimization for $\boldsymbol{H}$ should be modified to decrease $\|\boldsymbol{\delta}\|_2^2$. 
        Given matrix $\boldsymbol{X} = \boldsymbol{x}\boldsymbol{x}^H$, problem~\eqref{prob:cov-est-maxmax-robust-multiary} is simplified as 
        \begin{equation}\label{prob:cov-est-maxmax-cov-robust-multiary}
            \mathop{\max_{\boldsymbol{H}, \boldsymbol{\delta}\ge\boldsymbol{0}_{T_{p}\times 1}}} \ \frac{\text{tr}(\boldsymbol{H}\boldsymbol{X})}{\text{tr}(\boldsymbol{H})} - \rho\|\boldsymbol{\delta}\|_2^2, ~~ \mathrm{s.t.} \ \eqref{prob:cov-est-maxmax-robust-multiary-power},\eqref{prob:cov-est-maxmax-robust-multiary-semidefinite}. 
        \end{equation}
        With $\boldsymbol{G} = \boldsymbol{H}/\text{tr}(\boldsymbol{H})$ and $\gamma = 1/\text{tr}(\boldsymbol{H})$, problem~\eqref{prob:cov-est-maxmax-cov-robust-multiary} can be transformed into
        \begin{subequations}\label{prob:cov-est-maxmax-cov-robust-multiary-fp}
            \allowdisplaybreaks
            \begin{align}
                & \mathop{\max_{\boldsymbol{G}, \boldsymbol{\delta}\ge\boldsymbol{0}_{T_{p}\times 1}, \gamma > 0}} \ \text{tr}(\boldsymbol{G}\boldsymbol{X}) - \rho\|\boldsymbol{\delta}\|_2^2 \tag{\ref{prob:cov-est-maxmax-cov-robust-multiary-fp}} \\
                & ~ \mathrm{s.t.} \ \zeta_t - \delta_t\le p_0\text{tr}(\boldsymbol{G}\boldsymbol{V}_t)/\gamma + \sigma^2 \le \xi_t + \delta_t, \ \forall t, \label{prob:cov-est-maxmax-cov-robust-multiary-fp-power} \\
                & ~~~~~~ {\boldsymbol{G}}\in\mathbb{S}_{+}^{N}, \ \text{tr}(\boldsymbol{G}) = 1. \label{prob:cov-est-maxmax-cov-robust-multiary-fp-semidefinite}
            \end{align}
        \end{subequations}
        Although problem~\eqref{prob:cov-est-maxmax-cov-robust-multiary-fp} is nonconvex, it can be transformed into a convex one with respect to any two of the variables $\boldsymbol{G}, \boldsymbol{\delta}$, and $\gamma$, with the third variable given and fixed. 
        To obtain an approximate solution, we first solve $\boldsymbol{G}$ and $\gamma$ in problem~\eqref{prob:cov-est-maxmax-cov-robust-multiary-fp} with $\boldsymbol{\delta}$ being fixed to maximize the eigenvalue-ratio function, and then solve $\gamma$ and $\boldsymbol{\delta}$ with $\boldsymbol{G}$ being fixed to minimize the penalty term $\|\boldsymbol{\delta}\|_2^2$. 
        For initialization, the robust trace-minimization problem is employed as
        \begin{equation}\label{prob:cov-est-tracemin-robust-multiary}
                \mathop{\min_{\boldsymbol{H}, \boldsymbol{\delta}\ge\boldsymbol{0}_{T_{p}\times 1}}} \ \text{tr}(\boldsymbol{H}) + \rho\|\boldsymbol{\delta}\|_2^2, ~~ \mathrm{s.t.} \ \eqref{prob:cov-est-maxmax-robust-multiary-power},\eqref{prob:cov-est-maxmax-robust-multiary-semidefinite}. 
        \end{equation}
        The robust LRA algorithm for $b\ge 2$ is summarized in Algorithm~\ref{alg:RX-est-multiary-robust}. 
        It is easy to verify that its computational complexity is $\mathcal{O}(N^{4.5}I_3)$, which is the same as that of Algorithm~\ref{alg:RX-est-multiary}, and the convergence of the algorithm is guaranteed, where $I_3$ is the total number of iterations for Algorithm~\ref{alg:RX-est-multiary-robust}. 
        For $i\ge 1$, $(\boldsymbol{G}^{(i)}, \gamma_1)$ is solved from problem~\eqref{prob:cov-est-maxmax-cov-robust-multiary-fp} with $\boldsymbol{\delta} = \boldsymbol{\delta}^{(i - 1)}$ being fixed, i.e., $\text{tr}(\boldsymbol{G}\boldsymbol{X}^{(i)})$ is maximized subject to constraints~\eqref{prob:cov-est-maxmax-cov-robust-multiary-fp-power} and~\eqref{prob:cov-est-maxmax-cov-robust-multiary-fp-semidefinite} with $\boldsymbol{\delta} = \boldsymbol{\delta}^{(i - 1)}$. 
        Meanwhile, since $(\boldsymbol{G}^{(i - 1)}, \gamma^{(i - 1)}, \boldsymbol{\delta}^{(i - 1)})$ is feasible to problem~\eqref{prob:cov-est-maxmax-cov-robust-multiary-fp}, we have $\text{tr}(\boldsymbol{G}^{(i)}\boldsymbol{X}^{(i)}) \ge \text{tr}(\boldsymbol{G}^{(i - 1)}\boldsymbol{X}^{(i)})$. 
        Additionally, it is easy to verify that $f(\boldsymbol{H}^{(i)}, \boldsymbol{x}) = \text{tr}(\boldsymbol{G}^{(i)}\boldsymbol{X})$ for any $\boldsymbol{x}$ and $\boldsymbol{X} = \boldsymbol{x}\boldsymbol{x}^H$. 
        Therefore, we have
        \begin{subequations}\label{prob:cov-est-robust-conv-multiary}
            \allowdisplaybreaks
            \begin{align}
                g(\boldsymbol{H}^{(i)}) & \ge f(\boldsymbol{H}^{(i)}, \boldsymbol{x}^{(i)}) = \text{tr}(\boldsymbol{G}^{(i)}\boldsymbol{X}^{(i)}) \\
                & \ge \text{tr}(\boldsymbol{G}^{(i - 1)}\boldsymbol{X}^{(i)}) = f(\boldsymbol{H}^{(i - 1)}, \boldsymbol{x}^{(i)})  \\
                & = g(\boldsymbol{H}^{(i - 1)}), ~ i = 1, \ldots, I_3, 
            \end{align}
        \end{subequations}
        which guarantees the convergence of Algorithm~\ref{alg:RX-est-multiary-robust}.

        Similarly, for $b = 1$, problem~\eqref{prob:cov-est-robust-origin-binary} is relaxed to 
        \begin{subequations}\label{prob:cov-est-maxmax-robust-binary}
            \allowdisplaybreaks
            \begin{align}
                & \mathop{\max_{\boldsymbol{H}_r, \boldsymbol{\delta}\ge\boldsymbol{0}_{T_{p}\times 1}}\max_{\boldsymbol{x}_1, \boldsymbol{x}_2}} \ f_r(\boldsymbol{H}_r, \boldsymbol{x}_1, \boldsymbol{x}_2) - \rho\|\boldsymbol{\delta}\|_2^2 \tag{\ref{prob:cov-est-maxmax-robust-binary}} \\
                & ~ \mathrm{s.t.} \ \|\boldsymbol{x}_1\|_2 = 1,~\|\boldsymbol{x}_2\|_2 = 1,~\boldsymbol{x}_1^T\boldsymbol{x}_2 = 0, \label{prob:eigenvec-ortho} \\
                & ~~~~~~ \zeta_t - \delta_t\le p_0\text{tr}(\boldsymbol{H}_r\boldsymbol{V}_t) + \sigma^2\le \xi_t + \delta_t, \ \forall t, \label{prob:cov-est-maxmax-robust-binary-power} \\
                & ~~~~~~ {\boldsymbol{H}_r}\in\mathbb{M}_{+}^{N}. \label{prob:cov-est-maxmax-robust-binary-semidefinite} 
            \end{align}
        \end{subequations}
        By applying alternating optimization, the optimization of $\boldsymbol{H}_r$ given matrix $\boldsymbol{X} = \boldsymbol{x}_1\boldsymbol{x}_1^T + \boldsymbol{x}_2\boldsymbol{x}_2^T$ is expressed as
        \begin{equation}\label{prob:cov-est-maxmax-cov-robust-binary}
                \mathop{\max_{\boldsymbol{H}_r, \boldsymbol{\delta}\ge\boldsymbol{0}_{T_{p}\times 1}}} \ \frac{\text{tr}(\boldsymbol{H}_r\boldsymbol{X})}{\text{tr}(\boldsymbol{H}_r)} - \rho\|\boldsymbol{\delta}\|_2^2, ~~ \mathrm{s.t.} \ \eqref{prob:cov-est-maxmax-robust-binary-power},~\eqref{prob:cov-est-maxmax-robust-binary-semidefinite}. 
        \end{equation}
        Problem~\eqref{prob:cov-est-maxmax-cov-robust-binary} has the same form as problem~\eqref{prob:cov-est-maxmax-cov-robust-multiary} and thus it can be solved similarly. 
        For initialization, the following problem is considered: 
        \begin{equation}\label{prob:cov-est-tracemin-robust-binary}
            \mathop{\max_{\boldsymbol{H}_r, \boldsymbol{\delta}\ge\boldsymbol{0}_{T_{p}\times 1}}} \ \text{tr}(\boldsymbol{H}_r) + \rho\|\boldsymbol{\delta}\|_2^2, ~~ \mathrm{s.t.} \ \eqref{prob:cov-est-maxmax-robust-binary-power},~\eqref{prob:cov-est-maxmax-robust-binary-semidefinite}. 
        \end{equation}
        The robust LRA algorithm for $b = 1$ is summarized in Algorithm~\ref{alg:RX-est-binary-robust}. 
        By denoting the total number of iterations as $I_4$, its computational complexity can be obtained as $\mathcal{O}(N^{4.5}I_4)$, and the convergence of Algorithm~\ref{alg:RX-est-binary-robust} is ensured by the non-decreasing property of the sequence $g_r(\boldsymbol{H}_r^{(i)})$, $i = 1, \ldots, I_4$. 
        
        \begin{algorithm}[t]
            \begin{minipage}{0.95\linewidth}
            \centering
            \caption{Robust LRA algorithm for $b = 1$. }
            \begin{algorithmic}[1]\label{alg:RX-est-binary-robust}
                \REQUIRE~$\{\boldsymbol{v}_t$, $t = 1, \ldots, T\}$, $\boldsymbol{l}\in\mathbb{R}^{T\times 1}$, noise variance $\sigma^2$, penalty parameter $\rho$, and threshold $\epsilon$. 
                \STATE~Initialization: Compute $\boldsymbol{\zeta}$ and $\boldsymbol{\xi}$; solve $\boldsymbol{H}_r^{(0)}$ and $\boldsymbol{\delta}^{(0)}$ from problem~\eqref{prob:cov-est-tracemin-robust-binary}; iteration index $i\gets 1$. 
                \WHILE{$g(\boldsymbol{H}^{(i - 1)}) \le \epsilon$}
                    \STATE~Let $\boldsymbol{x}_1^{(i)}$ and $\boldsymbol{x}_2^{(i)}$ be the normlized eigenvectors of $\boldsymbol{H}_r^{(i - 1)}$ corresponding to the first and second largest eigenvalues, and $\boldsymbol{X}^{(i)} \gets \boldsymbol{x}_1^{(i)}{\boldsymbol{x}_1^{(i)}}^T + \boldsymbol{x}_2^{(i)}{\boldsymbol{x}_2^{(i)}}^T$. 
                    \STATE~Given $\boldsymbol{X} = \boldsymbol{X}^{(i)}$, solve $\boldsymbol{H}_r^{(i)}$ from problem~\eqref{prob:cov-est-maxmax-cov-robust-binary}. 
                    \STATE~$i\gets i + 1$. 
                \ENDWHILE
                \RETURN~The estimated matrix $\hat{\boldsymbol{H}}_r\gets\boldsymbol{H}_r^{(i - 1)}$. 
            \end{algorithmic}
            \end{minipage}
        \end{algorithm}

    \vspace{-6pt}
    \subsection{Robust ALRA Algorithm}\label{subsec:dist-min-est-robust}
        In this subsection, the robust ALRA algorithm is proposed with a lower computational complexity. 
        Similar to that in Section~\ref{sec:proposed-dist-min}, the Euclidean distance functions are employed as approximations of the eigenvalue-ratio functions of the robust LRA algorithm. 
        Specifically, given $\boldsymbol{X}$ for $b\ge 2$, the optimization of $\boldsymbol{H}$ (i.e., problem~\eqref{prob:cov-est-maxmax-cov-robust-multiary}) is approximated by 
        \begin{subequations}\label{prob:cov-est-dist-min-robust-multiary}
            \begin{align}
                & \mathop{\min_{\boldsymbol{H}, \boldsymbol{\delta}\ge\boldsymbol{0}_{T_{p}\times 1}, \mu}}\ \|\boldsymbol{H} - \mu\boldsymbol{X}\|_F^2 + \rho\|\boldsymbol{\delta}\|_2^2 \tag{\ref{prob:cov-est-dist-min-robust-multiary}} \\
                & ~ \mathrm{s.t.} \ \zeta_t - \delta_t \le p_0\text{tr}(\boldsymbol{H}\boldsymbol{V}_t) + \sigma^2 \le \xi_t + \delta_t, \forall t, \label{prob:cov-est-dist-min-robust-multiary-power} \\
                & ~~~~~~ \boldsymbol{H}^H = \boldsymbol{H}. \label{prob:cov-est-dist-min-robust-multiary-hermitian}
            \end{align}
        \end{subequations}
        Note that problem~\eqref{prob:cov-est-dist-min-robust-multiary} is a quadratic minimization problem with inequality constraints, which is convex but the optimal solution cannot be obtained in closed form. 
        To derive an explicit solution, we define $\hat{q}_t = (\zeta_t + \xi_t) / 2$, and approximate constraint~\eqref{prob:cov-est-dist-min-robust-multiary-power} with $\hat{q}_t - \delta_t \le p_0\text{tr}(\boldsymbol{H}\boldsymbol{V}_t) + \sigma^2 \le \hat{q}_t + \delta_t$, $\forall t$. 
        Then, problem~\eqref{prob:cov-est-dist-min-robust-multiary} can be further approximated by 
        \begin{subequations}\label{prob:cov-est-dist-min-approx-robust-multiary}
            \begin{align}
                & \mathop{\min_{\boldsymbol{H}, \mu}}\ \|\boldsymbol{H} - \mu\boldsymbol{X}\|_F^2 + \rho\sum_{t = 1}^{T_{p}}{\left|p_0\text{tr}(\boldsymbol{H}\boldsymbol{V}_t) + \sigma^2 - \hat{q}_t\right|^2} \tag{\ref{prob:cov-est-dist-min-approx-robust-multiary}} \\
                & ~ \mathrm{s.t.} \ \boldsymbol{H}^H = \boldsymbol{H}, \label{prob:cov-est-dist-min-approx-robust-multiary-hermitian}
            \end{align}
        \end{subequations}
        which is a quadratic minimization problem with linear constraints. 
        By leveraging the transformation $\mathcal{M}$, problem~\eqref{prob:cov-est-dist-min-approx-robust-multiary} can be equivalently written as 
        \begin{equation}\label{prob:dist-min-approx-robust-multiary-decomp}
                \mathop{\min_{\boldsymbol{w}, \mu}}\ \|\boldsymbol{w} - \mu\boldsymbol{w}_{X}\|_2^2 + \rho\|\boldsymbol{C}^T\boldsymbol{w} - (\hat{\boldsymbol{q}} - \sigma^2\boldsymbol{1}_{T_{p}})\|_2^2, 
        \end{equation}
        with $\hat{\boldsymbol{q}} = [\hat{q}_1, \ldots, \hat{q}_{T_{p}}]^T$. 
        By taking the first derivative of the objective function, the optimal solutions can be obtained in closed form as 
        \begin{subequations}\label{eq:dist-min-approx-robust-multiary-solution}
            \allowdisplaybreaks
            \begin{align}
                \mu^{\star} & = \frac{\rho}{\|\boldsymbol{\psi}\|_2^2}\boldsymbol{\psi}^T\boldsymbol{\Upsilon}\boldsymbol{C}(\hat{\boldsymbol{q}} - \sigma^2\boldsymbol{1}_{T_{p}}), \\
                \boldsymbol{w}^{\star} & = \rho\boldsymbol{\Upsilon}\boldsymbol{C}(\hat{\boldsymbol{q}} - \sigma^2\boldsymbol{1}_{T_{p}}) + \mu^{\star}\boldsymbol{\Upsilon}\boldsymbol{w}_{X}, \\
                \boldsymbol{\psi} & = \left(\boldsymbol{I}_{N^2} - \boldsymbol{\Upsilon}\right)\boldsymbol{w}_{X}, \\
                \boldsymbol{\Upsilon} & = \left(\boldsymbol{I}_{N^2} + \rho\boldsymbol{C}\boldsymbol{C}^T\right)^{-1}, 
            \end{align}
        \end{subequations}
        and thus the optimal solution for problem~\eqref{prob:cov-est-dist-min-approx-robust-multiary} is given by $\boldsymbol{H}^{\star} = \mathcal{M}^{-1}(\boldsymbol{w}^{\star})$. 
        Therefore, the robust ALRA algorithm for $b\ge 2$ can be implemented by replacing line $4$ in Algorithm~\ref{alg:RX-est-multiary-robust} with the optimal solution for problem~\eqref{prob:cov-est-dist-min-approx-robust-multiary} given $\boldsymbol{X}$ from the previous iteration, i.e., $\boldsymbol{H}^{\star} = \mathcal{M}^{-1}(\boldsymbol{w}^{\star})$, where $\boldsymbol{w}^{\star}$ is given by~\eqref{eq:dist-min-approx-robust-multiary-solution}. 
        Besides, the optimal solution $\boldsymbol{H}^{\star}$ for problem~\eqref{prob:cov-est-dist-min-approx-robust-multiary} given $\boldsymbol{X} = \boldsymbol{0}_{N^2\times N^2}$ is employed as the initialization. 
        
        Similarly, for $b = 1$, the robust estimation of $\boldsymbol{H}_r$ can be realized by converting problem~\eqref{prob:cov-est-dist-min-relaxed-binary} to the following problem:
        \begin{subequations}\label{prob:cov-est-dist-min-approx-robust-binary}
            \begin{align}
                & \begin{aligned}
                    \mathop{\min_{\boldsymbol{H}_r, \boldsymbol{\mu}}}\ & \|\boldsymbol{H}_r - (\mu_1\boldsymbol{X}_1 + \mu_2\boldsymbol{X}_2)\|_F^2 \\
                    & ~~~~~~~~~~ + \rho\sum_{t = 1}^{T_{p}}{\left|p_0\text{tr}(\boldsymbol{H}_r\boldsymbol{V}_t) + \sigma^2 - \hat{q}_t\right|^2}
                \end{aligned}
                \tag{\ref{prob:cov-est-dist-min-approx-robust-binary}} \\
                & ~ \mathrm{s.t.} \ \boldsymbol{H}_r^T = \boldsymbol{H}_r. \label{prob:cov-est-dist-min-approx-robust-binary-hermitian}
            \end{align}
        \end{subequations}
        According to the derivations for solving problem~\eqref{prob:cov-est-dist-min-approx-robust-multiary}, the optimal solution for problem~\eqref{prob:cov-est-dist-min-approx-robust-binary} can be obtained as $\boldsymbol{H}_r^{\star} = \mathcal{M}^{-1}(\boldsymbol{w}_r^{\star})$, where $\boldsymbol{w}_r^{\star}$ is solved from 
        \begin{equation}\label{prob:dist-min-approx-robust-binary-decomp}
            \mathop{\min_{{\boldsymbol{w}_r}, \boldsymbol{\mu}}}\ \|{\boldsymbol{w}_r} - {\boldsymbol{W}}_{X}\boldsymbol{\mu}\|_2^2 + \rho\|{\boldsymbol{C}}^T{\boldsymbol{w}_r} - (\hat{\boldsymbol{q}} - \sigma^2\boldsymbol{1}_{T_{p}})\|_2^2. 
        \end{equation}
        Specifically, the optimal solutions ${\boldsymbol{w}}_r^{\star}$ and $\boldsymbol{\mu}^{\star}$ are given by
        \begin{subequations}\label{eq:dist-min-approx-robust-binary-solution}
            \allowdisplaybreaks
            \begin{align}
                \boldsymbol{\mu}^{\star} & = \rho\left({\boldsymbol{\Psi}}^T{\boldsymbol{\Psi}}\right)^{\dagger}{\boldsymbol{\Psi}}^T{\boldsymbol{\Upsilon}}{\boldsymbol{C}}(\hat{\boldsymbol{q}} - \sigma^2\boldsymbol{1}_{T_{p}}), \\
                {\boldsymbol{w}}_r^{\star} & = \rho{\boldsymbol{\Upsilon}}{\boldsymbol{C}}(\hat{\boldsymbol{q}} - \sigma^2\boldsymbol{1}_{T_{p}}) + {\boldsymbol{\Upsilon}}{\boldsymbol{W}}_{X}\boldsymbol{\mu}^{\star}, \\
                {\boldsymbol{\Psi}} & = \left(\boldsymbol{I}_{N^2} - {\boldsymbol{\Upsilon}}\right){\boldsymbol{W}}_{X}, \\
                {\boldsymbol{\Upsilon}} & = \left(\boldsymbol{I}_{N^2} + \rho{\boldsymbol{C}}{\boldsymbol{C}}^T\right)^{-1}. 
            \end{align}
        \end{subequations}
        Thus, the robust ALRA algorithm for $b = 1$ is implemented by replacing line $4$ in Algorithm~\ref{alg:RX-est-binary-robust} with the optimal solution for problem~\eqref{prob:cov-est-dist-min-approx-robust-binary} given $\boldsymbol{X}_1$ and $\boldsymbol{X}_2$ from the previous iteration, i.e., $\boldsymbol{H}_r^{\star} = \mathcal{M}^{-1}(\boldsymbol{w}_r^{\star})$, where $\boldsymbol{w}_r^{\star}$ is given by~\eqref{eq:dist-min-approx-robust-binary-solution}. 
        The initialization is obtained as $\boldsymbol{H}_r^{\star}$ with $\boldsymbol{X}_1 = \boldsymbol{X}_2 = \boldsymbol{0}_{N^2\times N^2}$. 

        Note that $\boldsymbol{\Upsilon}\in\mathbb{R}^{N^2\times N^2}$ can be calculated offline since $\rho$ and $\boldsymbol{C}$ are known. 
        Thus, the online complexity of the robust estimation for the channel autocorrelation matrix is dominated by that of the multiplication by $\boldsymbol{\Upsilon}$ to vectors, which is $\mathcal{O}(N^4)$. 
        To further reduce the complexity, we apply the Woodbury matrix identity~\cite{ref:woodbury-identity} to $\boldsymbol{\Upsilon}$ such that we have
        \begin{subequations}
            \allowdisplaybreaks
            \begin{align}
                \boldsymbol{\Upsilon} & = \left(\boldsymbol{I}_{N^2} + \rho\boldsymbol{C}\boldsymbol{C}^T\right)^{-1} \\
                & = \boldsymbol{I}_{N^2} - \boldsymbol{C}\left({\rho}^{-1}\boldsymbol{I}_{T_{p}^2} + \boldsymbol{C}^T\boldsymbol{C}\right)^{-1}\boldsymbol{C}^T. 
            \end{align}
        \end{subequations}
        Given $\boldsymbol{C}\in\mathbb{R}^{N^2\times T_{p}}$, matrix $\boldsymbol{S} = ({\rho}^{-1}\boldsymbol{I}_{T_{p}^2} + \boldsymbol{C}^T\boldsymbol{C})^{-1}\in\mathbb{R}^{T_{p}\times T_{p}}$ can be calculated offline. 
        Then, the multiplication to a vector by $\boldsymbol{\Upsilon} = \boldsymbol{I}_{N^2} - \boldsymbol{C}\boldsymbol{S}\boldsymbol{C}^T$ can be implemented with complexity $\mathcal{O}(N^2T_{p})$. 
        Such implementation can be applied to both cases of $b \ge 2$ and $b = 1$. 
        Therefore, the computational complexity of the robust ALRA algorithm is $\mathcal{O}(N^2T_{p}I_3')$ for $b\ge 2$ and $\mathcal{O}(N^2T_{p}I_4')$ for $b = 1$, where $I_3'$ and $I_4'$ are the total number of iterations for the robust ALRA algorithm for $b\ge 2$ and $b = 1$, respectively. 
        {\color{\highlightcolor}The complexity is thus significantly reduced compared to the robust LRA algorithms and trace-minimization method~\cite{ref:PhaseLift}. }

        Next, the convergence of the proposed robust ALRA algorithm is analyzed. 
        For $b\ge 2$, it can be shown that the objective function of problem~\eqref{prob:cov-est-dist-min-approx-robust-multiary}, which is denoted as $\varphi(\boldsymbol{H}, \mu, \boldsymbol{X})$ and can be seen as a penaltized version of the squared distance function $d_e^2(\boldsymbol{H}, \mu\boldsymbol{X})$, is non-increasing during the iterations. 
        In the $i$-th iteration for $b\ge 2$, $\boldsymbol{x}^{(i)}$ is the eigenvector of $\boldsymbol{H}^{(i - 1)}$ corresponding to the largest eigenvalue $\lambda_1^{(i - 1)}$, and thus~\eqref{eq:closest-rank-one-matrix} still holds, i.e., $\lambda_1^{(i - 1)}\boldsymbol{X}^{(i)}$ is the rank-one matrix closest to $\boldsymbol{H}^{(i - 1)}$. 
        Moreover, $\boldsymbol{H}^{(i)}$ and $\mu^{(i)}$ are solved optimally from problem~\eqref{prob:cov-est-dist-min-approx-robust-multiary} given $\boldsymbol{X}^{(i)}$. 
        Thus, it is easy to verify that 
        \begin{subequations}\label{eq:converg-dm-robust-multiary}
            \begin{align}
                \varphi(\boldsymbol{H}^{(i)}, \mu^{(i)}, \boldsymbol{X}^{(i)}) & \le \varphi(\boldsymbol{H}^{(i - 1)}, \lambda_1^{(i - 1)}, \boldsymbol{X}^{(i)}) \\
                & \le \varphi(\boldsymbol{H}^{(i - 1)}, \mu^{(i - 1)}, \boldsymbol{X}^{(i - 1)})
            \end{align}
        \end{subequations}
        always holds for $i = 1, \cdots, I_3'$. 
        Since $\varphi(\boldsymbol{H}, \mu, \boldsymbol{X}) \ge 0$, the robust ALRA algorithm converges for $b\ge 2$. 
        The similar analysis applies to the case of $b = 1$, where the objective function of problem~\eqref{prob:cov-est-dist-min-approx-robust-binary}, denoted as $\varphi_r(\boldsymbol{H}_r, \boldsymbol{\mu}, \boldsymbol{X}_1, \boldsymbol{X}_2)$, is also non-increasing during the iterations.


\section{Performance Evaluation}\label{sec:performance-evaluation}
    \subsection{Simulation Setup}\label{subsec:setup}
        In the simulation, the BS and the IRS are located at $(50, -150, 20)$ and $(-2, -1, 0)$ in meters (m) in a three-dimensional coordinate system, respectively. 
        The location of the user is randomly generated as $(x_{u}, y_{u}, 0)$ with $0\le x_u, y_u\le 10$. 
        The size of IRS is set as $N_x\times N_z = 8\times 8 = 64$, and thus we have $N = 64 + 1 = 65$. 
        The path loss model for all channels is given by $\eta = C_{0}d^{-\alpha}$, where $d$ is the signal propagation distance, while $C_{0}$ and $\alpha$ are the channel gain at the reference distance of $1$ m and the path-loss exponent, which are denoted for the BS-user, BS-IRS, and IRS-user links as $C_{0, BU}$ and $\alpha_{BU}$, $C_{0, BI}$ and $\alpha_{BI}$, and $C_{0, IU}$ and $\alpha_{IU}$, respectively. 
        The corresponding path loss coefficients are denoted as $\eta_{BU}$, $\eta_{BI}$ and $\eta_{IU}$, respectively. 
        For all channels, Rician fading is assumed with the Rician factor denoted as $\beta_{BU}$, $\beta_{BI}$ and $\beta_{IU}$ for the BS-user, BS-IRS and IRS-user links, respectively. 
        Specifically, the expression of the BS-IRS channel vector $\boldsymbol{g}$ is given below as an example:
        \begin{equation}\label{def:rician-channel}
            \boldsymbol{g} = \sqrt{\frac{\beta_{BI}}{1 + \beta_{BI}}}\boldsymbol{g}^{\text{LoS}} + \sqrt{\frac{1}{1 + \beta_{BI}}}\boldsymbol{g}^{\text{NLoS}}, 
        \end{equation}
        where the vector $\boldsymbol{g}^{\text{NLoS}} = \sqrt{\eta_{BI}}\tilde{\boldsymbol{g}}$ is the Gaussian non-line-of-sight (NLoS) component with $\tilde{\boldsymbol{g}}\sim\mathcal{CN}(\boldsymbol{0}_{N_{irs}\times 1}, \boldsymbol{I}_{N_{irs}})$ and $\boldsymbol{g}^{\text{LoS}}$ is the deterministic line-of-sight (LoS) component given by 
        \begin{equation}\label{def:mmW-channel}
                \boldsymbol{g}^{\text{LoS}} = \sqrt{\eta_{BI}}\boldsymbol{b}_{N}(\omega, \psi). 
        \end{equation}
        Vector $\boldsymbol{b}_{N}(\omega, \psi)\in\mathbb{C}^{N\times 1}$ is the steering vector of the LoS path from the BS to the IRS, 
        where $\omega\in[0, \pi)$ and $\psi\in[0, \pi)$ are the physical azimuth and elevation angles of arrival (AoAs) at the IRS, respectively. 
        Specifically, define $\boldsymbol{a}_{N}(\phi) = [e^{j\pi 0\phi}, \ldots, e^{j\pi (N - 1)\phi}]^T\in\mathbb{C}^{N\times 1}$ as the $N$-dimensional steering vector. Then, $\boldsymbol{b}_{N}(\omega, \psi)$ is defined as $\boldsymbol{b}_{N}(\omega, \psi) = \boldsymbol{a}_{N_x}(\cos{(\omega)}\sin{(\psi)})\otimes\boldsymbol{a}_{N_z}(\cos{(\psi)})$, where $\otimes$ is the Kronecker product. 
        In addition, we set $C_{0, BU} = -33$ dB, $C_{0, BI} = C_{0, IU} = -30$ dB, $\alpha_{BU} = 3.7$, $\alpha_{BI} = \alpha_{IU} = 2$, $\beta_{BU} = 0$, $\beta_{BI} = 10$, $\beta_{IU} = 1$ and $p_0 = 30$ dBm, $\sigma^2 = -90$ dBm, $\epsilon = 0.95$ and $\rho = 10$, unless specified otherwise. 
        {\color{\highlightcolor}The IRS reflection vectors $\{\boldsymbol{v}_t, t = 1, \ldots, T_{p}\}$ are randomly generated subject to the discrete phase shift constraint. }
        Each point in simulation results is averaged over $1000$ random user locations and channel realizations. 


        \begin{figure}[t!]
            \centering
            {
                \begin{subfigure}[t]{0.22\textwidth}
                    \centering
                    \includegraphics[scale = 0.52]{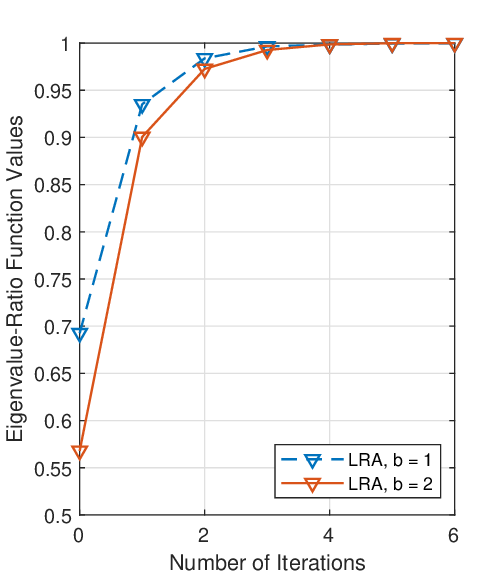}
                    \caption{LRA convergence. }
                    \label{subfig:AC-conv-RX-N64b12}
                \end{subfigure}
                \hspace{-1pt}
                \begin{subfigure}[t]{0.22\textwidth}
                    \centering
                    \includegraphics[scale = 0.52]{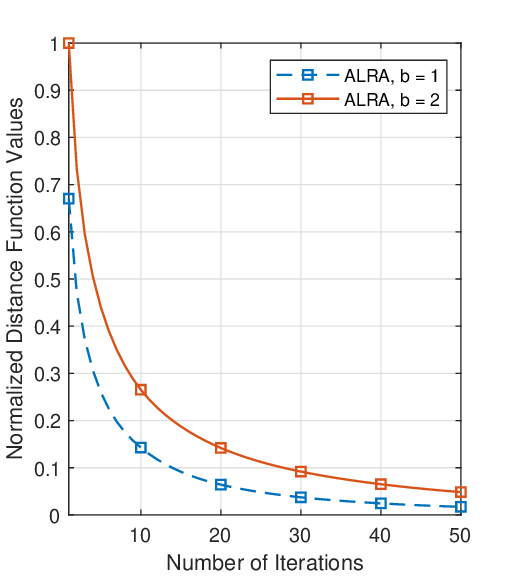}
                    \caption{ALRA convergence. }
                    \label{subfig:AC-conv-DM-N64b12}
                \end{subfigure}
            }
            \vspace{-2pt}
            \caption{Convergence of the LRA and ALRA algorithms assuming perfect power measurement ($T_{p} = 65$). }
            \label{fig:AC-conv-N64b12}
        \end{figure}

    \vspace{-6pt}
    \subsection{Performance with Perfect Power Measurement}\label{subsec:perf-accurate-measure}
        In this subsection, the performance of both LRA and ALRA algorithms is evaluated assuming perfect received signal power measurement as considered in Sections~\ref{sec:proposed-ratio-max} and~\ref{sec:proposed-dist-min}. 

        \subsubsection{Algorithm Convergence}\label{subsubsec:convergence-accurate-measure}
            As discussed in Sections~\ref{sec:proposed-ratio-max} and~\ref{sec:proposed-dist-min}, the convergence of the proposed algorithms is guaranteed with the alternating optimization process. 
            In particular, for the LRA algorithm, the eigenvalue-ratio function and the generalized eigenvalue-ratio function, i.e., $g(\boldsymbol{H}^{(i)})$ in~\eqref{def:lambda-ratio-func} for $b\ge 2$ and $g_r(\boldsymbol{H}_r^{(i)})$ in~\eqref{def:generalized-lambda-ratio-func} for $b = 1$, are non-decreasing with $i$ and upper-bounded by $1$ for the LRA algorithm. 
            On the other hand, the Euclidean distance functions for the ALRA algorithm, i.e., $d_e(\boldsymbol{H}^{(i)}, \mu^{(i)}\boldsymbol{X}^{(i)})$ for $b\ge 2$ and $d_e(\boldsymbol{H}_r^{(i)}, \mu_1^{(i)}\boldsymbol{X}_1^{(i)} + \mu_2^{(i)}\boldsymbol{X}_2^{(i)})$ for $b = 1$ as defined in~\eqref{def:euclidean-distance-func}, are positive and non-increasing with $i$. 
            In Figs.~\ref{subfig:AC-conv-RX-N64b12} and~\ref{subfig:AC-conv-DM-N64b12}, the values of eigenvalue-ratio functions and the normalized distance functions over the iterations for both $b = 1$ and $b = 2$ are shown for LRA and ALRA algorithms, respectively. 
            The total number of power measurement is set as $T_{p} = 65$. 
            It is observed from Fig.~\ref{subfig:AC-conv-RX-N64b12} that the eigenvalue-ratio function values start from small values with the initializations, and converge to $1$ within $6$ iterations, while the ALRA algorithm converges after $50$ iterations, as is shown in Fig.~\ref{subfig:AC-conv-DM-N64b12}. 
            This verifies that the estimated channel autocorrelation matrices obtained by the proposed LRA and ALRA algorithms iteratively approach a rank-one matrix for $b = 2$ and a rank-two matrix for $b = 1$, respectively. 
            
            \begin{figure}[t!]
                \centering
                \includegraphics[scale = 0.46]{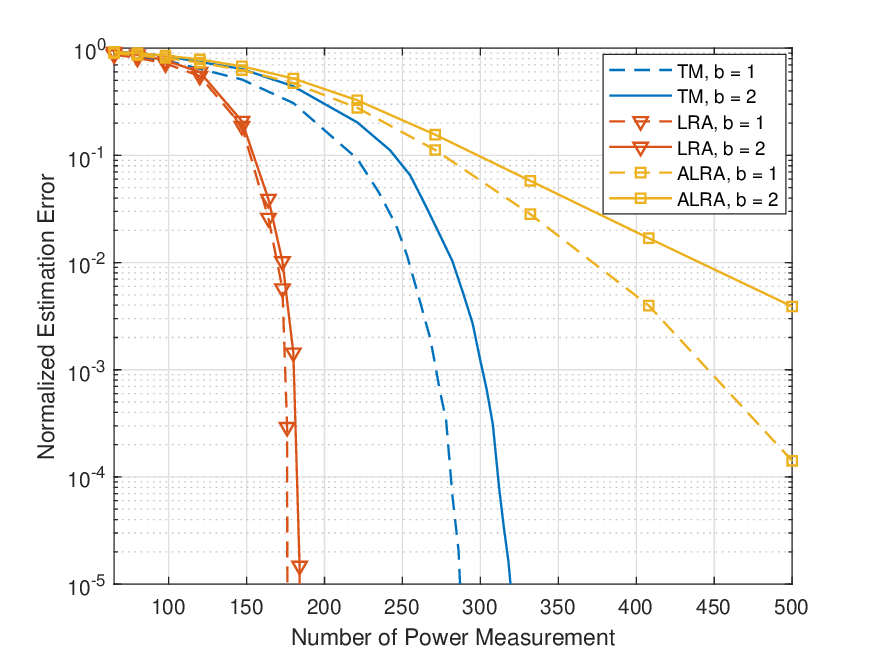}
                \vspace{-2pt}
                \caption{Normalized error by different channel estimation schemes assuming perfect power measurement. }
                \label{fig:AC-err-N64b12}
                \vspace{-13pt}
            \end{figure}

            \begin{figure}[t!]
                \centering
                {
                    \begin{subfigure}[t]{0.22\textwidth}
                        \centering
                        \includegraphics[scale = 0.52]{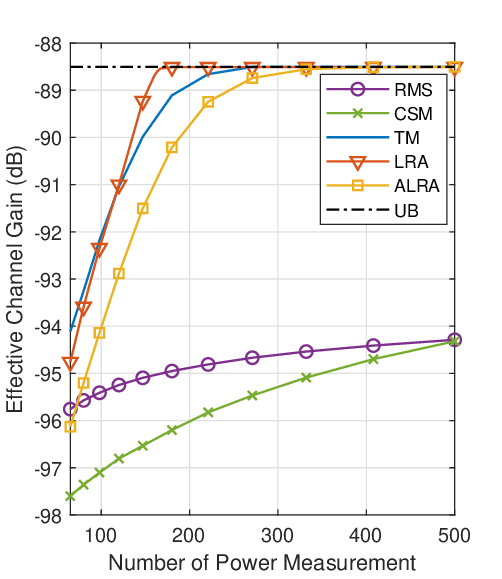}
                        \caption{$b = 1$. }
                        \label{subfig:AC-gain-N64b1}
                    \end{subfigure}
                    \hspace{-1pt}
                    \begin{subfigure}[t]{0.22\textwidth}
                        \centering
                        \includegraphics[scale = 0.52]{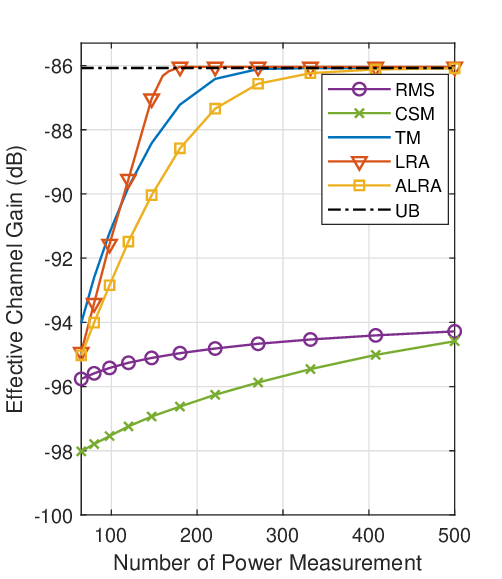}
                        \caption{$b = 2$. }
                        \label{subfig:AC-gain-N64b2}
                    \end{subfigure}
                }
                \caption{Effective channel gain with different channel estimation schemes assuming perfect power measurement. }
                \label{fig:AC-gain-N64b12}
            \end{figure}

        \subsubsection{Channel Autocorrelation Matrix Estimation Error}\label{subsubsec:est-err-accurate-measure}
            For the case of $b\ge 2$, the channel autocorrelation matrix $\bar{\boldsymbol{H}}$ is estimated as $\hat{\boldsymbol{H}}$, and the normalized estimation error is defined as $\mathcal{E}_b = \|\hat{\boldsymbol{H}} - \bar{\boldsymbol{H}}\|_F^2 / \|\bar{\boldsymbol{H}}\|_F^2$. 
            For the case of $b = 1$, the matrix $\bar{\boldsymbol{H}}_r$ is estimated as $\hat{\boldsymbol{H}}_r$, and thus the normalized estimation error is defined as $\mathcal{E}_b = \|\hat{\boldsymbol{H}}_r - \bar{\boldsymbol{H}}_r\|_F^2 / \|\bar{\boldsymbol{H}}_r\|_F^2$. 

            Fig.~\ref{fig:AC-err-N64b12} shows the normalized estimation error under different numbers of power measurement, i.e., $T_{p}$, for both LRA and ALRA algorithms, where ``TM'' represents the error of the estimated channel autocorrelation matrix by employing the trace-minimization relaxation method given in~\cite{ref:PhaseLift}. 
            It is observed that the estimation error of the LRA algorithms decreases rapidly with $T_{p}$ and is much smaller than that of the TM benchmark. 
            Moreover, the normalized estimation error vanishes for LRA and TM algorithms when $T_{p}\ge 185$ and $320$, respectively, which means that the unique solution for the channel autocorrelation matrix is recovered successfully and the proposed LRA algorithm outperforms TM significantly in terms of estimation accuracy. 
            On the other hand, the normalized estimation error of the ALRA algorithm decreases slowly with $T_{p}$, which is caused by the distance-minimization approximation, while its performance is good for sufficiently large $T_{p}$. 
            Additionally, it is shown that the estimation error for $b = 1$ is smaller than that for $b = 2$ for all algorithms. 
            This is because only the real part of $\bar{\boldsymbol{H}}$ needs to be estimated for $b = 1$. 


        \subsubsection{IRS Reflection Design with Estimated Channel}\label{subsubsec:eff-gain-accurate-measure}
            After the channel autocorrelation matrix is estimated, the IRS reflection vector $\boldsymbol{v}$ can be optimized to maximize the effective channel gain between the BS and user, denoted as $\bar{\gamma} = \text{tr}(\bar{\boldsymbol{H}}\boldsymbol{V})$, for data transmission. 
            For $b\ge 2$, we apply the eigenvalue decomposition to the estimated matrix $\hat{\boldsymbol{H}}$ and define $\hat{\lambda}_1$ as the largest eigenvalue of $\hat{\boldsymbol{H}}$ and $\hat{\boldsymbol{x}}_1$ as the corresponding normalized eigenvector. 
            Since $\hat{\boldsymbol{H}}\approx\bar{\boldsymbol{H}}$ is nearly rank-one, the effective channel gain can be approximated as $\bar{\gamma} \approx \text{tr}(\hat{\boldsymbol{H}}\boldsymbol{V}) / p_0 \approx \hat{\lambda}_1|\hat{\boldsymbol{x}}_1^H\boldsymbol{v}|^2 / p_0$. 
            Then, the IRS beamforming vector $\boldsymbol{v}$ is optimized to maximize $|\hat{\boldsymbol{x}}_1^H\boldsymbol{v}|^2$ subject to the discrete phase shift constraint $\boldsymbol{v}\in\Phi_b^N$, which can be solved optimally by using the method proposed in~\cite{ref:optimal-discrete-IRS-vector}. 
            For $b = 1$, similarly, eigenvalue decomposition is applied to the estimated matrix $\hat{\boldsymbol{H}}_r$, where $\hat{\lambda}_1$ and $\hat{\lambda}_2$ denote the first and second largest eigenvalues of $\hat{\boldsymbol{H}}_r$, with $\hat{\boldsymbol{x}}_1$ and $\hat{\boldsymbol{x}}_2$ denoting the corresponding eigenvectors, respectively. 
            Since the IRS beamforming vector $\boldsymbol{v}$ is always a real vector for $b = 1$, the effective channel gain can be approximated by 
            \begin{equation}
                \begin{aligned}
                    \bar{\gamma} \approx \frac{1}{p_0}\text{tr}(\hat{\boldsymbol{H}}_r\boldsymbol{V}) & \approx \frac{1}{p_0}\left|
                        \hat{\lambda}_1^{\frac{1}{2}}\hat{\boldsymbol{x}}_1^T\boldsymbol{v}
                    \right|^2 + \frac{1}{p_0}\left|
                        \hat{\lambda}_2^{\frac{1}{2}}\hat{\boldsymbol{x}}_2^T\boldsymbol{v}
                    \right|^2 \\
                    & = \frac{1}{p_0}\left|\left(
                        \hat{\lambda}_1^{\frac{1}{2}}\hat{\boldsymbol{x}}_1 + j\hat{\lambda}_2^{\frac{1}{2}}\hat{\boldsymbol{x}}_2
                    \right)^H\boldsymbol{v}\right|^2, 
                \end{aligned}
            \end{equation}
            and the optimal vector $\boldsymbol{v}$ can be obtained according to the method proposed in~\cite{ref:optimal-discrete-IRS-vector}. 

            \begin{figure}[t!]
                \centering
                {
                    \begin{subfigure}[t]{0.22\textwidth}
                        \centering
                        \includegraphics[scale = 0.52]{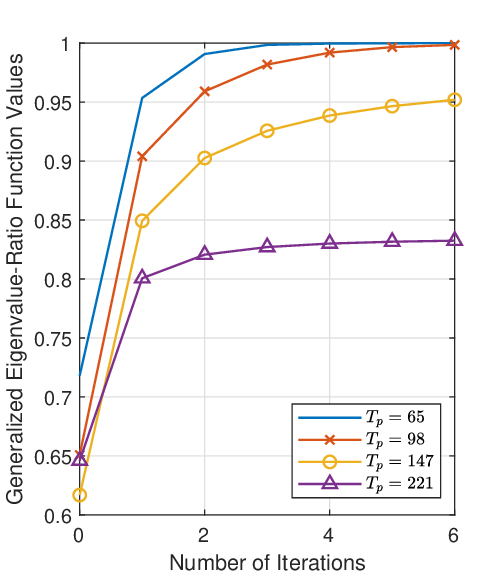}
                        \caption{R-LRA convergence. }
                        \label{subfig:NE13-conv-RX-N64b1}
                    \end{subfigure}
                    \hspace{-1pt}
                    \begin{subfigure}[t]{0.22\textwidth}
                        \centering
                        \includegraphics[scale = 0.52]{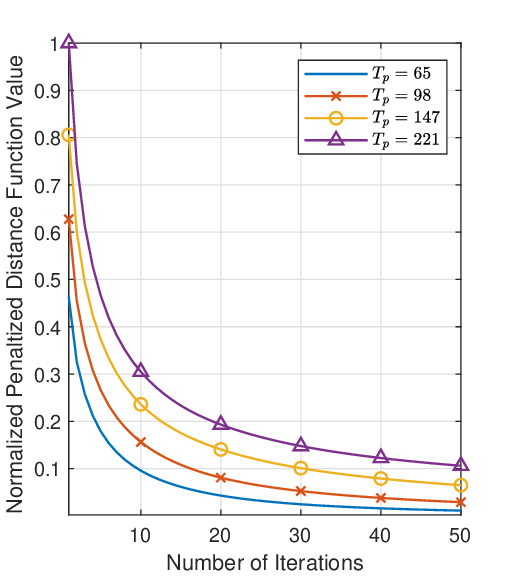}
                        \caption{R-ALRA convergence. }
                        \label{subfig:NE13-conv-DM-N64b1}
                    \end{subfigure}
                }
                \vspace{-2pt}
                \caption{Convergence of robust LRA and robust ALRA algorithms based on noisy power measurement for $b = 1$ ($\sigma^2 = -85$ dBm and $N_{0} = 1$). }
                \label{fig:NE13-conv-N64b1}
            \end{figure}

            \begin{figure*}[t!]
                \centering
                {
                    \begin{subfigure}[t]{1\textwidth}
                        \centering
                        \includegraphics[scale = 0.5]{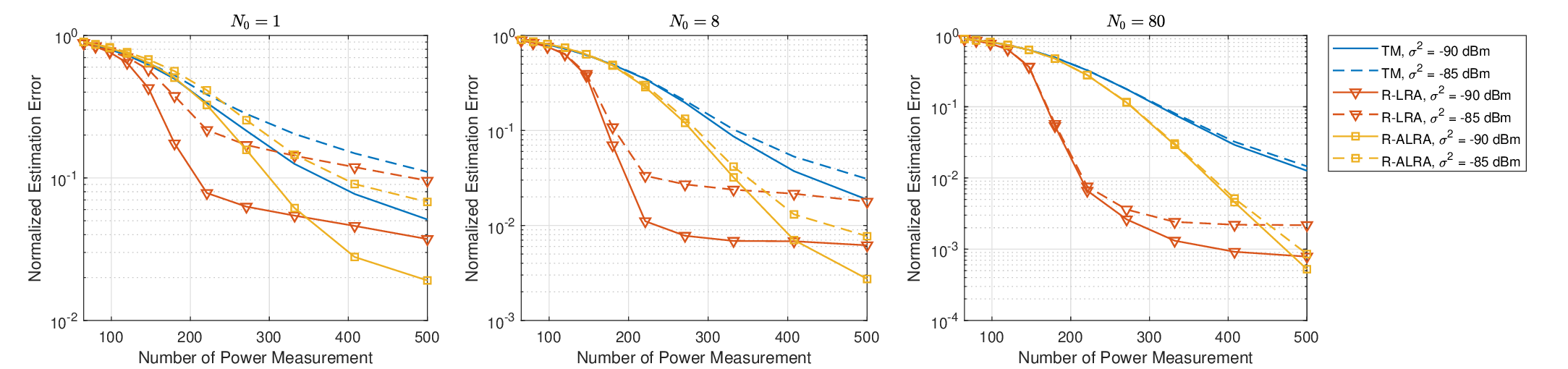}
                        \caption{Normalized estimation error versus $T_{p}$ for different values of $N_{0}$ and $\sigma^2$ ($b = 1$). }
                        \label{subfig:NE-err-N64b1}
                    \end{subfigure}
                    \begin{subfigure}[t]{1\textwidth}
                        \centering
                        \includegraphics[scale = 0.5]{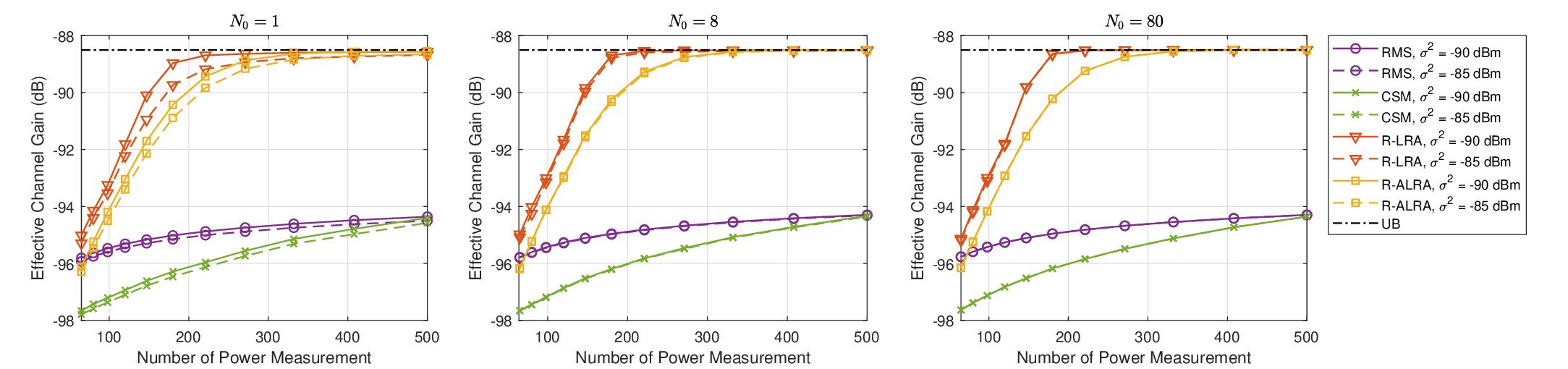}
                        \caption{Effective channel gain versus $T_{p}$ for different values of $N_{0}$ and $\sigma^2$ ($b = 1$). }
                        \label{subfig:NE-gain-N64b1}
                    \end{subfigure}
                }
                \vspace{-2pt}
                \caption{Performance of different schemes versus $T_{p}$ with various noise levels for $b = 1$. }
                \label{fig:NE-perf-N64b1}
                \vspace{-15pt}
            \end{figure*}

            For comparison, three benchmark schemes for IRS reflection design based on power measurement are listed as follows: 
            {\color{\highlightcolor}i) \textbf{TM} (trace-minimization): The IRS reflection vector is optimized as illustrated above based on the channel autocorrelation matrix estimated via the trace-minimization method in~\cite{ref:PhaseLift}. }
            ii) \textbf{UB} (upper bound): The upper bound on the effective channel gain is obtained by using the optimal IRS reflection vector based on the perfect CSI $\boldsymbol{h}$; 
            iii) \textbf{RMS} (random-max sampling): A large number of random IRS reflection vectors are applied with $u_n$ uniformly distributed in $\Phi_b$, $\forall n$, and the one achieving the largest received signal power is used; 
            iv) \textbf{CSM} (conditional sample mean): This is the method proposed in~\cite{ref:CSM}, where a large number of random IRS reflection vectors are applied, and the empirical expectation of the received signal power is calculated conditioned on $u_n$ fixed at every possible value, $\forall n$. 
            For each element, CSM selects the phase shift that maximizes the empirical expectation conditioned on its value.

            The effective channel gain obtained based on the estimated channels by the proposed algorithms as well as other benchmark schemes for $b = 1$ and $b = 2$ are shown in Fig.~\ref{subfig:AC-gain-N64b1} and~\ref{subfig:AC-gain-N64b2}, respectively. 
            Due to the discrete phase shift for IRS, the upper bound for the effective channel gain for $b = 2$ is higher than that for $b = 1$. 
            As can be observed, the effective channel gain achieved by the proposed LRA scheme increases rapidly with the number of power measurement, and reaches the upper bound when $T\ge 180$ for both $b = 1$ and $b\ge 2$ cases. 
            In comparison, the ALRA scheme achieve lower effective channel gain and approach the upper bound when $T_{p}\ge 350$. 
            {\color{\highlightcolor}Similar to the case of estimation error, the effective channel gain achieved by TM also falls in between those of LRA and ALRA, while reaching the upper bound when $T_{p}\ge 270$.}
            Moreover, both LRA and ALRA schemes outperform RMS and CSM schemes significantly for almost all values of $T_{p}$. 
            These results validate the effectiveness of the proposed channel estimation algorithms in improving the effective channel gain between the BS and user with optimized IRS reflections. 

    \vspace{-6pt}
    \subsection{Performance with Noisy Power Measurement}\label{subsec:perf-noise-effect}
        In this subsection, the robust LRA and robust ALRA algorithms proposed in Section~\ref{sec:robust-est} are evaluated based on noisy power measurement, which are labeled as R-LRA and R-ALRA, respectively. 
        The impact of noise received at the user on the performance is analyzed while the quantization effect is ignored for the time being. 
        {\color{\highlightcolor}Due to space limitation, we only consider the case of $b = 1$ in this subsection and the effective channel gain for TM is omitted for brevity.}



        \subsubsection{Algorithm Convergence}\label{subsubsec:convergence-noise-effect}
            To verify the convergence of the R-LRA and R-ALRA algorithms based on noisy power measurement, Fig.~\ref{subfig:NE13-conv-RX-N64b1} shows the values of the generalized eigenvalue-ratio function $g_r(\boldsymbol{H}_r^{(i)})$ defined in~\eqref{def:generalized-lambda-ratio-func}, while Fig.~\ref{subfig:NE13-conv-DM-N64b1} shows the normalized values of the penaltized distance function $\varphi_r(\boldsymbol{H}_r^{(i)}, \boldsymbol{\mu}^{(i)}, \boldsymbol{X}_1^{(i)}, \boldsymbol{X}_2^{(i)})$ defined in Section~\ref{subsec:dist-min-est-robust} during the iterations. 
            We set $\sigma^2 = -85$ dBm and $N_{0} = 1$. 
            It is observed that the convergence of the proposed robust algorithms is still guaranteed. 
            However, the generalized eigenvalue-ratio function of R-LRA may not converge to $1$ due to the impact of random noise. 
            Especially for large $T_p$, it becomes increasingly difficult to find a low-rank matrix $\boldsymbol{H}_r$ that can keep the penalty term small enough. 
            The penaltized distance function for R-ALRA also increases with $T_p$. 
            This indicates that both the proposed R-LRA and R-ALRA algorithms approach low-rank solutions and reduce the penalty terms at the same time, yielding robust estimations for channel autocorrelation matrices.

        \subsubsection{Estimation Error and Effective Channel Gain}\label{subsubsec:error-gain-noise-effect}
            
            The normalized estimation error and effective channel gain achieved by the proposed and benchmark schemes for $b = 1$ are shown versus the number of power measurement in Figs.~\ref{subfig:NE-err-N64b1} and~\ref{subfig:NE-gain-N64b1}, respectively. 
            For three subfigures in Fig.~\ref{subfig:NE-err-N64b1}, the number of reference signals for each power measurement, $N_{0}$, is set to be $1$, $8$ and $80$, respectively, and the estimation errors with $\sigma^2 = -90$ dBm and $\sigma^2 = -85$ dBm are shown in each subfigure for comparison. 
            Their corresponding effective channel gains are shown in the three subfigures in Fig.~\ref{subfig:NE-gain-N64b1}. 

            In Fig.~\ref{subfig:NE-err-N64b1}, it is observed that the normalized estimation error decreases with $T_{p}$ and $N_{0}$ but increases with $\sigma^2$ for all schemes, and both the R-LRA and R-ALRA algorithms always achieve a lower error than TM. 
            Notably, the estimation error of the R-ALRA algorithm is even lower than R-LRA when $T_{p}$ is large. 
            This is because, with larger noise, the effectiveness of the trace-minimization relaxation is undermined for TM because of the greater uncertainty of the measured power values. 
            In comparison, the R-LRA algorithm may be more likely to converge to a locally optimal solution since the estimation problem is non-convex, especially when $T_{p}$ is large. 
            For R-ALRA, however, the optimizations of $\boldsymbol{H}$ and $\boldsymbol{H}_r$ for $b\ge 2$ and $b = 1$, are both convex and always have only one globally optimal solution, leading to better estimation robustness with higher noise power and larger $T_{p}$. 
            In contrast, when $T_p$ is small, e.g., $T_{p} \le 300$, the R-LRA algorithm achieves much lower estimation error than R-ALRA. 

            Despite that the estimation error varies greatly for different values of $\sigma^2$ and $N_{0}$, the effective channel gain achieved by IRS reflection design based on estimated channel autocorrelation matrices does not change significantly, as shown in Fig.~\ref{subfig:NE-gain-N64b1}. 
            For $N_{0} = 1$, the effective channel gains achieved by all schemes for $\sigma^2 = -90$ dBm and $-85$ dBm are close to each other. 
            Besides, the performance difference between $N_{0} = 8$ and $N_{0} = 80$ is small. 
            For the proposed schemes based on the R-LRA and R-ALRA algorithms, the optimized IRS reflection vector $\boldsymbol{v}$ based on the estimated channel autocorrelation matrix $\hat{\boldsymbol{H}}$ or $\hat{\boldsymbol{H}}_r$ is still near-optimal even when the estimation error is large. 
            By comparing Fig.~\ref{subfig:NE-err-N64b1} to Fig.~\ref{subfig:NE-gain-N64b1}, it can be observed that as long as the normalized estimation error is no larger than $10^{-1}$, the optimized IRS reflection vector based on the estimated channel autocorrelation matrix is nearly optimal, and thus the effective channel gain almost achieves the upper bound. 
            As such, significant improvements can still be obtained by the R-LRA and R-ALRA schemes compared to other benchmark schemes. 

    \vspace{-8pt}
    \subsection{Performance with Quantized Power Measurement}\label{subsec:perf-quantize-effect}
        Finally, the impact of quantization errors on the performance of R-LRA and R-ALRA algorithms is evaluated. 
        We fix $T_{p} = 200$, $\sigma^2 = -90$ dBm and $N_{0} = 8$, while the estimation error and effective channel gain are evaluated with different values of the quantization gap $D$. 
        

        \begin{figure}[t!]
            \centering
            {
                \begin{subfigure}[t]{0.22\textwidth}
                    \centering
                    \includegraphics[scale = 0.52]{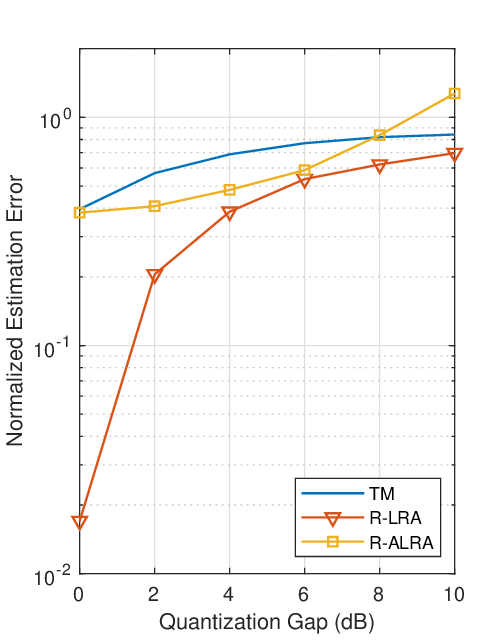}
                    \caption{Estimation error. }
                    \label{subfig:QGE-error-NR08-N64b1}
                \end{subfigure}
                \hspace{-1pt}
                \begin{subfigure}[t]{0.22\textwidth}
                    \centering
                    \includegraphics[scale = 0.52]{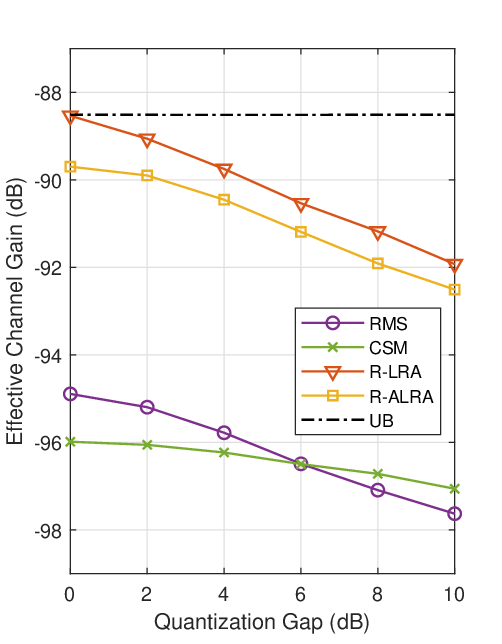}
                    \caption{Effective channel gain. }
                    \label{subfig:QGE-gain-NR08-N64b1}
                \end{subfigure}
            }
            \caption{Performance with quantized power measurement for $b = 1$ ($\sigma^2 = -90$ dBm, $N_{0} = 8$, and $T_{p} = 200$). }
            \label{fig:QGE-NR08-N64b1}
        \end{figure}

        In Fig.~\ref{subfig:QGE-error-NR08-N64b1}, the normalized estimation error of different schems versus $D$ is shown for $b = 1$. 
        Correspondingly, the effective channel gain under the same condition is given in Fig.~\ref{subfig:QGE-gain-NR08-N64b1}. 
        It can be observed that the estimation error increases and the effective channel gain decreases with the quantization gap $D$. 
        The estimation error of R-LRA is always lower than that of TM or R-ALRA, while the error of R-ALRA becomes higher than TM for $T_{p}\ge 8$ dB. 
        This is caused by introducing the variable $\hat{\boldsymbol{q}}$ and approximating problem~\eqref{prob:cov-est-dist-min-robust-multiary} with problem~\eqref{prob:cov-est-dist-min-approx-robust-multiary} to obtain closed-form solutions. 
        With larger $D$, the accuracy of the approximation decreases and thus the estimation error increases. 
        Nevertheless, the effective channel gains of both R-LRA and R-ALRA are close to each other and they are higher compared to all benchmark schemes for all values of $D$, which verifies the robustness of the proposed schemes against quantization error. 
        It is worth mentioning that in the current protocol~\cite{ref:3gpp:36.133}, $D = 1$ dB is employed, for which the performance loss caused by quantization for both R-LRA and R-ALRA is quite small based on Fig.~\ref{subfig:QGE-gain-NR08-N64b1}. 

\vspace{-6pt}
\section{Conclusion}\label{Conclusion}
    This paper studied the IRS-cascaded channel estimation problem based on received signal power measurement at the user. 
    Since the signal phase information is not available in power measurement, the channel autocorrelation matrix was estimated by solving a rank-minimization problem. 
    Based on the analysis of existence and uniqueness of the solution, two algorithms of different computational complexity were proposed to solve the channel autocorrelation estimation problem. 
    By relaxing the non-convex low-rank constrained problem to a fractional programming problem, the proposed LRA algorithm applied the alternating optimization method to iteratively approach a low-rank matrix solution. 
    To reduce computational complexity, the ALRA algroithm was also developed to obtain an approximate solution in closed-form during each iteration. 
    Moreover, robust extensions of the LRA and ALRA algorithms were proposed based on practical power measurement, where the effects of receiver noise and quantization error in power measurement were considered. 
    The convergence and efficiency of the proposed algorithms were validated via simulations, demonstrating that small estimation errors of the channel autocorrelation matrix can be achieved. 
    It was also shown that by applying the IRS reflection designs based on the channel autocorrelation matrices estimated by the proposed LRA and ALRA algorithms, significant improvement in the effective channel gain between BS and user can be achieved compared to other benchmark schemes. 
    Furthermore, the performance of the robust LRA and robust ALRA algorithms was evaluated using practical power measurement with noise and quantization errors, which demonstrated their robustness in improving the effective channel gain for IRS-assisted communication systems, even with imperfect power measurement. 


\appendices

    \vspace{-6pt}
    \section{Proof of Lemma~\ref{lemma:dim-defficiency}}\label{appendix:lemma-proof-dim-deff}
        Note that the set of $(N\times N)$-dimensional hermitian matrices forms an $N^2$-dimensional linear space over $\mathbb{R}$~\cite{ref:OLS}. 
        Define $\mathcal{S}_V^{b}(N) = \{\boldsymbol{V}\in\mathbb{C}^{N\times N} | \boldsymbol{V} = \boldsymbol{v}\boldsymbol{v}^H, \boldsymbol{v}\in\Phi_b^N\}$ as the set of all possible autocorrelation matrices of the IRS reflection vector and $\mathcal{S}_X^{b}(N)$ as the linear space spanned by elements from $\mathcal{S}_V^{b}(N)$ with real coefficients. 
        Due to the unit-amplitude of the entries of $\boldsymbol{v}$, the diagonal entries in matrix $\boldsymbol{V}$ are always equal to one, which results in a dimensional defficiency for $\mathcal{S}_X^{b}(N)$ in the $N^2$-dimensional hermitian matrix space. 
        Specifically, it can be proved that $\text{dim}(\mathcal{S}_X^{b}(N)) = \mathcal{D}_N^{(b)} = N^2 - N + 1$ for $b\ge 2$ and $(N^2 - N) / 2 + 1$ for $b = 1$ (see Appendix~\ref{complm-sec:lemma1-proof}). 
        Since we have $D_V \le \text{dim}(\mathcal{S}_X^{b}(N)) = \mathcal{D}_N^{(b)}$, Lemma~\ref{lemma:dim-defficiency} is proved. 

    \vspace{-6pt}
    \section{Proof of Proposition~\ref{prop:cov-est-existence-uniqueness}}\label{appendix:prop-proof-uniqueness}
        The existence of $\bar{\boldsymbol{H}}$ is straightforward because problem~\eqref{prob:cov-est-find-origin} follows~\eqref{def:received-power}, where the actual channel autocorrelation matrix is always a feasible solution. 

        Next, the uniqueness is analyzed. 
        For $b\ge 2$, $\{\boldsymbol{v}_t\in\Phi_b^{N}$, $t = 1, \ldots, T_{p}\}$ are complex vectors. 
        When $T_{p}\ge D_V = N^2 - N + 1$, matrices $\{\boldsymbol{V}_t, t = 1, \ldots, T_{p}\}$ span the space $\mathcal{S}_X^{b}(N)$ and thus the projection of $\boldsymbol{H}$ in $\mathcal{S}_X^{b}(N)$ is determined. 
        By analyzing the orthonormal basis of $\mathcal{S}_X^{b}(N)$, it can be shown that the values of $\text{tr}(\boldsymbol{H})$ and all non-diagonal entries in $\boldsymbol{H}$ can be uniquely determined by~\eqref{prob:cov-est-find-power}, i.e., $\text{tr}(\boldsymbol{H}) = \text{tr}(\bar{\boldsymbol{H}})$ and $H_{nm} = \bar{H}_{nm}$, $\forall n\neq m$. 
        Under such conditions, it can be proved that $\boldsymbol{H} = \bar{\boldsymbol{H}}$ is the only rank-one solution for problem~\eqref{prob:cov-est-find-origin} when $N\ge 3$ (see Appendix~\ref{complm-sec:prop1-proof}). 

        For $b = 1$, $\boldsymbol{v}_t\in\Phi_1^{N} = \{\pm 1\}^N$ is always a real vector for any $t$, indicating that $\boldsymbol{V}_t = \boldsymbol{V}_t^T$, $\forall t$. 
        Then, we have 
        \begin{equation}
            \text{tr}(\boldsymbol{H}\boldsymbol{V}_t) = \text{tr}((\boldsymbol{H}\boldsymbol{V}_t)^T) = \text{tr}(\boldsymbol{V}_t^T\boldsymbol{H}^*) = \text{tr}(\boldsymbol{H}^*\boldsymbol{V}_t), 
        \end{equation}
        which means that the imaginary part of $\boldsymbol{H}$ does not influence the received signal power at the user for $b = 1$. 
        Thus, for any solution $\hat{\boldsymbol{H}}$ for problem~\eqref{prob:cov-est-find-origin}, its conjugate $\hat{\boldsymbol{H}}^*$ is also a solution. 
        Similar to the case of $b\ge 2$, by analyzing the orthonormal basis of $\mathcal{S}_X^{b}(N)$ for $b = 1$, it can be shown that the values of $\text{tr}(\boldsymbol{H})$ and real parts of all non-diagonal entries in $\boldsymbol{H}$ can be uniquely determined by~\eqref{prob:cov-est-find-power} when $T_{p}\ge D_V = \mathcal{D}_{N}^{(1)}$, i.e., $\text{tr}(\boldsymbol{H}) = \text{tr}(\bar{\boldsymbol{H}})$ and $\text{real}(H_{nm}) = \text{real}(\bar{H}_{nm})$, $\forall n\neq m$. 
        Based on that, it can be proved that $\bar{\boldsymbol{H}}$ and $\bar{\boldsymbol{H}}^*$ are the only two rank-one solutions for problem~\eqref{prob:cov-est-find-origin} when $N\ge 6$ (see Appendix~\ref{complm-sec:prop1-proof}).

    \vspace{-6pt}
    \section{Proof of Proposition~\ref{prop:solution-set-cov-real}}\label{appendix:prop-proof-bit1-uniqueness}
        It is easy to verify that $\bar{\boldsymbol{H}}_r$ is a solution for problem~\eqref{prob:cov-est-find-rank-two}, which guarantees the existence of the solution. 
        For the uniqueness, any solution for problem~\eqref{prob:cov-est-find-rank-two}, denoted by $\hat{\boldsymbol{H}}_r$, is symmetric and semidefinite with $\text{rank}(\hat{\boldsymbol{H}}_r) \le 2$. 
        With eigenvalue decomposition, we have $\hat{\boldsymbol{H}}_r = \boldsymbol{U}_r\text{diag}([\alpha_1, \alpha_2])\boldsymbol{U}_r^T$, where $\boldsymbol{U}_r = [\boldsymbol{q}_{r1}, \boldsymbol{q}_{r2}]\in\mathbb{R}^{N\times 2}$ and $\alpha_1, \alpha_2 \ge 0$. 
        Denote $\hat{\boldsymbol{h}}_1 = \sqrt{\alpha_1}\boldsymbol{q}_{r1}, \hat{\boldsymbol{h}}_2 = \sqrt{\alpha_2}\boldsymbol{q}_{r2}\in\mathbb{R}^{N\times 1}$, then $\hat{\boldsymbol{H}}_r = \hat{\boldsymbol{h}}_1\hat{\boldsymbol{h}}_1^T + \hat{\boldsymbol{h}}_2\hat{\boldsymbol{h}}_2^T$. 
        Consider $\hat{\boldsymbol{h}} = \hat{\boldsymbol{h}}_1 + j\hat{\boldsymbol{h}}_2$ and $\hat{\boldsymbol{H}} = \hat{\boldsymbol{h}}\hat{\boldsymbol{h}}^H$. 
        Obviously, we have $\hat{\boldsymbol{H}}_r = \text{Re}(\hat{\boldsymbol{H}})$, $\text{rank}(\hat{\boldsymbol{H}}) = 1$ and $p_0\text{tr}(\hat{\boldsymbol{H}}\boldsymbol{V}_t) = p_0\text{tr}(\hat{\boldsymbol{H}}_r\boldsymbol{V}_t) = p_t$, $t = 1, \ldots, T_{p}$. 
        Thus, $\hat{\boldsymbol{H}}$ is a solution for problem~\eqref{prob:cov-est-find-origin}, which means $\hat{\boldsymbol{H}} = \bar{\boldsymbol{H}}$ or $\bar{\boldsymbol{H}}^{*}$ according to Proposition~\ref{prop:cov-est-existence-uniqueness}. 
        Then, we have $\hat{\boldsymbol{H}}_r = \text{Re}(\hat{\boldsymbol{H}}) = \bar{\boldsymbol{H}}_r$, which is uniquely determined. 


\vspace{-6pt}
\section{Complementary Proof of Lemma~\ref{lemma:dim-defficiency} in Appendix~\ref{appendix:lemma-proof-dim-deff}}\label{complm-sec:lemma1-proof}

    In this section, the detailed proof for the dimension of space $\mathcal{S}_X^{b}(N)$ defined in Appendix~\ref{appendix:lemma-proof-dim-deff} is given to complete the proof of Lemma~\ref{lemma:dim-defficiency}, where it is claimed that $\text{dim}(\mathcal{S}_X^{b}(N)) = \mathcal{D}_{N}^{(b)}$. 
    The function $\mathcal{D}_{N}^{(b)}$ is defined in Lemma~\ref{lemma:dim-defficiency} as
    \begin{equation}\label{def:spanned-spaced-dim}
        \mathcal{D}_{N}^{(b)} = \left\{
            \begin{aligned}
                & \frac{N^2 - N}{2} + 1, ~~~ b = 1 \\
                & N^2 - N + 1, ~~~~ b \ge 2
            \end{aligned}
        \right.
        .
    \end{equation}
    To prove $\text{dim}(\mathcal{S}_X^{b}(N)) = \mathcal{D}_{N}^{(b)}$, we first denote the $N^2$-dimensional hermitian matrices space as $\mathcal{S}_H(N)$ and explicitly define an orthonormal basis for it. 
    Then, we show that $\mathcal{S}_X^{b}(N)$ can be spanned by $\mathcal{D}_{N}^{(b)}$ matrices in the basis. 

    \vspace{-5pt}
    \subsection{Orthonormal Basis of Space $\mathcal{S}_H(N)$}\label{complm-subsec:herm-ortho-basis}
        Consider vector $\boldsymbol{b}_1 = \boldsymbol{1}_{N} / \sqrt{N} \in\mathbb{R}^{N\times 1}$, and denote $\boldsymbol{b}_2, \ldots, \boldsymbol{b}_N$ as the vectors that form an orthonormal basis along with $\boldsymbol{b}_1$ for the $N$-dimensional real vector space $\mathbb{R}^{N\times 1}$. 
        Then, an orthonormal basis for space $\mathcal{S}_H(N)$ defined in~\cite{ref:OLS} can be given by the following lemma. 
        \begin{lemma}
            Matrices $\{\boldsymbol{B}_{1}, \ldots, \boldsymbol{B}_{N}; \boldsymbol{E}_{nl}^{(R)}, \boldsymbol{E}_{nl}^{(I)}, 1\le n < l\le N\}$ form an orthonormal basis for $\mathcal{S}_H(N)$ defined as  
            \begin{subequations}\label{complm-def:herm-space-basis}
                \begin{align}
                    \boldsymbol{B}_{n} & = \textup{diag}(\boldsymbol{b}_n), ~1\le n\le N, \label{complm-def:herm-space-diag-basis} \\
                    \boldsymbol{E}_{nl}^{(R)} & = \frac{1}{\sqrt{2}}\left(
                        \boldsymbol{e}_n\boldsymbol{e}_l^T + \boldsymbol{e}_l\boldsymbol{e}_n^T
                    \right), ~1\le n < l \le N, \label{complm-def:herm-space-real-basis} \\
                    \boldsymbol{E}_{nl}^{(I)} & = \frac{j}{\sqrt{2}}\left(
                        \boldsymbol{e}_n\boldsymbol{e}_l^T - \boldsymbol{e}_l\boldsymbol{e}_n^T
                    \right), ~1\le n < l \le N, \label{complm-def:herm-space-imag-basis}
                \end{align}
            \end{subequations}
            where $\boldsymbol{e}_n\in\mathbb{R}^{N\times 1}$ has a $1$ for the $n$-th element and all other elements equal to zero. 
        \end{lemma}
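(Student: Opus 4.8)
The plan is to regard $\mathcal{S}_H(N)$ as a \emph{real} inner-product space under the trace form $\langle \boldsymbol{A}, \boldsymbol{B}\rangle = \text{tr}(\boldsymbol{A}\boldsymbol{B})$ (the same form that underlies the identity $\text{tr}(\boldsymbol{A}\boldsymbol{B}) = \boldsymbol{w}_a^T\boldsymbol{w}_b$ used earlier) and to establish the claim in three moves: first confirm that this form is a genuine real inner product and that every candidate matrix lies in $\mathcal{S}_H(N)$; then verify orthonormality of the whole family by direct trace computations; and finally close the argument with a dimension count. To start, for Hermitian $\boldsymbol{A}, \boldsymbol{B}$ one has $\overline{\text{tr}(\boldsymbol{A}\boldsymbol{B})} = \text{tr}((\boldsymbol{A}\boldsymbol{B})^H) = \text{tr}(\boldsymbol{B}\boldsymbol{A}) = \text{tr}(\boldsymbol{A}\boldsymbol{B})$, so the form is real-valued, and since $\text{tr}(\boldsymbol{A}^2) = \|\boldsymbol{A}\|_F^2$ it is positive definite; hence $\langle\cdot,\cdot\rangle$ is a real inner product on the $N^2$-dimensional real vector space $\mathcal{S}_H(N)$.

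Next I would check membership. Each $\boldsymbol{B}_n = \text{diag}(\boldsymbol{b}_n)$ is a real diagonal matrix and hence Hermitian; each $\boldsymbol{E}_{nl}^{(R)}$ is real symmetric and hence Hermitian; and a one-line computation gives $(\boldsymbol{E}_{nl}^{(I)})^H = \tfrac{-j}{\sqrt{2}}(\boldsymbol{e}_l\boldsymbol{e}_n^T - \boldsymbol{e}_n\boldsymbol{e}_l^T) = \boldsymbol{E}_{nl}^{(I)}$, so all listed matrices belong to $\mathcal{S}_H(N)$.

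The core of the proof is orthonormality, for which the single elementary identity $\text{tr}(\boldsymbol{e}_a\boldsymbol{e}_b^T\boldsymbol{e}_c\boldsymbol{e}_d^T) = \delta_{bc}\delta_{ad}$ suffices. For the diagonal block, $\langle \boldsymbol{B}_n, \boldsymbol{B}_m\rangle = \boldsymbol{b}_n^T\boldsymbol{b}_m = \delta_{nm}$ because $\{\boldsymbol{b}_n\}$ is orthonormal in $\mathbb{R}^{N\times 1}$. Any inner product of a $\boldsymbol{B}_n$ with an $\boldsymbol{E}_{kl}^{(R)}$ or $\boldsymbol{E}_{kl}^{(I)}$ vanishes, since the latter have zero diagonal and $\text{tr}(\boldsymbol{D}\boldsymbol{M}) = \sum_i D_{ii}M_{ii}$ for diagonal $\boldsymbol{D}$. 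Expanding the four products of rank-one terms and applying the identity yields $\langle \boldsymbol{E}_{nl}^{(R)}, \boldsymbol{E}_{kl'}^{(R)}\rangle = \tfrac{1}{2}(\delta_{lk}\delta_{nl'} + 2\delta_{nk}\delta_{ll'} + \delta_{nl'}\delta_{lk})$ and $\langle \boldsymbol{E}_{nl}^{(I)}, \boldsymbol{E}_{kl'}^{(I)}\rangle = \delta_{nk}\delta_{ll'} - \delta_{lk}\delta_{nl'}$, while the cross term $\langle \boldsymbol{E}_{nl}^{(R)}, \boldsymbol{E}_{kl'}^{(I)}\rangle$ reduces to $\tfrac{j}{2}(\delta_{lk}\delta_{nl'} - \delta_{nl'}\delta_{lk}) + \tfrac{j}{2}(\delta_{nk}\delta_{ll'} - \delta_{ll'}\delta_{nk}) = 0$ and hence always vanishes. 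The ordering constraints $n < l$ and $k < l'$ force $\delta_{lk}\delta_{nl'} = 0$ (otherwise $l' = n < l = k < l'$), so the two nonzero inner products both collapse to $\delta_{nk}\delta_{ll'}$, equal to $1$ when the index pairs coincide and $0$ otherwise.

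Finally I would count: there are $N$ matrices $\boldsymbol{B}_n$ and $N(N-1)/2$ each of $\boldsymbol{E}_{nl}^{(R)}$ and $\boldsymbol{E}_{nl}^{(I)}$, for a total of $N + N(N-1) = N^2 = \text{dim}(\mathcal{S}_H(N))$; an orthonormal (hence linearly independent) family of size equal to the dimension is automatically a basis, which completes the proof. I expect the only real obstacle to be the bookkeeping in the orthonormality step, namely organizing the expansion of the products of the off-diagonal generators and invoking the strict ordering $n < l$ at the right moment to discard the cross terms $\delta_{lk}\delta_{nl'}$; the remaining verifications are routine.
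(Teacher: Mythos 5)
Your proof is correct, but it completes the argument differently from the paper. Both proofs share the first half—checking that every listed matrix is Hermitian and that the family is orthonormal under the trace form—though the paper dismisses that step as "easy to verify" while you actually carry it out, reducing everything to the identity $\text{tr}(\boldsymbol{e}_a\boldsymbol{e}_b^T\boldsymbol{e}_c\boldsymbol{e}_d^T) = \delta_{bc}\delta_{ad}$ and correctly noting that the strict orderings $n<l$, $k<l'$ annihilate the $\delta_{lk}\delta_{nl'}$ terms (your Kronecker-delta expressions all check out, and the cross term $\langle\boldsymbol{E}_{nl}^{(R)},\boldsymbol{E}_{kl'}^{(I)}\rangle$ indeed vanishes identically, ordering aside). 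Where you diverge is the completeness step: the paper proves \emph{spanning} constructively, exhibiting for an arbitrary Hermitian $\boldsymbol{A}$ the explicit expansion $\boldsymbol{A} = \sqrt{2}\sum_{n<l}\left(\text{Re}(A_{nl})\boldsymbol{E}_{nl}^{(R)} + \text{Im}(A_{nl})\boldsymbol{E}_{nl}^{(I)}\right) + \sum_n c_{\boldsymbol{A},n}\boldsymbol{B}_n$, whereas you invoke a dimension count—$N + 2\cdot N(N-1)/2 = N^2$ orthonormal (hence $\mathbb{R}$-linearly independent) elements in the $N^2$-dimensional real space $\mathcal{S}_H(N)$ must be a basis. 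Your route is shorter and rigorous, but it leans on the external fact $\dim_{\mathbb{R}}\mathcal{S}_H(N) = N^2$ (which the paper does cite from its reference on Hermitian matrix spaces, so this is admissible in context), while the paper's expansion is self-contained—it re-derives the dimension as a byproduct—and the explicit coordinates it produces are what the subsequent appendix arguments implicitly use when they conclude, e.g., that $\text{tr}(\boldsymbol{\mathcal{E}}\boldsymbol{E}_{nl}^{(R)}) = \text{tr}(\boldsymbol{\mathcal{E}}\boldsymbol{E}_{nl}^{(I)}) = 0$ forces all off-diagonal entries of $\boldsymbol{\mathcal{E}}$ to vanish. In short: you fill in the detail the paper skips, and skip the construction the paper details; both yield a complete proof.
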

        \begin{proof}[Proof\textup{:}\nopunct]
            It is easy to verify that these matrices are hermitian and orthonormal to each other under the Frobenius product $\text{tr}(\boldsymbol{A}^H\boldsymbol{B})$ for arbitrary matrices $\boldsymbol{A}$ and $\boldsymbol{B}$. 
            Meanwhile, on one hand, for any $N\times N$ hermitian matrix $\boldsymbol{A}$, its main diagonal entries form an $N$-dimensional real vector $\boldsymbol{d}_{\boldsymbol{A}} = \text{diag}(\boldsymbol{A})$. 
            Thus, there exists a real vector $\boldsymbol{c}_{\boldsymbol{A}} = [c_{\boldsymbol{A}, 1}, \ldots, c_{\boldsymbol{A}, N}]^T\in\mathbb{R}^{N\times 1}$ such that $\boldsymbol{d}_{\boldsymbol{A}} = \sum_{n = 1}^{N}{c_{\boldsymbol{A}, n}\boldsymbol{b}_n}$, i.e., 
            \begin{equation}\label{complm-eq:diag-vec-basis}
                \text{diag}(\boldsymbol{d}_{\boldsymbol{A}}) = \sum_{n = 1}^{N}{c_{\boldsymbol{A}, n}\boldsymbol{B}_n}. 
            \end{equation}
            On the other hand, entries of matrices $\boldsymbol{E}_{nl}^{(R)}$ and $\boldsymbol{E}_{nl}^{(I)}$ are zero except for two on the $n$-th row and $l$-th column, and the $l$-th row and $n$-th column, $\forall n < l$. 
            Thus, it can be verified that $\boldsymbol{A}$ can be written as 
            \begin{equation}\label{complm-eq:herm-mat-basis-expansion}
                    \boldsymbol{A} = \sqrt{2}\sum_{n < l}{\text{Re}(A_{nl})\boldsymbol{E}_{nl}^{(R)} + \text{Im}(A_{nl})\boldsymbol{E}_{nl}^{(I)}} + \sum_{n = 1}^{N}{c_{\boldsymbol{A}, n}\boldsymbol{B}_{n}}, 
            \end{equation}
            which is the linear combination of $N(N - 1) + N = N^2$ matrices with $N^2$ real coefficients. 
            Therefore, the $N^2$ matrices $\{\boldsymbol{B}_1, \ldots, \boldsymbol{B}_{N}; \boldsymbol{E}_{nl}^{(R)}, \boldsymbol{E}_{nl}^{(I)}$, $1\le n < l \le N\}$ form an orthonormal basis for the space of hermitian matrices. 
        \end{proof}

    \subsection{Dimension of Space $\mathcal{S}_X^{b}(N)$}\label{complm-subsec:spanned-space-dim}
        To show $\text{dim}(\mathcal{S}_X^{b}(N)) = \mathcal{D}_{N}^{(b)}$, we provide a two-fold proof, where we first prove $\text{dim}(\mathcal{S}_X^{b}(N)) \le \mathcal{D}_{N}^{(b)}$ and then $\text{dim}(\mathcal{S}_X^{b}(N)) \ge \mathcal{D}_{N}^{(b)}$. 

        For the first part, we show $\text{dim}(\mathcal{S}_X^{b}(N)) \le \mathcal{D}_{N}^{(b)}$. 
        Since $\mathcal{S}_X^{b}(N)$ is spanned by $\mathcal{S}_V^{b}(N)$ with real coefficients, we have $\mathcal{S}_X^{b}(N)\in\mathcal{S}_H(N)$ and thus $\text{dim}(\mathcal{S}_X^{b}(N))\le N^2$. 
        Due to the unit-amplitude of the IRS reflecting coefficients, however, the diagonal entries in any matrix $\boldsymbol{V}\in\mathcal{S}_V^{b}(N)$ are always equal to 1, i.e., $\text{diag}(\boldsymbol{V}) = \boldsymbol{1}_N = \sqrt{N}\boldsymbol{b}_1$. 
        Then, it is easy to verify that $\boldsymbol{V}$ is orthogonal to $\boldsymbol{B}_{2}, \ldots, \boldsymbol{B}_{N}$ for any values of $b$, i.e., 
        \begin{equation}\label{complm-eq:cov-null-space}
            \text{tr}(\boldsymbol{V}^H\boldsymbol{B}_{n}) = \sqrt{N}\boldsymbol{b}_{1}^T\boldsymbol{b}_{n} = 0, ~2\le n\le N. 
        \end{equation}
        Moreover, for $b = 1$, $\boldsymbol{V}$ is always a real matrix and thus we have $\text{tr}(\boldsymbol{V}^H\boldsymbol{E}_{nl}^{(I)}) = \text{Im}(V_{nl}) = 0$, $\forall n\neq l$. 
        Thus, for $b\ge 2$, $\mathcal{S}_X^{b}(N)$ should be a subspace of the space spanned by $\{\boldsymbol{B}_1; \boldsymbol{E}_{nl}^{(R)}, \boldsymbol{E}_{nl}^{(I)}, 1\le n < l \le N\}$, leading to $\text{dim}(\mathcal{S}_X^{b}(N))\le N^2 - N + 1 = \mathcal{D}_{N}^{(b)}$, while for $b = 1$, $\mathcal{S}_X^{1}(N)$ should be a subspace of the space spanned by $\{\boldsymbol{B}_1; \boldsymbol{E}_{nl}^{(R)}, 1\le n < l \le N\}$, leading to $\text{dim}(\mathcal{S}_X^{1}(N))\le N(N - 1)/2 + 1 = \mathcal{D}_{N}^{(1)}$. 
        Therefore, $\text{dim}(\mathcal{S}_X^{b}(N))\le\mathcal{D}_{N}^{(b)}$ holds for any $b$.

        For the second part of the proof, we prove by induction. 
        To prove $\text{dim}(\mathcal{S}_X^{b}(N)) \ge \mathcal{D}_{N}^{(b)}$, it suffices to show that there exist at least $\mathcal{D}_{N}^{(b)}$ matrices in $\mathcal{S}_V^{b}(N)$ such that they are linearly independent with real coefficients. 
        For the following, we first prove the case of $b = 1$, and then consider the case of $b\ge 2$. 

        For $b = 1$, we show that there always exists $\mathcal{D}_{N}^{(1)} = (N^2 - N) / 2 + 1$ matrices in $\mathcal{S}_V^{1}(N)$ that are linearly independent with real coefficients. 
        It is easy to verify that this is true when $N \le 3$. 
        Assume that this is true for $N = L\ge 3$, and let $K_1 = \mathcal{D}_{L}^{(1)}$ for simplicity, then we have to show that this is true for $N = L + 1$. 
        Suppose $\boldsymbol{V}_k = \boldsymbol{v}_k\boldsymbol{v}_k^H$, $k = 1, \ldots, {K_1}$ are linearly independent matrices in $\mathcal{S}_V^{1}(L)$, where $\boldsymbol{v}_k\in{\{\pm 1\}}^L$, $k = 1, \ldots, K_1$. 
        Define matrix $\boldsymbol{W} = [\boldsymbol{v}_1, \ldots, \boldsymbol{v}_{K_1}]\in\mathbb{R}^{L\times {K_1}}$, which has full row rank $L$ according to the following lemma. 
        \begin{lemma}\label{complm-lemma:covariance-basis-full-rank}
            The matrix $\boldsymbol{W}$ has full row rank $L$ for $L\ge 3$. 
        \end{lemma}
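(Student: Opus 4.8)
The plan is to prove full row rank by contradiction, converting the matrix claim about $\boldsymbol{W}$ into a claim about the vectors $\{\boldsymbol{v}_k\}$ themselves. Suppose $\text{rank}(\boldsymbol{W}) < L$. Since $\boldsymbol{W}\in\mathbb{R}^{L\times K_1}$ with $K_1 = \mathcal{D}_{L}^{(1)} = (L^2-L)/2 + 1 \ge L$ for $L\ge 3$, a rank deficiency means the left null space is nontrivial, so there exists a nonzero $\boldsymbol{a}\in\mathbb{R}^{L\times 1}$ with $\boldsymbol{W}^T\boldsymbol{a} = \boldsymbol{0}$, i.e. $\boldsymbol{a}^T\boldsymbol{v}_k = 0$ for every $k = 1,\ldots,K_1$. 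The goal is to show this forces $\boldsymbol{a} = \boldsymbol{0}$, contradicting $\boldsymbol{a}\neq\boldsymbol{0}$ and thereby establishing $\text{rank}(\boldsymbol{W}) = L$.

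The key observation is that the identity matrix lies in the span of the matrices $\boldsymbol{V}_k = \boldsymbol{v}_k\boldsymbol{v}_k^T$. Indeed, averaging over all sign vectors gives $2^{-L}\sum_{\boldsymbol{v}\in\{\pm 1\}^L}\boldsymbol{v}\boldsymbol{v}^T = \boldsymbol{I}_L$, because the $(i,j)$ entry equals $\mathbb{E}[v_iv_j] = \delta_{ij}$; hence $\boldsymbol{I}_L\in\mathcal{S}_X^{1}(L)$. By the induction hypothesis the $K_1$ matrices $\{\boldsymbol{V}_k\}$ are linearly independent, and since $K_1 = \text{dim}(\mathcal{S}_X^{1}(L))$ (the equality at $N=L$ obtained by combining the upper bound of the first part with the induction hypothesis), they form a basis of $\mathcal{S}_X^{1}(L)$. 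Therefore there exist real coefficients $c_1,\ldots,c_{K_1}$ with $\boldsymbol{I}_L = \sum_{k=1}^{K_1} c_k\boldsymbol{v}_k\boldsymbol{v}_k^T$. Evaluating the associated quadratic form at $\boldsymbol{a}$ then yields
\begin{equation}
    \|\boldsymbol{a}\|_2^2 = \boldsymbol{a}^T\boldsymbol{I}_L\boldsymbol{a} = \sum_{k=1}^{K_1} c_k\left(\boldsymbol{a}^T\boldsymbol{v}_k\right)^2 = 0,
\end{equation}
so $\boldsymbol{a} = \boldsymbol{0}$, the desired contradiction.

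I expect the main obstacle to be the justification that the given $K_1$ independent matrices actually \emph{span} all of $\mathcal{S}_X^{1}(L)$ — and in particular that $\boldsymbol{I}_L$ lies in their span — rather than merely being contained in that space; this is exactly the step that upgrades ``linearly independent'' to the usable basis property, and it is where the dimension equality $K_1 = \text{dim}(\mathcal{S}_X^{1}(L))$ supplied by the induction hypothesis is essential. Once that membership is secured, the remainder is the one-line computation above. It is worth noting that the argument never uses the off-diagonal structure of $\mathcal{S}_X^{1}(L)$: only $\boldsymbol{I}_L\in\mathcal{S}_X^{1}(L)$ is needed, which is why the reduction works uniformly, and why the constraint $L\ge 3$ enters solely through the validity of the induction hypothesis and through $K_1\ge L$ (so that full row rank is attainable).
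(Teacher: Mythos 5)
Your proof is correct, but it follows a genuinely different route from the paper's. Both arguments are contradictions launched from rank deficiency and both lean on the same two pillars — the dimension bound $\text{dim}(\mathcal{S}_X^{1}(L))\le\mathcal{D}_{L}^{(1)}$ from the first part of the appendix and the assumed linear independence of the $\boldsymbol{V}_k$ — but the mechanisms differ. The paper picks a sign vector $\boldsymbol{x}\in\{\pm 1\}^{L}$ lying outside $\text{span}(\boldsymbol{W})$ (possible because $\{\pm 1\}^L$ spans $\mathbb{R}^L$), splits it as $\boldsymbol{x}=\boldsymbol{v}_x+\boldsymbol{y}$ with $\boldsymbol{y}\perp\text{span}(\boldsymbol{W})$, appends $\boldsymbol{X}=\boldsymbol{x}\boldsymbol{x}^H$ to the family so that the dimension bound forces a linear dependence, and then pairs that dependence with $\boldsymbol{y}\boldsymbol{y}^H$ under the trace inner product: this yields $\mu_x\|\boldsymbol{y}\|_2^4=0$, kills the coefficient of $\boldsymbol{X}$, and contradicts the independence of the $\boldsymbol{V}_k$. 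You instead work with a left-null vector $\boldsymbol{a}$ of $\boldsymbol{W}$, observe the clean identity $2^{-L}\sum_{\boldsymbol{v}\in\{\pm 1\}^L}\boldsymbol{v}\boldsymbol{v}^T=\boldsymbol{I}_L$ so that $\boldsymbol{I}_L\in\mathcal{S}_X^{1}(L)$, upgrade the $K_1$ independent matrices to a \emph{basis} via the dimension equality, expand $\boldsymbol{I}_L=\sum_k c_k\boldsymbol{V}_k$, and evaluate the quadratic form at $\boldsymbol{a}$ to get $\|\boldsymbol{a}\|_2^2=0$. Your version is shorter and arguably cleaner: it replaces the orthogonal decomposition and the fourth-power trace computation with a single positive-definiteness evaluation, and it isolates exactly which feature of the sign-vector family is needed (that the identity lies in its span). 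The paper's version, by contrast, needs only one extra sign vector rather than the full average over $\{\pm 1\}^L$, and its "append a witness matrix and force a dependence" template is stylistically closer to the inductive construction that follows the lemma. Both proofs are complete; your step identifying $K_1=\text{dim}(\mathcal{S}_X^{1}(L))$ (upper bound from the first part, lower bound from the given independent family) is exactly the right justification for the basis claim, which is the only point where your argument could have had a gap.
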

        \begin{proof}[Proof\textup{:}\nopunct]
            As $K_1 > L$ for $L\ge 3$, we have $\text{rank}(\boldsymbol{W})\le L < {K_1}$. 
            If $\text{rank}(\boldsymbol{W}) < L$, then the space spanned by the columns of $\boldsymbol{W}$, denoted as $\text{span}(\boldsymbol{W})$, has a dimension lower than $L$. 
            Since all vectors in $\{\pm 1\}^L$ span the $L$-dimensional space $\mathbb{R}^L$, there exists a vector $\boldsymbol{x}\in\{\pm 1\}^{L}$ such that $\boldsymbol{x}\notin\text{span}(\boldsymbol{W})$. 
            Let $\boldsymbol{x} = \boldsymbol{v}_{x} + \boldsymbol{y}$, where $\boldsymbol{v}_{x}\in\text{span}(\boldsymbol{W})$ while $\boldsymbol{y}\perp\text{span}(\boldsymbol{W})$. 
            Note that $\boldsymbol{X} = \boldsymbol{x}\boldsymbol{x}^H\in \mathcal{S}_V^{1}(N)$ and thus matrices $\{\boldsymbol{V}_1, \ldots, \boldsymbol{V}_{K_1}, \boldsymbol{X}\}$ must be linearly dependent with real coefficients according to the first part of the proof, because there are ${K_1} + 1 > \mathcal{D}_{L}^{(1)}$ hermitian matrices from $\mathcal{S}_V^{1}(N)$ in total. 
            Hence, there exists $\boldsymbol{\mu}\in\mathbb{R}^{{K_1}\times 1}$ and $\mu_x\in\mathbb{R}$, such that they are not all zeros and 
            \begin{equation}\label{complm-eq:mux-linear-dependent}
                \mu_x\boldsymbol{X} + \sum_{k = 1}^{{K_1}}{\mu_k\boldsymbol{V}_k} = \boldsymbol{0}_{L\times L}. 
            \end{equation}
            Consider the matrix $\boldsymbol{Y} = \boldsymbol{y}\boldsymbol{y}^H$ and we have 
            \begin{subequations}\label{complm-eq:trace-applied}
                \allowdisplaybreaks
                \begin{align}
                    0 & = \text{tr}\left(
                        \boldsymbol{Y}^H\left(\mu_x\boldsymbol{X} + \sum_{k = 1}^{{K_1}}{\mu_k\boldsymbol{V}_k}\right)
                    \right) \\
                    & = \mu_x\text{tr}(\boldsymbol{Y}^H\boldsymbol{X}) + \sum_{k = 1}^{{K_1}}{\mu_k\text{tr}(\boldsymbol{Y}^H\boldsymbol{V}_k)}. 
                \end{align}
            \end{subequations}
            Note that $\text{tr}(\boldsymbol{Y}^H\boldsymbol{X}) = |\boldsymbol{y}^H(\boldsymbol{v}_x + \boldsymbol{y})|^2 = \|\boldsymbol{y}\|_2^4$ and $\text{tr}(\boldsymbol{Y}^H\boldsymbol{V}_k) = |\boldsymbol{y}^H\boldsymbol{v}_k|^2 = 0$, $\forall k$. 
            Thus, equation~\eqref{complm-eq:trace-applied} is equivalent to $\mu_x\|\boldsymbol{y}\|_2^4 = 0$, which leads to $\mu_x = 0$ as $\boldsymbol{y}$ is nonzero. 
            Then, equation~\eqref{complm-eq:mux-linear-dependent} becomes $\sum_{k = 1}^{{K_1}}{\mu_k\boldsymbol{V}_k} = \boldsymbol{0}$ for some $\boldsymbol{\mu}\neq\boldsymbol{0}$. 
            This is contradictory to the assumption that $\{\boldsymbol{V}_k, k = 1, \ldots, {K_1}\}$ are linearly independent with real coefficients. 
            Therefore, $\boldsymbol{W}$ should have full row rank $L$. 
        \end{proof}
        Given Lemma~\ref{complm-lemma:covariance-basis-full-rank}, we implicitly construct $\mathcal{D}_{L + 1}^{(1)}$ linearly independent matrices in $S_V^1(L + 1)$ for the case of $N = L + 1$. 
        Define $(L + 1)$-dimensional vectors 
        \begin{equation}\label{complm-def:induction-construction-b1}
            \boldsymbol{u}_{k} = \left[
                \begin{array}{c}
                    \boldsymbol{v}_k \\
                    1
                \end{array}
            \right], 
            ~\boldsymbol{y}_{k} = \left[
                \begin{array}{c}
                    \boldsymbol{v}_k \\
                    -1
                \end{array}
            \right], 
            ~k = 1, \ldots, {K_1}, 
        \end{equation}
        and matrices $\boldsymbol{U}_k = \boldsymbol{u}_k\boldsymbol{u}_k^H$ and $\boldsymbol{Y}_k = \boldsymbol{y}_k\boldsymbol{y}_k^H$, $\forall k$. 
        Obviously, $\boldsymbol{U}_k, \boldsymbol{Y}_k\in S_V^1(L + 1)$, $\forall k$, and there are $2{K_1} = 2\mathcal{D}_{L}^{(1)}$ matrices in total. 
        Note that for $L \ge 3$, we have $2\mathcal{D}_{L}^{(1)} > \mathcal{D}_{L + 1}^{(1)}$, so matrices $\{\boldsymbol{U}_k, \boldsymbol{Y}_k, k = 1, \ldots, {K_1}\}$ are linearly dependent with real coefficients according to the first part of the proof, i.e., there exists real vectors $\boldsymbol{\alpha} = [\alpha_1, \ldots, \alpha_{K_1}]^T, \boldsymbol{\beta} = [\beta_1, \ldots, \beta_{K_1}]^T\in\mathbb{R}^{{K_1}\times 1}$ such that 
        \begin{subequations}\label{complm-eq:induction-linearly-dependent-b1}
            \begin{align}
                \boldsymbol{0}_{(L + 1)\times(L + 1)} & = \sum_{k = 1}^{{K_1}}{\alpha_k\boldsymbol{U}_k + \beta_k\boldsymbol{Y}_k} \\
                & = \sum_{k = 1}^{{K_1}}{
                    \left[
                        \begin{array}{cc}
                            (\alpha_k + \beta_k)\boldsymbol{V}_k & (\alpha_k - \beta_k)\boldsymbol{v}_k \\
                            (\alpha_k - \beta_k)\boldsymbol{v}_k^H & \alpha_k + \beta_k
                        \end{array}
                    \right]
                }. 
            \end{align}
        \end{subequations}
        Thus, we have $\sum_{k = 1}^{{K_1}}{(\alpha_k + \beta_k)\boldsymbol{V}_k} = \boldsymbol{0}$. 
        However, $\{\boldsymbol{V}_k, k = 1, \ldots, {K_1}\}$ are linearly independent with real coefficients by assumption, which leads to $\boldsymbol{\alpha} + \boldsymbol{\beta} = \boldsymbol{0}$. 
        Then, $\sum_{k = 1}^{{K_1}}(\alpha_k - \beta_k)\boldsymbol{v}_k = 2\sum_{k = 1}^{{K_1}}{\alpha_k\boldsymbol{v}_k} = 2\boldsymbol{W}\boldsymbol{\alpha} = \boldsymbol{0}$. 
        Reversely, it is easy to verify that~\eqref{complm-eq:induction-linearly-dependent-b1} holds if $\boldsymbol{\alpha} + \boldsymbol{\beta} = \boldsymbol{0}$ and $\boldsymbol{W}\boldsymbol{\alpha} = \boldsymbol{0}$. 
        Thus,~\eqref{complm-eq:induction-linearly-dependent-b1} holds if and only if 
        \begin{equation}\label{complm-eq:linear-dept-equiv-b1}
            \boldsymbol{F}\left[
                \begin{array}{c}
                    \boldsymbol{\alpha} \\
                    \boldsymbol{\beta}
                \end{array}
            \right] = \boldsymbol{0}_{(K_1 + L)\times 1}, 
            ~\boldsymbol{F} = \left[
                \begin{array}{cc}
                    \boldsymbol{I}_{K_1} & \boldsymbol{I}_{K_1} \\
                    \boldsymbol{W} & \boldsymbol{0}_{L\times K_1}
                \end{array}
            \right]. 
        \end{equation}
        Therefore, the maximum number of linearly independent matrices among $\{\boldsymbol{U}_k, \boldsymbol{Y}_k, k = 1, \ldots, {K_1}\}$ with real coefficients, denoted as $M_1$, should be equal to the number of linearly independent columns of matrix $\boldsymbol{F}\in\mathbb{R}^{(K_{1} + L)\times (2K_{1})}$, i.e., $M_{1} = \text{rank}(\boldsymbol{F})$, which can be easily verified to be $K_{1} + L$. 
        Since ${K_1} = \mathcal{D}_{L}^{(1)} = (L^2 - L) / 2 + 1$, we have $M_1 = (L^2 + L) / 2 + 1 = \mathcal{D}_{L + 1}^{(1)}$. 
        Thus, there exists $\mathcal{D}_{L + 1}^{(1)}$ linearly independent matrices with real coefficients in the set $S_V^1(L + 1)$, which completes the proof of $\text{dim}(\mathcal{S}_{X}^{1}(N)) \ge \mathcal{D}_{N}^{(1)}$.

        For $b \ge 2$, we have $\Phi_2 = \{\pm 1, \pm j\}\subseteq\Phi_b$ and $\mathcal{D}_{N}^{(b)} = \mathcal{D}_{N}^{(2)} = N^2 - N + 1$. 
        Therefore, it suffices to check the case of $b = 2$ only, i.e., showing that there always exists $\mathcal{D}_{N}^{(2)}$ matrices in $S_V^2(N)$ that are linearly independent with real coefficients. 
        Then, for any $b\ge 2$, the same matrices can be found in $\mathcal{S}_V^{b}(N)$ such that they are linearly independent with real coefficients. 

        For $b = 2$, it is easy to verify the cases where $N \le 3$. 
        Assume that for $N = L\ge 3$, there exists $K_2 = \mathcal{D}_{L}^{(2)} = L^2 - L + 1$ matrices in $S_V^2(L)$ that are linearly independent with real coefficients, which are denoted as $\boldsymbol{V}_k = \boldsymbol{v}_k\boldsymbol{v}_k^H$, $k = 1, \ldots, K_2$, with $\boldsymbol{v}_k\in{\{\pm 1, \pm j\}}^L$, $\forall k$. 
        Similar to the case of $b = 1$, we implicitly construct $\mathcal{D}_{L + 1}^{(2)}$ linearly independent matrices in $S_V^2(L + 1)$ to prove the case of $N = L + 1$. 
        Specifically, define $\boldsymbol{d}_k = (1 - j)\boldsymbol{v}_k\in\{1+j, 1-j, -1+j, -1-j\}^L$, $k = 1, \ldots, K$, then we have $\boldsymbol{V}_k = \boldsymbol{d}_k\boldsymbol{d}_k^H / 2, \forall k$. 
        Note that the real and imaginary parts of vector $\boldsymbol{d}_k$ can be considered as two binary variables independent of each other. 
        Thus, it can be decomposed into two parts, i.e., $\boldsymbol{d}_k = \boldsymbol{d}_{k, r} + j\boldsymbol{d}_{k, i}$, where $\boldsymbol{d}_{k, r}, \boldsymbol{d}_{k, i}\in\{\pm 1\}^L = \Phi_1^L$. 
        Then, define vector $\boldsymbol{q}_k$ by concatenating $\boldsymbol{d}_{k, r}$ and $\boldsymbol{d}_{k, i}$ together, i.e., 
        \begin{equation}\label{complm-def:qvec-b2}
            \boldsymbol{q}_k = \left[
                \begin{array}{c}
                    \boldsymbol{d}_{k, r} \\
                    \boldsymbol{d}_{k, i}
                \end{array}
            \right]\in\{\pm 1\}^{2L}, ~k = 1, \ldots, K_2, 
        \end{equation}
        and matrix $\boldsymbol{Q} = [\boldsymbol{q}_1, \ldots, \boldsymbol{q}_{K_2}]\in\mathbb{R}^{2L\times {K_2}}$. 
        Let $R = 2L - \text{rank}(\boldsymbol{Q}) \ge 0$. 
        If $R > 0$, i.e., $\text{rank}(\boldsymbol{Q}) < 2L$, there exists vectors $\boldsymbol{q}_{{K_2} + 1}, \ldots, \boldsymbol{q}_{{K_2} + R}\in\{\pm 1\}^{2L}$ such that $\tilde{\boldsymbol{Q}} = [\boldsymbol{Q}, \boldsymbol{q}_{{K_2} + 1}, \ldots, \boldsymbol{q}_{{K_2} + R}]\in\mathbb{R}^{2L\times({K_2} + R)}$ has full row rank $2L$. 
        Correspondingly, for $k = {K_2} + 1, \ldots, {K_2} + R$, we decompose $\boldsymbol{q}_{k}$ as $\boldsymbol{q}_{k} = [\boldsymbol{d}_{k, r}^T, \boldsymbol{d}_{k, i}^T]^T$ with $\boldsymbol{d}_{k, r}, \boldsymbol{d}_{k, i}\in\mathbb{R}^{L\times 1}$ and define $\boldsymbol{d}_k = \boldsymbol{d}_{k, r} + j\boldsymbol{d}_{k, i}$ and $\boldsymbol{v}_k = \boldsymbol{d}_k / (1 - j)\in\Phi_2^{L}$. 
        If $R = 0$, however, let $\tilde{\boldsymbol{Q}} = \boldsymbol{Q}$. 
        
        Next, for $N = L + 1$, consider $(L + 1)$-dimensional vectors
        \begin{equation}\label{complm-def:induction-construction-b2}
            \boldsymbol{u}_{k} = \left[
                \begin{array}{c}
                    \boldsymbol{v}_k \\
                    1
                \end{array}
            \right], 
            ~\boldsymbol{y}_{k} = \left[
                \begin{array}{c}
                    \boldsymbol{v}_k \\
                    -1
                \end{array}
            \right], 
            ~k = 1, \ldots, {K_2} + R, 
        \end{equation}
        and matrices $\boldsymbol{U}_k = \boldsymbol{u}_k\boldsymbol{u}_k^H$ and $\boldsymbol{Y}_k = \boldsymbol{y}_k\boldsymbol{y}_k^H$, $\forall k$. 
        Matrices $\boldsymbol{U}_k$ and $\boldsymbol{Y}_k$ are included in set $S_V^2(L + 1)$ and there are $2({K_2} + R)$ matrices in total. 
        Since $2({K_2} + R) \ge 2{K_2} > \mathcal{D}_{L + 1}^{(2)}$ for $L\ge 3$, they must be linearly dependent with real coefficients according to the first part of the proof. 
        Thus, there exists real vectors $\boldsymbol{\alpha} = [\alpha_1, \ldots, \alpha_{{K_2} + R}]^T, \boldsymbol{\beta} = [\beta_1, \ldots, \beta_{{K_2} + R}]^T\in\mathbb{R}^{({K_2} + R)\times 1}$ such that 
        \begin{subequations}\label{eq:induction-linearly-dependent-b2}
            \begin{align}
                \boldsymbol{0}_{{(L + 1)}\times {(L + 1)}} & = \sum_{k = 1}^{{K_2} + R}{\alpha_k\boldsymbol{U}_k + \beta_k\boldsymbol{Y}_k} \\
                & = \sum_{k = 1}^{{K_2} + R}{\left[
                    \begin{array}{cc}
                        (\alpha_k + \beta_k)\boldsymbol{V}_k & (\alpha_k - \beta_k)\boldsymbol{v}_k \\
                        (\alpha_k - \beta_k)\boldsymbol{v}_k^H & \alpha_k + \beta_k
                    \end{array}
                \right]}. 
            \end{align}
        \end{subequations}
        Define $\mathcal{G}_{s}$ as the set of all vectors $\boldsymbol{s}\in\mathbb{R}^{(K_2 + R)\times 1}$ such that $\sum_{k = 1}^{{K_2} + R}{s_k\boldsymbol{V}_k} = \boldsymbol{0}$ holds, then $\boldsymbol{\alpha} + \boldsymbol{\beta}\in\mathcal{G}_{s}$. 
        Obviously, $\mathcal{G}_s$ is a linear space whose dimension is bounded by $\text{dim}(\mathcal{G}_s)\le {K_2} + R - {K_2} = R$ because $\boldsymbol{V}_1, \ldots, \boldsymbol{V}_{K_2}$ are linearly independent with real coefficients. 
        Meanwhile, by defining $\boldsymbol{z} = \boldsymbol{\alpha} - \boldsymbol{\beta}$, we have $\sum_{k = 1}^{{K_2} + R}{z_k\boldsymbol{v}_k} = \boldsymbol{0}_{L\times 1}$ and thus $\sum_{k = 1}^{{K_2} + R}{z_k\boldsymbol{d}_k} = (1 - j)\sum_{{K_2} = 1}^{K_2 + R}{z_k\boldsymbol{v}_k} = \boldsymbol{0}_{L\times 1}$. 
        As $\boldsymbol{z}$ is a real vector, we have 
        \begin{equation}\label{complm-eq:linear-dept-equiv-b2}
            \tilde{\boldsymbol{Q}}\boldsymbol{z} = \sum_{k = 1}^{{K_2} + R}{z_k\left[
                \begin{array}{c}
                    \boldsymbol{d}_{k, r} \\
                    \boldsymbol{d}_{k, i}
                \end{array}
            \right]} = \sum_{k = 1}^{{K_2} + R}\left[
                \begin{array}{c}
                    {z_k\text{Re}(\boldsymbol{d}_{k})} \\
                    {z_k\text{Im}(\boldsymbol{d}_{k})}
                \end{array}
            \right] = \boldsymbol{0}_{2L\times 1}, 
        \end{equation}
        which means $\boldsymbol{z}$ is in the null space of $\tilde{\boldsymbol{Q}}$, i.e., $\mathcal{N}(\tilde{\boldsymbol{Q}})$. 
        The dimension of $\mathcal{N}(\tilde{\boldsymbol{Q}})$ is ${K_2} + R - 2L$, because $\tilde{\boldsymbol{Q}}\in\mathbb{R}^{2L\times ({K_2} + R)}$ has full row rank $2L$ by definition. 
        Reversely, one can easily verify that equation~\eqref{eq:induction-linearly-dependent-b2} is true as long as $\boldsymbol{\alpha} + \boldsymbol{\beta}\in{\mathcal{G}_s}$ and $\boldsymbol{\alpha} - \boldsymbol{\beta}\in\mathcal{N}(\tilde{\boldsymbol{Q}})$. 
        Therefore, $\sum_{k = 1}^{{K_2} + R}{\alpha_k\boldsymbol{U}_k + \beta_k\boldsymbol{Y}_k} = \boldsymbol{0}_{{K_2}\times{K_2}}$ if and only if $\boldsymbol{\alpha} + \boldsymbol{\beta}\in\mathcal{G}_s$ and $\boldsymbol{\alpha} - \boldsymbol{\beta}\in\mathcal{N}(\tilde{\boldsymbol{Q}})$, i.e., 
        \begin{equation}\label{complm-eq:linear-dept-coeff-space}
            \begin{aligned}
                \left[
                    \begin{array}{c}
                        \boldsymbol{\alpha} + \boldsymbol{\beta} \\
                        \boldsymbol{\alpha} - \boldsymbol{\beta}
                    \end{array}
                \right] \in\mathcal{G} & = \bigg\{
                    \boldsymbol{x}\in\mathbb{R}^{2({K_2} + R)\times 1}\bigg|
                        \boldsymbol{x} = \left[
                            \begin{array}{c}
                                \boldsymbol{s} \\
                                \boldsymbol{z}
                            \end{array}
                        \right], \\
                        & ~~~~~~~~~~~~ \boldsymbol{s}\in\mathcal{G}_s, \ \boldsymbol{z}\in\mathcal{N}(\tilde{\boldsymbol{Q}})
                \bigg\}. 
            \end{aligned}
        \end{equation}
        where $\mathcal{G}$ is a linear space with dimension $\text{dim}(\mathcal{G}) = \text{dim}(\mathcal{G}_s) + \text{dim}(\mathcal{N}(\tilde{\boldsymbol{Q}})) \le K_2 + 2R - 2L$. 
        Note that 
        \begin{equation}\label{complm-eq:linear-dept-coeff-transform-b2}
            \left[
                \begin{array}{c}
                    \boldsymbol{\alpha} \\
                    \boldsymbol{\beta}
                \end{array}
            \right] = \boldsymbol{F}_s\left[
                \begin{array}{c}
                    \boldsymbol{\alpha} + \boldsymbol{\beta} \\
                    \boldsymbol{\alpha} - \boldsymbol{\beta}
                \end{array}
            \right], 
            ~\boldsymbol{F}_s = \frac{1}{2}\left[
                \begin{array}{cc}
                    \boldsymbol{I} & \boldsymbol{I} \\
                    \boldsymbol{I} & -\boldsymbol{I}
                \end{array}
            \right], 
        \end{equation}
        where the transformation matrix $\boldsymbol{F}_s$ is invertible, so the dimension of the solution space of $[\boldsymbol{\alpha}^T, \boldsymbol{\beta}^T]^T$ for~\eqref{eq:induction-linearly-dependent-b2} equals to the dimension of $\mathcal{G}$. 
        Thus, the maximum number $M_2$ of linearly independent matrices among $\{\boldsymbol{U}_k, \boldsymbol{Y}_k, k = 1, \ldots, {K_2} + R\}$ with real coefficients can be obtained as
        \begin{subequations}\label{complm-eq:dim-construct-multiary}
            \begin{align}
                M_2 & = 2({K_2} + R) - \text{dim}(\mathcal{G}) \\
                & \ge 2({K_2} + R) - ({K_2} + 2R - 2L) \\
                & = {K_2} + 2L. 
            \end{align}
        \end{subequations}
        As ${K_2} = \mathcal{D}_{L}^{(2)} = L^2 - L + 1$, we have $M_2 \ge {K_2} + 2L = L^2 + L + 1 = \mathcal{D}_{L + 1}^{(2)}$. 
        Therefore, there exists at least $\mathcal{D}_{L + 1}^{(2)}$ linearly independent matrices in the set $S_V^2(L + 1)$, which completes the proof of $\text{dim}(\mathcal{S}_X^{b}(N))\ge\mathcal{D}_{N}^{(b)}$ for the case of $b = 2$, and thus the case of $b\ge 2$.  

        Combining two parts of the proof, i.e., $\text{dim}(\mathcal{S}_X^{b}(N))\le\mathcal{D}_{N}^{(b)}$ and $\text{dim}(\mathcal{S}_X^{b}(N))\ge\mathcal{D}_{N}^{(b)}$, we conclude that $\text{dim}(\mathcal{S}_X^{b}(N)) = \mathcal{D}_{N}^{(b)}$, as stated in Appendix~\ref{appendix:lemma-proof-dim-deff}. 
        Additionally, for $b\ge 2$, it has been shown that $\mathcal{S}_X^{b}(N)$ is a subspace of the space spanned by $\{\boldsymbol{B}_1; \boldsymbol{E}_{nl}^{(R)}, \boldsymbol{E}_{nl}^{(I)}, 1\le n < l \le N\}$ with real coefficients, whose dimension is $N^2 - N + 1$, which is the same as the dimension of $\mathcal{S}_X^{b}(N)$ given above. 
        Hence, $\mathcal{S}_X^{b}(N)$ equals to this spanned space and $\{\boldsymbol{B}_1; \boldsymbol{E}_{nl}^{(R)}, \boldsymbol{E}_{nl}^{(I)}, 1\le n < l \le N\}$ form an orthonormal basis of $\mathcal{S}_X^{b}(N)$ for $b\ge 2$. 
        Similarly, an orthonormal basis of $\mathcal{S}_X^{b}(N)$ for $b = 1$ is given by $\{\boldsymbol{B}_1; \boldsymbol{E}_{nl}^{(R)}, 1\le n < l \le N\}$.

\section{Complementary Proof of Proposition~\ref{prop:cov-est-existence-uniqueness} in Appendix~\ref{appendix:prop-proof-uniqueness}}\label{complm-sec:prop1-proof}
        In this section, the uniqueness of the solution for problem~\eqref{prob:cov-est-find-origin} demonstrated in Proposition~\ref{prop:cov-est-existence-uniqueness} is proved in details. 
        First we consider the case of $b\ge 2$ and then $b = 1$. 

        For $b\ge 2$, suppose there exists a matrix $\hat{\boldsymbol{H}}\neq\bar{\boldsymbol{H}}$ that is also a solution to Problem~\eqref{prob:cov-est-find-origin}. 
        Then, $\hat{\boldsymbol{H}}$ is a rank-one hermitian matrix and we have $\text{tr}(\hat{\boldsymbol{H}}\boldsymbol{V}_t) = p_t = \text{tr}(\bar{\boldsymbol{H}}\boldsymbol{V}_t)$, $t = 1, \ldots, T_{p}$. 
        By defining $\hat{\boldsymbol{H}} = \hat{\boldsymbol{h}}\hat{\boldsymbol{h}}^H$ and $\mathcal{\boldsymbol{E}} = \bar{\boldsymbol{H}} - \hat{\boldsymbol{H}}\in\mathbb{C}^{N\times N}$, we have $\text{rank}(\mathcal{\boldsymbol{E}})\le 2$ and $\text{tr}(\mathcal{\boldsymbol{E}}\boldsymbol{V}_t) = 0$ for $\forall t$. 
        If $D_V = \mathcal{D}_{N}^{(b)} = \text{dim}(\mathcal{S}_X^{b}(N))$, the matrix $\mathcal{\boldsymbol{E}}$ is perpendicular to the space $\mathcal{S}_X^{b}(N)$, which results in $\text{tr}(\mathcal{\boldsymbol{E}}\boldsymbol{E}_{nl}^{(R)}) = \text{tr}(\mathcal{\boldsymbol{E}}\boldsymbol{E}_{nl}^{(I)}) = 0$, $1\le n < l\le N$, and also $\text{tr}(\mathcal{\boldsymbol{E}}\boldsymbol{B}_{1}) = \text{tr}(\mathcal{\boldsymbol{E}})/\sqrt{N} = 0$. 
        According to the definitions of matrices $\boldsymbol{E}_{nl}^{(R)}, \boldsymbol{E}_{nl}^{(I)}, 1\le n < l\le N$, all non-diagonal entries of $\mathcal{\boldsymbol{E}}$ are zero and thus we have 
        \begin{equation}\label{eq:sol-err-multiary}
            \begin{aligned}
                \mathcal{\boldsymbol{E}} & = 
                    \left[
                        \begin{array}{cccc}
                            \mathcal{E}_{11} & 0 & \ldots & 0 \\
                            0 & \mathcal{E}_{22} & \ldots & 0 \\
                            \vdots & \vdots & \ddots & 0 \\
                            0 & 0 & 0 & \mathcal{E}_{NN}
                        \end{array}
                    \right], 
            \end{aligned}
        \end{equation}
        where $\mathcal{E}_{nn} = |\bar{h}_n|^2 - |\hat{h}_n|^2$, $\forall n$. 
        Due to $\text{rank}(\mathcal{\boldsymbol{E}})\le 2$, we assume $\mathcal{E}_{nn} = 0$ for $n\ge 3$ without loss of generality, and thus $|\bar{h}_n| = |\hat{h}_n|$ for $n\ge 3$. 
        By assuming $N\ge 3$, we have $\mathcal{E}_{n1} = \bar{h}_n\bar{h}_1^* - \hat{h}_n\hat{h}_1^* = 0$, $n = 3, \ldots, N$, leading to $|\bar{h}_n\bar{h}_1| = |\hat{h}_n\hat{h}_1|$ for $\forall n\ge 3$. 
        Since $\bar{h}_n \neq 0$ holds with probability $1$ for any $n$, we find that $|\bar{h}_1| = |\hat{h}_1|$. 
        Similarly, $|\bar{h}_2| = |\hat{h}_2|$ can be obtained. 
        Then, $\mathcal{E}_{11} = \mathcal{E}_{22} = 0$ and thus $\mathcal{\boldsymbol{E}} = \boldsymbol{0}$, which indicates $\bar{\boldsymbol{H}} = \hat{\boldsymbol{H}}$. 
        Hence, $\bar{\boldsymbol{H}}$ is the unique solution to Problem~\eqref{prob:cov-est-find-origin} with probability $1$ for $b\ge 2$ when $N\ge 3$ and $D_V = \mathcal{D}_{N}^{(b)}$. 

        For $b = 1$, the IRS reflection vectors are all real vectors and it is easy to verify that $\bar{\boldsymbol{H}}$ and $\bar{\boldsymbol{H}}^*$ are both solutions for problem~\eqref{prob:cov-est-find-origin}. 
        Suppose there exists a matrix $\hat{\boldsymbol{H}}$ that does not equal to $\bar{\boldsymbol{H}}$ or $\bar{\boldsymbol{H}}^*$ but is also a solution for problem~\eqref{prob:cov-est-find-origin}, and its real part is denoted as $\hat{\boldsymbol{H}}_r = \text{Re}(\hat{\boldsymbol{H}})$. 
        Note that the ranks of both $\bar{\boldsymbol{H}}_r = \text{Re}(\bar{\boldsymbol{H}})$ and $\hat{\boldsymbol{H}}_r$ are no more than two and $\text{tr}(\bar{\boldsymbol{H}}_r\boldsymbol{V}_t) = \text{tr}(\bar{\boldsymbol{H}}\boldsymbol{V}_t) = p_t = \text{tr}(\hat{\boldsymbol{H}}\boldsymbol{V}_t) = \text{tr}(\hat{\boldsymbol{H}}_r\boldsymbol{V}_t)$. 
        Similar to the case where $b\ge 2$, we define $\mathcal{\boldsymbol{E}}_r = \bar{\boldsymbol{H}}_r - \hat{\boldsymbol{H}}_r\in\mathbb{R}^{N\times N}$, which is a real symmetric matrix satisfying $\text{rank}(\mathcal{\boldsymbol{E}}_r)\le 4$ and $\text{tr}(\mathcal{\boldsymbol{E}}_r\boldsymbol{V}_t) = 0$, $t = 1, \ldots, T_{p}$. 
        When $D_V = \mathcal{D}_N^{(1)} = \text{dim}(\mathcal{S}_X^{1}(N))$, the matrix $\mathcal{\boldsymbol{E}}_r$ is perpendicular to the space $\mathcal{S}_X^{b}(N)$ and thus $\text{tr}(\mathcal{\boldsymbol{E}}_r\boldsymbol{E}_{nl}^{(R)}) = 0$, $1\le n < l\le N$. 
        As a result, $\mathcal{\boldsymbol{E}}_r$ can be written as 
        \begin{equation}\label{eq:sol-err-binary}
            \begin{aligned}
                \mathcal{\boldsymbol{E}}_r & = 
                    \left[
                        \begin{array}{cccc}
                            \mathcal{E}_{r, 11} & 0 & \ldots & 0 \\
                            0 & \mathcal{E}_{r, 22} & \ldots & 0 \\
                            \vdots & \vdots & \ddots & 0 \\
                            0 & 0 & 0 & \mathcal{E}_{r, NN}
                        \end{array}
                    \right], 
            \end{aligned}
        \end{equation}
        where $\mathcal{E}_{r, nn}$ is the $n$-th diagonal element of $\mathcal{\boldsymbol{E}}_r$. 
        Without loss of generality, we have $\mathcal{E}_{r, nn} = 0$ for $n\ge 5$ due to $\text{rank}(\mathcal{\boldsymbol{E}}_r)\le 4$. 
        Since $\bar{\boldsymbol{H}} = \bar{\boldsymbol{h}}\bar{\boldsymbol{h}}^H$, it is easy to verify that $\bar{\boldsymbol{H}}_r = \bar{\boldsymbol{h}}_{r}\bar{\boldsymbol{h}}_{r}^T + \bar{\boldsymbol{h}}_{m}\bar{\boldsymbol{h}}_{m}^T$, where $\bar{\boldsymbol{h}}_{r} = \text{Re}(\bar{\boldsymbol{h}})$ and $\bar{\boldsymbol{h}}_{m} = \text{Im}(\bar{\boldsymbol{h}})$. 
        Similarly, by defining $\hat{\boldsymbol{H}} = \hat{\boldsymbol{h}}\hat{\boldsymbol{h}}^H$, we have $\hat{\boldsymbol{H}}_r = \hat{\boldsymbol{h}}_{r}\hat{\boldsymbol{h}}_{r}^T + \hat{\boldsymbol{h}}_{m}\hat{\boldsymbol{h}}_{m}^T$, where $\hat{\boldsymbol{h}}_{r} = \text{Re}(\hat{\boldsymbol{h}})$ and $\hat{\boldsymbol{h}}_{m} = \text{Im}(\hat{\boldsymbol{h}})$. 
        Consider real matrices
        \begin{subequations}\label{def:csi-seg-binary}
            \begin{align}
                \bar{\boldsymbol{A}} & = 
                    \left[
                        \begin{array}{cc}
                            \bar{h}_{r, 1} & \bar{h}_{m, 1} \\
                            \vdots    & \vdots    \\
                            \bar{h}_{r, 4} & \bar{h}_{m, 4}
                        \end{array}
                    \right], 
                & \hat{\boldsymbol{A}} & = 
                \left[
                    \begin{array}{cc}
                        \hat{h}_{r, 1} & \hat{h}_{m, 1} \\
                        \vdots    & \vdots    \\
                        \hat{h}_{r, 4} & \hat{h}_{m, 4}
                    \end{array}
                \right], \\
                \bar{\boldsymbol{B}} & = 
                    \left[
                        \begin{array}{cc}
                            \bar{h}_{r, 5} & \bar{h}_{m, 5} \\
                            \vdots    & \vdots    \\
                            \bar{h}_{r, N} & \bar{h}_{m, N}
                        \end{array}
                    \right], 
                & \hat{\boldsymbol{B}} & = 
                \left[
                    \begin{array}{cc}
                        \hat{h}_{r, 5} & \hat{h}_{m, 5} \\
                        \vdots    & \vdots    \\
                        \hat{h}_{r, N} & \hat{h}_{m, N}
                    \end{array}
                \right], 
            \end{align}
        \end{subequations}
        where $\bar{h}_{r, n}$, $\bar{h}_{m, n}$, $\hat{h}_{r, n}$ and $\hat{h}_{m, n}$ are the $n$-th elements of vectors $\bar{\boldsymbol{h}}_{r}$, $\bar{\boldsymbol{h}}_{m}$, $\hat{\boldsymbol{h}}_{r}$ and $\hat{\boldsymbol{h}}_{m}$, respectively. 
        Then, the matrix $\mathcal{\boldsymbol{E}}_r$ can be equivalently written as 
        \begin{subequations}\label{eq:err-mat-block-binary}
            \allowdisplaybreaks
            \begin{align}
                \mathcal{\boldsymbol{E}}_r & = \bar{\boldsymbol{H}}_r - \hat{\boldsymbol{H}}_r \\
                & = 
                \left[
                    \begin{array}{c}
                        \bar{\boldsymbol{A}} \\
                        \bar{\boldsymbol{B}}
                    \end{array}
                \right]\left[
                    \bar{\boldsymbol{A}}^T, \bar{\boldsymbol{B}}^T
                \right] - \left[
                    \begin{array}{c}
                        \hat{\boldsymbol{A}} \\
                        \hat{\boldsymbol{B}}
                    \end{array}
                \right]\left[
                    \hat{\boldsymbol{A}}^T, \hat{\boldsymbol{B}}^T
                \right] \\
                & = \left[
                    \begin{array}{cc}
                        \bar{\boldsymbol{A}}\bar{\boldsymbol{A}}^T - \hat{\boldsymbol{A}}\hat{\boldsymbol{A}}^T & \bar{\boldsymbol{A}}\bar{\boldsymbol{B}}^T - \hat{\boldsymbol{A}}\hat{\boldsymbol{B}}^T \\
                        \bar{\boldsymbol{B}}\bar{\boldsymbol{A}}^T - \hat{\boldsymbol{B}}\hat{\boldsymbol{A}}^T & \bar{\boldsymbol{B}}\bar{\boldsymbol{B}}^T - \hat{\boldsymbol{B}}\hat{\boldsymbol{B}}^T
                    \end{array}
                \right], \label{eq:err-block-binary}
            \end{align}
        \end{subequations}
        where $\bar{\boldsymbol{A}}\bar{\boldsymbol{A}}^T - \hat{\boldsymbol{A}}\hat{\boldsymbol{A}}^T$ is a $4\times 4$ diagonal matrix and all other three blocks in~\eqref{eq:err-block-binary} are zeros, according to equation~\eqref{eq:sol-err-binary} and the analysis above. 
        Thus, we have $\bar{\boldsymbol{A}}\bar{\boldsymbol{B}}^T = \hat{\boldsymbol{A}}\hat{\boldsymbol{B}}^T$ and $\bar{\boldsymbol{B}}\bar{\boldsymbol{B}}^T = \hat{\boldsymbol{B}}\hat{\boldsymbol{B}}^T$. 
        Obviously, the space spanned by the columns of $\hat{\boldsymbol{B}}$ is the same as that of $\bar{\boldsymbol{B}}$, which means the columns of $\hat{\boldsymbol{B}}$ can be represented by $\bar{\boldsymbol{B}}$ with a coefficient matrix $\boldsymbol{Y}\in\mathbb{R}^{2\times 2}$, i.e., $\hat{\boldsymbol{B}} = \bar{\boldsymbol{B}}\boldsymbol{Y}$. 
        Thus, $\bar{\boldsymbol{B}}\bar{\boldsymbol{B}}^T = \hat{\boldsymbol{B}}\hat{\boldsymbol{B}}^T = \bar{\boldsymbol{B}}\boldsymbol{Y}\boldsymbol{Y}^T\bar{\boldsymbol{B}}^T$. 
        Since $\bar{\boldsymbol{h}}$ represents the equivalent channel from the BS to user, the matrix $\bar{\boldsymbol{B}}$ has full column rank $2$ with probability $1$ for $N\ge 6$, and thus $\bar{\boldsymbol{B}}^T\bar{\boldsymbol{B}}\in\mathbb{R}^{2\times 2}$ is invertible. 
        Hence, $\boldsymbol{Y}\boldsymbol{Y}^T$ can be solved as
        \begin{equation}
            \boldsymbol{Y}\boldsymbol{Y}^T = \left(\bar{\boldsymbol{B}}^T\bar{\boldsymbol{B}}\right)^{-1}\bar{\boldsymbol{B}}^T\left(\bar{\boldsymbol{B}}\bar{\boldsymbol{B}}^T\right)\bar{\boldsymbol{B}}\left(\bar{\boldsymbol{B}}^T\bar{\boldsymbol{B}}\right)^{-1} = \boldsymbol{I}, 
        \end{equation}
        which means that $\boldsymbol{Y}\in\mathbb{R}^{2\times 2}$ is an orthogonal matrix. 
        On the other hand, we have $\bar{\boldsymbol{A}}\bar{\boldsymbol{B}}^T = \hat{\boldsymbol{A}}\hat{\boldsymbol{B}}^T = \hat{\boldsymbol{A}}\boldsymbol{Y}^T\bar{\boldsymbol{B}}^T$. 
        By applying $\bar{\boldsymbol{B}}$ on the left to both sides, it can be simplified as $\bar{\boldsymbol{A}} = \hat{\boldsymbol{A}}\boldsymbol{Y}^T$, or equivalently, $\hat{\boldsymbol{A}} = \bar{\boldsymbol{A}}\boldsymbol{Y}$. 
        Then, $\bar{\boldsymbol{A}}\bar{\boldsymbol{A}}^T = \hat{\boldsymbol{A}}\boldsymbol{Y}^T\boldsymbol{Y}\hat{\boldsymbol{A}}^T = \hat{\boldsymbol{A}}\hat{\boldsymbol{A}}^T$ holds and $\mathcal{\boldsymbol{E}}_r = \boldsymbol{0}$ follows, which leads to $\bar{\boldsymbol{H}}_r = \hat{\boldsymbol{H}}_r$. 
        
        Meanwhile, it is worth noting that
        \begin{equation}
            \bar{\boldsymbol{h}} = \left[
                \bar{\boldsymbol{h}}_{r}, \bar{\boldsymbol{h}}_{m}
            \right]\left[
                \begin{array}{c}
                    1 \\
                    j
                \end{array}
            \right] = \left[
                \begin{array}{c}
                    \bar{\boldsymbol{A}} \\
                    \bar{\boldsymbol{B}}
                \end{array}
            \right]\left[
                \begin{array}{c}
                    1 \\
                    j
                \end{array}
            \right]. 
        \end{equation}
        Therefore, the matrix $\bar{\boldsymbol{H}}$ can be written as 
        \begin{equation}
            \bar{\boldsymbol{H}} = \bar{\boldsymbol{h}}\bar{\boldsymbol{h}}^H = \left[
                \begin{array}{c}
                    \bar{\boldsymbol{A}} \\
                    \bar{\boldsymbol{B}}
                \end{array}
            \right]\left[
                \begin{array}{cc}
                    1 & -j \\
                    j &  1
                \end{array}
            \right]\left[
                \bar{\boldsymbol{A}}^T, \bar{\boldsymbol{B}}^T
            \right]. 
        \end{equation}
        Similarly, $\hat{\boldsymbol{H}}$ can be equivalently written as
        \begin{subequations}
            \begin{align}
                \hat{\boldsymbol{H}} & = \left[
                    \begin{array}{c}
                        \hat{\boldsymbol{A}} \\
                        \hat{\boldsymbol{B}}
                    \end{array}
                \right]\left[
                    \begin{array}{cc}
                        1 & -j \\
                        j &  1
                    \end{array}
                \right]\left[
                        \hat{\boldsymbol{A}}^T, \hat{\boldsymbol{B}}^T
                \right] \\
                & = \left[
                    \begin{array}{c}
                        \bar{\boldsymbol{A}} \\
                        \bar{\boldsymbol{B}}
                    \end{array}
                \right]\boldsymbol{Y}\left[
                    \begin{array}{cc}
                        1 & -j \\
                        j &  1
                    \end{array}
                \right]\boldsymbol{Y}^T\left[
                    \bar{\boldsymbol{A}}^T, \bar{\boldsymbol{B}}^T
                \right]. 
            \end{align}
        \end{subequations}
        Define matrix $\boldsymbol{J}\in\mathbb{R}^{2\times 2}$ as 
        \begin{equation}
            \boldsymbol{J} = \boldsymbol{Y}\left[
                    \begin{array}{cc}
                        1 & -j \\
                        j &  1
                    \end{array}
                \right]\boldsymbol{Y}^T. 
        \end{equation}
        Let $\boldsymbol{Y} = [\boldsymbol{y}_1, \boldsymbol{y}_2]$ with $\boldsymbol{y}_1 = [y_{11}, y_{12}]^T$ and $\boldsymbol{y}_2 = [y_{21}, y_{22}]^T$. 
        Then, we have $\boldsymbol{y}_1\boldsymbol{y}_1^T + \boldsymbol{y}_2\boldsymbol{y}_2^T = \boldsymbol{Y}\boldsymbol{Y}^T = \boldsymbol{I}$ and
        \begin{subequations}
            \begin{align}
                \boldsymbol{J} & = \left(\boldsymbol{y}_1 + j\boldsymbol{y}_2\right)\left(\boldsymbol{y}_1^T - j\boldsymbol{y}_2^T\right) \\
                & = \left(\boldsymbol{y}_1\boldsymbol{y}_1^T + \boldsymbol{y}_2\boldsymbol{y}_2^T\right) + j\left(\boldsymbol{y}_2\boldsymbol{y}_1^T - \boldsymbol{y}_1\boldsymbol{y}_2^T\right) \\
                & = \boldsymbol{I} + j\left[
                    \begin{array}{cc}
                        0 & -\eta \\
                        \eta &  0
                    \end{array}
                \right], 
            \end{align}
        \end{subequations}
        where $\eta = y_{11}y_{22} - y_{12}y_{21} = \text{det}(\boldsymbol{Y}) = \pm 1$ because $\boldsymbol{Y}$ is an orthogonal matrix. 
        If $\text{det}(\boldsymbol{Y}) = 1$, we have 
        \begin{equation}
            \boldsymbol{J} = \left[
                \begin{array}{cc}
                    1 & -j \\
                    j &  1
                \end{array}
            \right], \ \hat{\boldsymbol{H}} = \left[
                    \begin{array}{c}
                        \bar{\boldsymbol{A}} \\
                        \bar{\boldsymbol{B}}
                    \end{array}
                \right]\boldsymbol{J}\left[
                    \bar{\boldsymbol{A}}^T, \bar{\boldsymbol{B}}^T
                \right] = \bar{\boldsymbol{H}}. 
        \end{equation}
        If $\text{det}(\boldsymbol{Y}) = -1$, however, $\boldsymbol{J}$ and $\hat{\boldsymbol{H}}$ are given as 
        \begin{equation}
            \boldsymbol{J} = \left[
                \begin{array}{cc}
                    1 & j \\
                    -j &  1
                \end{array}
            \right], \ \hat{\boldsymbol{H}} = \left[
                    \begin{array}{c}
                        \bar{\boldsymbol{A}} \\
                        \bar{\boldsymbol{B}}
                    \end{array}
                \right]\boldsymbol{J}\left[
                    \bar{\boldsymbol{A}}^T, \bar{\boldsymbol{B}}^T
                \right] = \bar{\boldsymbol{H}}^*. 
        \end{equation}
        Therefore, $\bar{\boldsymbol{H}}$ and $\bar{\boldsymbol{H}}^*$ are the only two solutions to Problem~\eqref{prob:cov-est-find-origin} with probability $1$ for $b = 1$ when $N\ge 6$ and $D_V = \mathcal{D}_{N}^{(1)}$. 
        Hereto, the proof for the existence and uniqueness of the solutions to Problem~\eqref{prob:cov-est-find-origin} claimed in Proposition~\ref{prop:cov-est-existence-uniqueness} have been completed.


\bibliographystyle{IEEEtran} 
\bibliography{IEEEabrv, reference}

\end{document}